\newcommand{\Id}{\mathbbmss{1}}
\newcommand{\rmh}{\textnormal{h}}
\newcommand{\rmk}{\textnormal{k}}
\newcommand{\rmp}{\textnormal{p}}
\newcommand{\rmq}{\textnormal{q}}
\newcommand{\op}[1]{\operatorname{#1}}
\DeclareMathOperator{\GL}{GL}
\DeclareMathOperator{\Hom}{Hom}
\newcommand{\catname}[1]{\textnormal{\texttt{#1}}}
\font\black=cmbx10 \font\sblack=cmbx7 \font\ssblack=cmbx5 \font\blackital=cmmib10 \skewchar\blackital='177
\font\sblackital=cmmib7 \skewchar\sblackital='177 \font\ssblackital=cmmib5 \skewchar\ssblackital='177
\font\sanss=cmss10 \font\ssanss=cmss8 
\font\sssanss=cmss8 scaled 600 \font\blackboard=msbm10 \font\sblackboard=msbm7 \font\ssblackboard=msbm5
\font\caligr=eusm10 \font\scaligr=eusm7 \font\sscaligr=eusm5  \font\fraktur=eufm10
\font\sfraktur=eufm7 \font\ssfraktur=eufm5 
\font\bsymb=cmsy10 scaled\magstep2
\def\all#1{\setbox0=\hbox{\lower1.5pt\hbox{\bsymb
 \char"38}}\setbox1=\hbox{$_{#1}$} \box0\lower2pt\box1\;}
\def\exi#1{\setbox0=\hbox{\lower1.5pt\hbox{\bsymb \char"39}}
 \setbox1=\hbox{$_{#1}$} \box0\lower2pt\box1\;}
\def\tx#1{{\fam0\relax#1}}
\def\hpb#1{\setbox0=\hbox{${#1}$}
 \copy0 \kern-\wd0 \kern.2pt \box0}
\def\vpb#1{\setbox0=\hbox{${#1}$}
 \copy0 \kern-\wd0 \raise.08pt \box0}
\def\pmb#1{\setbox0\hbox{${#1}$} \copy0 \kern-\wd0 \kern.2pt \box0}
\def\pmbb#1{\setbox0\hbox{${#1}$} \copy0 \kern-\wd0
 \kern.2pt \copy0 \kern-\wd0 \kern.2pt \box0}
\def\pmbbb#1{\setbox0\hbox{${#1}$} \copy0 \kern-\wd0
 \kern.2pt \copy0 \kern-\wd0 \kern.2pt
 \copy0 \kern-\wd0 \kern.2pt \box0}
\def\pmxb#1{\setbox0\hbox{${#1}$} \copy0 \kern-\wd0
 \kern.2pt \copy0 \kern-\wd0 \kern.2pt
 \copy0 \kern-\wd0 \kern.2pt \copy0 \kern-\wd0 \kern.2pt \box0}
\def\pmxbb#1{\setbox0\hbox{${#1}$} \copy0 \kern-\wd0 \kern.2pt
 \copy0 \kern-\wd0 \kern.2pt
 \copy0 \kern-\wd0 \kern.2pt \copy0 \kern-\wd0 \kern.2pt
 \copy0 \kern-\wd0 \kern.2pt \box0}
\mathchardef\za="710B 
\mathchardef\zb="710C 
\mathchardef\zg="710D 
\mathchardef\zd="710E 
\mathchardef\zve="710F 
\mathchardef\zz="7110 
\mathchardef\zh="7111 
\mathchardef\zvy="7112 
\mathchardef\zi="7113 
\mathchardef\zk="7114 
\mathchardef\zl="7115 
\mathchardef\zm="7116 
\mathchardef\zn="7117 
\mathchardef\zx="7118 
\mathchardef\zp="7119 
\mathchardef\zr="711A 
\mathchardef\zs="711B 
\mathchardef\zt="711C 
\mathchardef\zu="711D 
\mathchardef\zvf="711E 
\mathchardef\zq="711F 
\mathchardef\zc="7120 
\mathchardef\zw="7121 
\mathchardef\ze="7122 
\mathchardef\zy="7123 
\mathchardef\zf="7124 
\mathchardef\zvr="7125 
\mathchardef\zvs="7126 
\mathchardef\zf="7127 
\mathchardef\zG="7000 
\mathchardef\zD="7001 
\mathchardef\zY="7002 
\mathchardef\zL="7003 
\mathchardef\zX="7004 
\mathchardef\zP="7005 
\mathchardef\zS="7006 
\mathchardef\zU="7007 
\mathchardef\zF="7008 
\mathchardef\zW="700A 
\mathchardef\zC="7009 
\newcommand{\be}{\begin{equation}}
\newcommand{\ee}{\end{equation}}
\newcommand{\bea}{\begin{eqnarray}}
\newcommand{\eea}{\end{eqnarray}}
\def\*{{\textstyle *}}
\newcommand{\R}{{\mathbb R}}
\newcommand{\Z}{{\mathbb Z}}
\newcommand{\s}{{\textstyle *}}
\def\ul{\underline}
\def\la{\langle}
\def\xi{\tx{i}}
\def\cM{\cal M}
\def\s*{{\scriptstyle *}}
\def\cO{\mathcal{O}}
\def\cE{\mathcal{E}}
\def\cF{\mathcal{F}}
\def\cM{\mathcal{M}}
\def\cU{\mathcal{U}}
\def\ul{\underline}
\newcommand{\beas}{\begin{eqnarray*}}
\newcommand{\eeas}{\end{eqnarray*}}
\newcommand{\0}{\otimes}
\newcommand{\id}{\op{id}}
\newcommand{\z}{\Z_2^n}
\numberwithin{equation}{section}
\newtheorem{Theorem}{Theorem}[section]
\newtheorem{Proposition}[Theorem]{Proposition}
 { \theoremstyle{definition}
\newtheorem{Definition}[Theorem]{Definition}
\newtheorem{Example}[Theorem]{Example}
\newtheorem{Remark}[Theorem]{Remark} }
\begin{document}

\newcommand{\arXivNumber}{2011.01012}

\renewcommand{\PaperNumber}{060}

\FirstPageHeading

\ShortArticleName{Linear ${\mathbb Z}_2^n$-Manifolds and Linear Actions}

\ArticleName{Linear $\boldsymbol{{\mathbb Z}_2^n}$-Manifolds and Linear Actions}

\AuthorNameForHeading{A.J.~Bruce, E.~Ibargu\"engoytia and N.~Poncin}

 \Author{Andrew James BRUCE, Eduardo IBARGU\"ENGOYTIA and Norbert PONCIN}

\Address{Department of Mathematics, University of Luxembourg,\\ Maison du Nombre, 6, avenue de la Fonte, L-4364 Esch-sur-Alzette, Luxembourg}
\Email{\href{mailto:andrewjamesbruce@googlemail.com}{andrewjamesbruce@googlemail.com}, \href{mailto:eduardo.ibarguengoytia@uni.lu}{eduardo.ibarguengoytia@uni.lu}, \newline
\hspace*{14.0mm}\href{mailto:norbert.poncin@uni.lu}{norbert.poncin@uni.lu}}

\ArticleDates{Received November 05, 2020, in final form May 30, 2021; Published online June 16, 2021}

\Abstract{We establish the representability of the general linear ${\mathbb Z}_2^n$-group and use the restricted functor of points~-- whose test category is the category of ${\mathbb Z}_2^n$-manifolds over a~single topological point~-- to define its smooth linear actions on ${\mathbb Z}_2^n$-graded vector spaces and linear ${\mathbb Z}_2^n$-manifolds. Throughout the paper, particular emphasis is placed on the full faithfulness and target category of the restricted functor of points of a number of categories that we are using.}

\Keywords{supergeometry; ringed spaces; functors of points; linear group actions}

\Classification{58A50; 58C50; 14A22; 14L30; 13F25; 16L30; 17A70}

\section{Introduction}

In order to be able to deal with the technical details of vector bundles and related structures in~the category of $\z$-manifolds (for $n=1$ see~\cite{Balduzzi:2011}), we need some foundational results on $\z$-Lie groups and their smooth linear actions on linear $\z$-manifolds. However, the proofs of some folklore results, i.e., results that we tended to accept somewhat hands-waving, are often not at~all obvious in the $\z$-context. The present paper, beyond its supposed applications, intrinsic interest and the beauty of some of its developments, raises the question of the scientific value of~``results'' that are partially based on speculations.

Loosely speaking, $\Z_{2}^{n}$-manifolds ($\z=\Z_2^{\times n}$) are ``manifolds'' for which the structure sheaf has a $\Z_{2}^{n}$-grading and the commutation rules for the local coordinates comes from the standard scalar product (see~\cite{BruceGrabowski, Bruce:2019b, Bruce:2018, Bruce:2019,Covolo:2016,Covolo:2016a,Covolo:2016b, Covolo:2016c, Poncin:2016} for details). This is not just a trivial or straightforward generalization of the notion of a standard supermanifold, as one has to deal with formal coordinates that anticommute with other formal coordinates, but are themselves \emph{not} nilpotent. Due to the presence of formal variables that are not nilpotent, formal power series are used rather than polynomials. Recall that for standard supermanifolds all functions are polynomial in the Grassmann odd variables. The theory of $\Z_2^n$-geometry is currently being developed and many foundational questions remain. For completeness, we include Appendix~\ref{app:ZZnGeom} in which the foundations of $\Z_2^n$-geometry are given. In this paper, we examine the relation between $\Z_2^n$-graded vector spaces and linear $\Z_2^n$-manifolds, and then we use this to define linear actions of $\Z_2^n$-Lie groups.

In the literature on supergeometry, the symbol $\R^{p|q}$ has two distinct, but related meanings. First, we have the notion of a $\Z_2$-graded, or super, vector space with $p$ even and $q$ odd dimensions, i.e., the real vector space $\mathbf{R}^{p|q}=\R^p \bigoplus \R^q$. Secondly, we have the locally ringed space $\R^{p|q}=\big( \R^p, C^\infty_{\R^p}[\zx]\big)$, where $\zx^i$ ($i\in\{1, \dots, q\}$) are the generators of a Grassmann algebra. The difference can be highlighted by identifying the points of these objects. The $\Z_2$-graded vector space has as its underlying topological space $\R^{p+q}$ (we just forget the ``superstructure''), while for the locally ringed space the topological space is $\R^p$. There are several ways of showing that these two notions are deeply tied. In particular, the category of finite dimensional super vector spaces is equivalent to the category of ``linear supermanifolds'' (see~\cite{Bernstein:2013,Kostant:1977, Leites:1980, Leites:1991,Sanchez-Valenzuela:1988}).

In this paper, we will show that the categories of finite dimensional $\z$-graded vector spaces $\mathbf{V}$ and linear $\z$-manifolds $V$ are isomorphic. We~do this by explicitly constructing a ``manifoldification'' functor $\mathcal{M}$, which associates a linear $\z$-manifold to every finite dimensional $\Z_2^n$-graded vector space, and a ``vectorification'' functor, which is the inverse of the previous functor. It~turns out that working in a coordinate-independent way $(\mathbf{V}, V)$ is much more complex than working with canonical coordinates $\big(\mathbf{R}^{p|\ul q},\R^{p|\ul q}\big)$.

Throughout this article, a special focus is placed on functors of points. The functor of points has been used informally in physics as from the very beginning. It~is actually of importance in~contexts where there is no good notion of point as in super- and $\Z_2^n$-geometry and in algebraic geometry. For instance, homotopical algebraic geometry~\cite{TVI, TVII} and its generalisation that goes under the name of homotopical algebraic geometry over differential operators~\cite{BPP:KTR, BPP:HAC}, are completely based on the functor of points approach. In this paper, we are particularly interested in functors of $\zL$-points, i.e., functors of points from appropriate locally small categories $\tt C$ to a~functor category whose source is not the category ${\tt C}^{\op{op}}$ but the category ${\tt G}$ of $\z$-Grassmann algebras $\zL$. However, functors of points that are restricted to the very simple test category ${\tt G}$ are fully faithful only if we replace the target category of the functor category by a subcategory of the category of sets.

More precisely, closely related to the above isomorphism of supervector spaces and linear supermanifolds is the so-called ``even rules''. Loosely this means including extra odd parameters to render everything even and in doing so one removes copious sign factors (see for example~\cite[Section~1.7]{Deligne:1999}). We~will establish an analogue of the even rules in our higher graded setting which we will refer to as the ``zero degree rules'' (see Definition~\ref{def:Vfun}). To address this we will make extensive use of $\Z_2^n$-Grassmann algebras $\zL$, $\Lambda$-points and the Schwarz--Voronov embedding, which is a fully faithful functor of points $\mathcal{S}$ from $\z$-manifolds to a functor category with source~${\tt G}$ and the category of Fr\'echet manifolds (see for
example~\cite{Hamilton:82}) over commutative Fr\'echet algebras as target (see~\cite{Bruce:2019b}). We~show that the zero degree rules functor $\mathcal F$, understood as an assignment of a functor from ${\tt G}$ to the category of modules over commutative (Fr\'echet) algebras, given a (finite dimensional) $\Z_2^n$-graded vector space, is fully faithful (see Theorem~\ref{trm:FFullyFaith} and Proposition~\ref{FunPtsFinVecFF2}). The ``zero degree rules'' allow one to identity a finite dimensional $\Z_2^n$-graded vector space, considered as a~functor, with the functor of points of its ``manifoldification''. In other words, the composite $\mathcal{S}\circ\mathcal{M}$ and~$\mathcal{F}$ can be viewed as functors between the same categories and are naturally isomorphic. This identification is fundamental when describing linear group actions on $\Z_2^n$-graded vector spaces and linear $\Z_2^n$-manifolds.

Another important part of this work is the category of $\z$-Lie groups and its fully faithful functor of points valued in a functor category with ${\tt G}$ as source category and Fr\'echet Lie groups over commutative Fr\'echet algebras as target category. We~define the general linear $\Z_2^n$-group as a functor in this functor category and show that it is representable, i.e., is a genuine $\Z_2^n$-manifold (see Theorem~\ref{thm:GLRep}). This leads to interesting insights into the computation of the inverse of an invertible degree zero $\z$-graded square matrix of dimension $p|\ul q$ with entries in a~$\z$-commutative algebra. Furthermore, the approach using $\Lambda$-points and the zero rules allows us to construct a canonical smooth linear action of the general linear $\Z_2^n$-group on $\Z_2^n$-graded vector spaces and linear $\Z_2^n$-manifolds. All these notions, in particular the equivalence between the definitions of a smooth linear action as natural transformation and as $\z$-morphism, are carefully and explicitly explained in the main text.

We remark that many of the statements in this paper are not surprising in themselves. However, due to the subtleties of $\Z_2^n$-geometry, many of the proofs are much more involved than the analogue statements in supergeometry. The main source of difficulty is that one has to deal with formal power series in non-zero degree coordinates, rather than polynomials as one does in supergeometry. This forces one to work with infinite dimensional objects and the $\mathcal{J}$-adic topology ($\mathcal{J}$ is the ideal generated by non-zero degree elements). Many of the ``categorical'' proofs are significantly more involved than the proofs for supermanifolds. In general, there is a~lot of work to establish the form of natural transformations as we have non-nilpotent elements of non-zero degree. While the ethos of the proofs may be standard, they are not, in general, simple or routine checks due to the aforementioned subtleties.

\textbf{Motivation from physics:} $\Z_{2}^{n}$-gradings ($n \geq 2$) can be found in the theory of parastatistics (see for example~\cite{Druhl:1970,Green:1953,Greenberg:1965,Yang:2001}) and in relation to an alternative approach to supersymmetry~\cite{Aizawa:2018,Aizawa:2016,Tolstoy:2013}. ``Higher graded'' generalizations of the super Schr\"{o}dinger algebra (see~\cite{Aizawa:2017}) and the super Poincar\'{e} algebra (see~\cite{Bruce:2019a}) have appeared in the literature. Furthermore, such gradings appear in the theory of mixed symmetry tensors as found in string theory and some formulations of~supergravity (see~\cite{Bruce:2018a}). It~must also be pointed out that quaternions and more general Clifford algebras can be understood as $\Z_2^n$-graded $\Z_2^n$-commutative algebras~\cite{Albuquerque:1999, Albuquerque:2002}. Finally, \emph{any} ``sign rule'' can be interpreted in terms of a $\Z_{2}^{n}$-grading (see~\cite{Covolo:2016}).

 \textbf{Background:} For various sheaf-theoretical notions we will draw upon Hartshorne~\cite[Chapter~II]{Hartshorne:1977} and Tennison~\cite{Tennison:1975}. There are several good introductory books on the theory of super\-ma\-ni\-folds including Bartocci, Bruzzo and~Hern\'{a}ndez Ruip\'{e}rez~\cite{Bartocci:1991}, Carmeli, Caston and~Fioresi~\cite{Carmeli:2011}, Deligne and~Morgan~\cite{Deligne:1999} and Varadarajan~\cite{Varadarajan:2004}. For categorical notions we will be based on Mac Lane~\cite{MacLane1998}. We~will make extensive use of the constructions and statements found in our earlier publications~\cite{Bruce:2019b,Bruce:2018,Bruce:2019}.

\section[Z2n-graded vector spaces and Linear Z2n-manifolds]
{$\boldsymbol{\Z_{2}^{n}}$-graded vector spaces and Linear $\boldsymbol{\Z_2^n}$-manifolds}

\subsection[Z2n-graded vector spaces and the zero degree rules]
{$\boldsymbol{\Z_2^n}$-graded vector spaces and the zero degree rules }
When dealing with linear superalgebra one encounters the so-called \emph{even rules} (see~\cite[Section~1.8]{Carmeli:2011},~\cite[Section~1.7]{Deligne:1999} and~\cite[pp.~123--124]{Varadarajan:2004}, for example). Very informally, the idea is to remove sign factors by allowing extra parameters to render the situation completely even. The idea has been applied in physics since the early days of supersymmetry. More precisely, let
\[
V(A)=(A\otimes V)_0
\]
be the even part of the extension of scalars in a (real) super vector space $V$, from the base field~$\R$ to a supercommutative algebra $A\in\tt SAlg$ (in the even rules that we are about to describe, it actually suffices to use supercommutative Grassmann algebras $A=\R[\zvy_1,\dots, \zvy_N]$: the $\zvy_i$ are then the extra parameters mentioned before). The main result in even rules states, roughly, that defining a morphism $\phi\colon V\0 V\to V$ is equivalent to defining it functorially on the even part of~$V$ after extension of scalars, i.e., is equivalent to defining a functorial family of morphisms
\[
\phi(A)\colon\ V(A)\times V(A)\to V(A)
\]
(indexed by $A\in\tt SAlg$). More precisely, there is a $1:1$ correspondence between parity respecting $\R$-linear maps $\phi\colon V_1\otimes\dots\otimes V_n\to V$ and functorial families
\[
\phi(A)\colon\ V_1(A)\times \dots \times V_n(A)\to V(A)
\]
($A\in\tt SAlg$) of $A_0$-multilinear maps.

We now proceed to generalise this theorem to the $\Z_2^n$-setting. We~will work with the category $\Z_2^n\tt GrAlg$ of $\Z_2^n$-Grassmann algebras rather than the category $\Z_2^n\tt Alg$ of all $\Z_2^n$-commutative algebras.

Let ${V} =\bigoplus_{i = 0}^{N} {V}_{\gamma_i}$ be a (real) $\Z_2^n$-graded vector space, i.e., a (real) vector space with a~direct sum decomposition over $i\in\{0,\dots,N\}$ (we say that the vectors of $V_{\zg_i}$ are of degree \mbox{$\zg_i\in\z$}). The category of $\Z_2^n$-graded vector spaces (not necessarily finite dimensional) we denote as $\Z_2^n\catname{Vec}$. Morphisms in this category are degree preserving linear maps. We~denote the category of~modu\-les over commutative algebras as $\tt AMod$ (see Appendix~\ref{app:CatModAlg}).

To $V$ we associate a functor
\begin{equation*}
{V}(-)\in{\tt Fun}_0\big(\Z_2^n\catname{Pts}^{\textnormal{op}},{\tt AMod}\big)
\end{equation*}
in the category of those functors whose value on any $\Z_2^n$-Grassmann algebra $\zL\in\Z_2^n{\tt Pts}^{\op{op}}$ is a $\zL_0$-module, and of those natural transformations that have $\zL_0$-linear $\zL$-components. The functor $V(-)$ is essentially the tensor product functor $-\0 V$. It~is built in the following way. First, for every $\Z_2^n$-Grassmann algebra $\Lambda$, we define
\[
{V}(\Lambda) :=(\Lambda \otimes {V})_0\in\zL_0\tt Mod,
\]
where the tensor product is over $\R$. Secondly, for any $\Z_2^n$-algebra morphism $\varphi^* \colon \Lambda \rightarrow \Lambda^\prime$, we define
\[
{V}(\varphi^*) := \big(\varphi^* \otimes \Id_{{V}}\big)_0,
\]
where the {\small RHS} is the restriction of $\zf^*\0\Id_{ V}$ to the degree 0 part of $\zL\0 V$, so that ${V}(\zf^*)$ is an~$\tt AMod$-morphism
\begin{gather}\label{CatAMod}
{V}(\zf^*)\colon\ {V}(\zL)\to {V}(\zL'),
\end{gather}
whose associated algebra morphism is the restriction $(\zf^*)_0\colon \zL_0\to\zL'_0$. It is clear that $V(-)$ respects compositions and identities and is thus a functor, as announced.

We thus get an assignment
\[
\cF\colon\ \Z_2^n{\tt Vec}\ni V\mapsto \cF( V):= V(-)\in{\tt Fun}_0\big(\Z_2^n{\tt Pts}^{\op{op}}, \tt AMod\big).
\]
The map $\cF$ is essentially $-\0\bullet$ and is itself a functor. It~associates to any grading respecting linear map $\phi \colon {V} \rightarrow {W}$ and any $\Z_2^n$-Grassmann algebra $\zL$, a $\zL_0$-linear map
\[
\phi_\Lambda := (\Id_\Lambda \otimes \phi)_0\colon\ {V}(\Lambda) \rightarrow {W}(\Lambda).
\]
The family $\cF(\phi):=\phi_-$ is a natural transformation from $\cF( V)$ to $\cF( W)$. Since $\cF$ respects compositions and identities, it is actually a functor valued in the restricted functor category ${\tt Fun}_0\big(\Z_2^n{\tt Pts}^{\op{op}},\tt AMod\big)$.
\begin{Definition}\label{def:Vfun}
The functor
\[
\mathcal{F} \colon\ \Z_2^n\catname{Vec} \longrightarrow {\tt Fun}_0\big(\Z_2^n\catname{Pts}^\textnormal{op}, \catname{AMod}\big)
\]
is referred to as the {\it zero degree rules functor}.
\end{Definition}
\begin{Theorem}\label{trm:FFullyFaith}
The zero degree rules functor
\[
\mathcal{F} \colon\ \Z_2^n\catname{Vec} \longrightarrow {\tt Fun}_0\big(\Z_2^n\catname{Pts}^\textnormal{op}, \catname{AMod}\big)
\]
is fully faithful, i.e., for any pair of $\Z_2^n$-graded vector spaces ${V}$ and ${W}$, the map
\[
\mathcal{F}_{ V, W} \colon \ \Hom_{\Z_2^n\catname{Vec}}({V} , {W}) \longrightarrow \Hom_{{\tt Fun}_0(\Z_2^n\catname{Pts}^\textnormal{op}, \catname{AMod})} (\mathcal{F}({V}) ,\mathcal{F}({W}) ),
\]
is a bijection.
\end{Theorem}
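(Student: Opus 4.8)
The plan is to prove faithfulness and fullness separately, both by exploiting the freedom to choose the test algebra $\zL$ cleverly. For \emph{faithfulness}, suppose $\phi,\psi\colon V\to W$ are grading-preserving linear maps with $\cF(\phi)=\cF(\psi)$, i.e.\ $\phi_\zL=\psi_\zL$ for every $\Z_2^n$-Grassmann algebra $\zL$. It suffices to recover $\phi$ from a single well-chosen $\zL$. Take $\zL$ to be a $\Z_2^n$-Grassmann algebra with at least one generator $\zvy_\zg$ in each non-zero degree $\zg\in\z$ (and use the degree $0$ part of $\R$ itself, or a generator of degree $0$, which one does not want since degree $0$ generators would be nilpotent and problematic — instead one simply uses that $1\in\zL_0$). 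For a homogeneous vector $v\in V_{\zg}$, the element $\zvy_\zg\0 v$ (when $\zg\neq 0$) or $1\0 v$ (when $\zg=0$) lies in $(\zL\0V)_0=V(\zL)$, and $\phi_\zL(\zvy_\zg\0 v)=\zvy_\zg\0\phi(v)$. Since the map $w\mapsto \zvy_\zg\0 w$ from $W_\zg$ into $W(\zL)$ is injective (the $\zvy_\zg$ are not zero divisors against elements purely of degree $\zg$ in the Grassmann algebra — this needs a small check using the explicit monomial basis of $\zL$), the equality $\phi_\zL=\psi_\zL$ forces $\phi(v)=\psi(v)$ on each homogeneous component, hence $\phi=\psi$.

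For \emph{fullness}, let $\eta\colon\cF(V)\Rightarrow\cF(W)$ be a natural transformation in ${\tt Fun}_0(\Z_2^n{\tt Pts}^{\op{op}},{\tt AMod})$, so each $\eta_\zL\colon V(\zL)\to W(\zL)$ is $\zL_0$-linear and the naturality squares commute. I would first fix a sufficiently large $\zL$ — one monomial generator $\zvy_\zg$ per non-zero degree — and use $\zL_0$-linearity of $\eta_\zL$ together with the fact that $V(\zL)$ is generated as a $\zL_0$-module by the elements $\zvy_{\zg}\0 v$ ($v\in V_\zg$, $\zg\neq 0$) and $1\0 v$ ($v\in V_0$) to define a candidate linear map $\phi\colon V\to W$: set $\phi(v)$, for $v$ of degree $\zg$, by the rule $\eta_\zL(\zvy_\zg\0 v)=\zvy_\zg\0\phi(v)$, checking (again via the monomial basis) that the right-hand side indeed lies in $\zvy_\zg\0 W_\zg$ so that $\phi(v)$ is well defined and has degree $\zg$. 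That $\phi$ is $\R$-linear follows from the $\zL_0$-linearity and additivity of $\eta_\zL$. Then I would verify $\cF(\phi)=\eta$: for an arbitrary $\zL'$ and an arbitrary element of $V(\zL')$, write it as a $\zL'_0$-combination of terms $\zl'\0 v$ with $v$ homogeneous, choose for each such $v$ a $\Z_2^n$-algebra morphism $\zf^*\colon\zL\to\zL'$ sending $\zvy_{\deg v}\mapsto$ (a suitable element of $\zL'$ pairing with $v$), and push the defining relation through the naturality square $\cF(W)(\zf^*)\circ\eta_\zL=\eta_{\zL'}\circ\cF(V)(\zf^*)$; this pins down $\eta_{\zL'}$ on a generating set, hence everywhere by $\zL'_0$-linearity.

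The main obstacle, and the place where the $\Z_2^n$ setting genuinely differs from the super case, is the bookkeeping in the fullness argument: because non-zero-degree elements of $\zL$ need not be nilpotent, $V(\zL)=(\zL\0V)_0$ is typically \emph{infinite dimensional} and one cannot argue by a finite polynomial expansion. One must therefore be careful that the candidate $\phi$ is well defined independently of the chosen large $\zL$ and of the chosen morphisms $\zf^*$, and that the generating set for $V(\zL')$ as a $\zL'_0$-module is handled uniformly; this is where the $\mathcal J$-adic / topological completeness remarks in the introduction implicitly enter, and where one should be explicit that $\eta$ being a morphism in the \emph{restricted} functor category (i.e.\ $\zL_0$-linear components) is exactly what is needed. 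A secondary point worth stating carefully is the elementary Grassmann-algebra fact used in both halves: for $\zg\in\z\setminus\{0\}$ and a generator $\zvy_\zg$ of degree $\zg$, the assignment $w\mapsto\zvy_\zg\0 w$ maps $W_\zg$ injectively into $(\zL\0W)_0$, and its image consists precisely of the degree-$0$ elements of the form $\zvy_\zg\0(\text{degree }\zg)$; this follows by expanding in the monomial basis of $\zL$ and comparing the $\zvy_\zg$-coefficient.
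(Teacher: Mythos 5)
Your faithfulness argument is essentially the paper's: test against $\zL_1=\R[[\zvy_1,\dots,\zvy_N]]$ with one generator per non-zero degree, apply the hypothesis to $\zvy_j\0 v$ for $v\in V_{\zg_j}$ and to $1\0 v$ for $v\in V_0$, and conclude from the injectivity of $w\mapsto\zvy_j\0 w$. That half is fine, as is your reduction of the general test algebra $\zL$ to $\zL_1$ via morphisms $\zf^*\colon\zL_1\to\zL$ and $\zL_0$-linearity.

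The fullness half has a genuine gap at the point where you define $\phi$. You declare $\phi(v)$ by the rule $\eta_{\zL_1}(\zvy_\zg\0 v)=\zvy_\zg\0\phi(v)$ and dismiss the needed verification that the left-hand side lies in $\zvy_\zg\0 W_\zg$ as a small check via the monomial basis. It is not: a priori $\eta_{\zL_1}(\zvy_j\0 v)$ is an arbitrary degree-zero element of $\zL_1\0 W$, i.e., a formal series $w_{0,0}+\sum_i\zvy_i\0 w_i+\sum_{|\za|\ge 2}\zvy^\za\0 w_\za$ with $\deg(\zvy^\za)=\deg(w_\za)$. Every such term is of degree zero (since $\zg+\zg=0$ in $\z$, even terms like $\zvy_i^2\0 w_0$ qualify when $\la\zg_i,\zg_i\ran$ is even), and because the generators are \emph{not} nilpotent there are genuinely infinitely many candidate terms; neither the grading nor $\zL_{1,0}$-linearity excludes any of them, and the monomial basis only tells you what the degree-zero part looks like, not where $\eta_{\zL_1}$ lands in it. The paper eliminates the spurious terms by exploiting naturality with respect to the scaling endomorphisms $\zf^*_{r_0}(\zvy_k)=r_0\zvy_k$ of $\zL_1$ (comparing the two routes around the naturality square forces every term with $|\za|\neq 1$, and the constant term, to vanish) and then with respect to $\zf^*_{R_0}$ scaling all generators except $\zvy_j$ (which kills the cross-terms $\zvy_i\0 w_i$ for $i\neq j$). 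This scaling-and-naturality argument is the actual core of the surjectivity proof and is absent from your proposal; without it $\phi$ is not well defined.
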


This result is the $\Z_2^n$-counterpart of the $1:1$ correspondence mentioned above.

\begin{proof}
We show first that the map $\cF_{ V, W}$ is injective. Let $\phi,\psi\colon V\to W$ be two degree preserving linear maps, and assume that $\cF(\phi)=\phi_-=\psi_-=\cF(\psi)$, so that, for any $\zL\in\Z_2^n{\tt Pts}^{\op{op}}$ and any~$\zl\0 v\in V(\zL)$, we have
\begin{gather}\label{AssInj}
\zl\0 \phi(v)=\phi_\zL(\zl\0 v)=\psi_\zL(\zl\0 v)=\zl\0 \psi(v).
\end{gather}
Notice now that
\[
V(\zL)=(\zL\0 V)_0=\bigoplus_{i=0}^N\zL_{\zg_i}\0 V_{\zg_i}
\]
and let $\zL$ be the Grassmann algebra
\begin{gather}\label{Lambda1}
\zL_1:=\R[[\zvy_1,\dots,\zvy_N]]
\end{gather}
that has exactly one generator $\zvy_j$ in each non-zero degree $\zg_j\in\Z_2^n$ ($N=2^n-1$). For any $v_j\in V_{\zg_j}$, equation~\eqref{AssInj} implies that $\zvy_j \otimes \zvf(v_j)=\zvy_j \otimes\psi(v_j)$, so that $\phi$ and $\psi$ coincide on $ V_{\zg_j}$, for all $j\in\{1,\dots,N\}$. For $v_0\in V_0:= V_{\zg_0}$ and $\zl=1$, the same equation shows that $\phi$ and $\psi$ coincide also on $ V_0$.

To prove surjectivity, we consider an arbitrary natural transformation $\Phi_-\colon V(-)\to W(-)$ and will define a degree 0 linear map $\phi\colon V\to W$, such that $\cF(\phi)=\phi_- =\Phi_-$, i.e., such that, for any $\zL\in\Z_2^n{\tt Pts}^{\op{op}}$, we have
\begin{gather*}
\phi_\zL=\Phi_\zL
\end{gather*}
on $ V(\zL)$.
Since an element of $ V(\zL)$ (uniquely) decomposes into a sum over $i\in\{0,\dots, N\}$ of~(not uniquely defined) finite sums of decomposable tensors $\zl_i\0 v_i$, with (not uniquely defined) factors~$\zl_i$ and $v_i$ of degree $\zg_i$, it suffices to show that
\begin{gather}\label{CondSurj2}
\phi_\zL(\zl_i\0 v_i)=\Phi_\zL(\zl_i\0 v_i),
\end{gather}
for all $i\in\{0,\dots,N\}$.

Further, it suffices to prove condition~\eqref{CondSurj2} for $\zL_1$ (see~\eqref{Lambda1}) and for the tensors $\zvy_i\0 v_i$ ($\zvy_0:=1$, $v_i\in V_{\zg_i}$, $i\in\{0,\dots,N\}$). The observation follows from naturality of $\Phi$. Indeed, assume that~\eqref{CondSurj2} is satisfied for $\zL_1$ and the decomposable tensors just mentioned (assumption~($\star$)). For any fixed $i\in\{1,\dots,N\}$ (resp., $i=0$), and for $\zL$, $\zl_i$ and $v_i$ as above, let $\zf^*\colon \zL_1\to \zL$ be the $\Z_2^n$-algebra map defined by $\zf^*(\zvy_i)=\zl_i$, $\zf^*(\zvy_j)=0$ for $j\neq i$, $j\neq 0$, and $\zf^*(\zvy_0)=\zf^*(1)=1$ (resp., $\zf^*(\zvy_j)=0$ for all $j\neq 0$, and $\zf^*(\zvy_0)=\zf^*(1)=1$). For $i\in\{1,\dots, N\}$, when applying the naturality condition
\begin{center}
\leavevmode
\begin{xy}
(0,20)*+{ {V}(\Lambda_1)}="a"; (40,20)*+{{W}(\Lambda_1)}="b";
(0,0)*+{{V}(\Lambda)}="c"; (40,0)*+{{W}(\Lambda)}="d";%
{\ar "a";"b"}?*!/_3mm/{\Phi_{\Lambda_1}};%
{\ar "a";"c"}?*!/^6mm/{{V}(\zf^*)};%
{\ar "b";"d"}?*!/_6mm/{{W}(\zf^*)};%
{\ar "c";"d"}?*!/^3mm/{\Phi_{\Lambda}};%
\end{xy}
\end{center}
to $\zvy_i\0 v_i$, we get clockwise
\[
W(\zf^*)(\Phi_{\zL_1}(\zvy_i\0 v_i))=W(\zf^*)(\zvy_i\0\phi(v_i))=\zf^*(\zvy_i)\0\phi(v_i)=\zl_i\0 \zvf(v_i)=\phi_\zL(\zl_i\0 v_i),
\]
in view of ($\star$), whereas anticlockwise we obtain
\[
\Phi_\zL(V(\zf^*)(\zvy_i\0 v_i))=\Phi_\zL(\zl_i\0 v_i).
\]
Hence, condition~\eqref{CondSurj2} holds for $i\in\{1,\dots,N\}$. For $i=0$, the above naturality condition yields $1\0\phi(v_0)=\Phi_\zL(1\0 v_0)$, when applied to $1\0 v_0$. In view of the $\zL_0$-linearity of the $\zL$-components of the natural transformations considered, we get now
\[
\phi_\zL(\zl_0\0 v_0)=\zl_0\,\phi_\zL(1\0 v_0)=\zl_0(1\0\phi(v_0))=\Phi_\zL(\zl_0\0 v_0).
\]
Finally, condition~\eqref{CondSurj2} holds for an arbitrary $\zL$, if it holds for $\zL_1$.

Surjectivity now reduces to constructing a $\Z_2^n\tt Vec$-morphism $\zvf\colon V\to W$ that satisfies~\eqref{CondSurj2} for $\zL_1$ and decomposable tensors of the type $\zvy_i\0 v_i$ ($i\in\{0,\dots,N\}$).

We first build $\phi(v_j)\in W_{\zg_j}$ linearly in $v_j\in V_{\zg_j}$ for an arbitrarily fixed $j\in\{0,\dots,N\}$. We~set again $\zvy_0=1\in\zL_{1,0}$. Since $\Phi_{\zL_1}(\zvy_j\0v_j)\in (\zL_1\0 W)_0$, it reads
\[
\Phi_{\zL_1}(\zvy_j\0v_j)=\sum_{i=0}^N\sum_{k=1}^{M_i} \zl_i^k\0 w_i^k,
\]
where $M_i\in \mathbb{N}$, $\zl_i^k\in\zL_{1,\zg_i}$ and $w_i^k\in W_{\zg_i}$. When setting
\[
\mathcal{A}_i=\bigg\{\za\in\mathbb{N}^{\times N}\colon \sum_{\ell=1}^N\za_\ell\zg_\ell=\zg_i\bigg\}
\]
and
\[
\zl_i^k=\sum_{\za\in\mathcal{A}_i}r_{\za,i}^k\,\zvy^\za\qquad\big(r_{\za,i}^k\in\R\big),
\]
where we used the standard multi-index notation, we get
\[
\Phi_{\zL_1}(\zvy_j\0v_j)=\sum_{i=0}^N\sum_{\za\in\mathcal{A}_i}\zvy^\za\0\bigg(\sum_{k=1}^{M_i}r_{\za,i}^k w_i^k\bigg)=:\sum_{i=0}^N\sum_{\za\in\mathcal{A}_i}\zvy^\za\0 w_{\za,i}\qquad
(w_{\za,i}\in W_{\zg_i}).
\]
Denoting the canonical basis of $\R^N$ by $(e_\ell)_\ell$ and decomposing the {\small RHS} with respect to the values of $|\za|=\za_1+\dots+\za_N\in\mathbb{N}$, we obtain
\begin{gather}\label{Defphi1}
\Phi_{\zL_1}(\zvy_j\0v_j)=w_{0,0}+\sum_{i=1}^N \zvy_i\0 w_{e_i,i}+\sum_{i=0}^N\sum_{\za\in\mathcal{A}_i\colon |\za|\ge 2}\zvy^\za\0 w_{\za,i}.
\end{gather}

Let now $\zf^*_{r_0}$ ($r_0\in\R$, $r_0>0$ and $r_0\neq 1$) be the $\Z_2^n$-algebra endomorphism of $\zL_1$ that is defined by $\zf^*_{r_0}(\zvy_k)=r_0\zvy_k$ if $k\ne 0$ and by $\zf^*_{r_0}(\zvy_0)=1$. It follows from the naturality condition
\begin{center}
\leavevmode
\begin{xy}
(0,20)*+{ {V}(\Lambda_1)}="a"; (40,20)*+{{W}(\Lambda_1)}="b";
(0,0)*+{{V}(\Lambda_1)}="c"; (40,0)*+{{W}(\Lambda_1)}="d";%
{\ar "a";"b"}?*!/_3mm/{\Phi_{\Lambda_1}};%
{\ar "a";"c"}?*!/^6mm/{{V}(\zf^*_{r_0})};%
{\ar "b";"d"}?*!/_6mm/{\ {W}(\zf^*_{r_0})};%
{\ar "c";"d"}?*!/^3mm/{\Phi_{\Lambda_1}};%
\end{xy}
\end{center}
that
\[
W(\zf^*_{r_0})(\Phi_{\zL_1}(\zvy_j\0v_j))=w_{0,0}+\sum_{i=1}^N \zvy_i\0(r_0 w_{e_i,i})+\sum_{i=0}^N\sum_{\za\in\mathcal{A}_i\colon |\za|\ge 2}\zvy^\za\0\big(r_0^{|\za|} w_{\za,i}\big)
\]
and
\[
\Phi_{\zL_1}\big(V(\zf^*_{r_0})(\zvy_j\0v_j)\big)=r_0^{1-\zd_{j0}}w_{0,0}+\!\sum_{i=1}^N \zvy_i\0\big(r_0^{1-\zd_{j0}} w_{e_i,i}\big)+\sum_{i=0}^N\sum_{\za\in\mathcal{A}_i\colon |\za|\ge 2}\!\!\!\!\!\!\!\zvy^\za\0 \big(r_0^{1-\zd_{j0}} w_{\za,i}\big),
\]
where $\zd_{j0}$ is the Kronecker symbol, coincide. As all the monomials in $\zvy$ in the {\small RHS}-s of the two last equations are different, we get,
\begin{enumerate}\itemsep=0pt
 \item[(1)] if $j\neq 0$: $w_{0,0}=0$ and $w_{\za,i}=0$, for all $i\in\{0,\dots,N\}$ and all $\za\in\mathcal{A}_i\colon |\za|\ge 2$, and,
 \item[(2)] if $j=0$: $w_{e_i,i}=0$, for all $i\in\{1,\dots,N\}$, and $w_{\za,i}=0$, for all $i\in\{0,\dots,N\}$ and all $\za\in\mathcal{A}_i\colon |\za|\ge 2$.
 \end{enumerate}
Equation~\eqref{Defphi1} thus yields
\begin{gather}\label{Defphi2}
\Phi_{\zL_1}(\zvy_j\0 v_j)=\sum_{i=1}^{N}\zvy_i\0 w_{e_i,i}\quad(j\neq 0)\qquad\text{and}\qquad \Phi_{\zL_1}(1\0 v_0)=w_{00} .
\end{gather}

If $j\neq 0$, a new application of naturality, now for the $\Z_2^n$-algebra endomorphism $\zf^*_{R_0}$ ($R_0\in\R$, $R_0\neq 1$) of $\zL_1$ that is defined by $\zf^*_{R_0}(\zvy_i)= R_0\zvy_i$ ($i\neq 0,i\neq j$), $\zf^*_{R_0}(\zvy_j)=\zvy_j$ and $\zf^*_{R_0}(\zvy_0)=1$, leads to
\[
\zvy_j\0 w_{e_j,j}+\sum_{i\neq j} \zvy_i\0(R_0w_{e_i,i})=\zvy_j\0 w_{e_j,j}+\sum_{i\neq j}\zvy_i\0 w_{e_i,i},
\]
so that
\begin{gather}
\label{Defphi3}
\Phi_{\zL_1}(\zvy_j\0 v_j)=\zvy_j\0 w_{e_j,j}\qquad(j\neq 0).
\end{gather}

The vectors $w_{00}\in W_{0}$ (see~\eqref{Defphi2}) and $w_{e_j,j}\in W_{\zg_j}$ ($j\neq 0$) (see~\eqref{Defphi3}) are well-defined and depend obviously linearly on $v_0$ and $v_j$, respectively. Hence, setting $\phi(v_0)=w_{00}$ and $\phi(v_j)=w_{e_j,j}$ ($j\neq 0$), we define a degree 0 linear map from $V$ to $W$. Moreover, since~\eqref{CondSurj2} is clearly satisfied for $\zL_1$ and the $\zvy_i\0 v_i$ ($i\in\{0,\dots,N\}$), it is satisfied for any $\zL$, which completes the proof of surjectivity.
\end{proof}

Since
\[
\cF\colon\ \Z_2^n{\tt Vec}\to {\tt Fun}_0\big(\Z_2^n{\tt Pts}^{\op{op}},\tt AMod\big)
\]
is fully faithful, it is essentially injective, i.e., it is injective on objects up to isomorphism. It~follows that $\Z_2^n{\tt Vec}$ can be viewed as a full subcategory of the target category of $\cF$.

The above considerations lead to the following definition.
\begin{Definition}
A functor
\[
\mathcal{V}\in{\tt Fun}_0\big(\Z_2^n\catname{Pts}^\textnormal{op}, \catname{AMod}\big)
\]
is said to be \emph{representable}, if there exists $V\in\Z_2^n\tt Vec$, such that $\cF(V)$ is naturally isomorphic to~$\mathcal{V}$.
\end{Definition}

As $\cF$ is essentially injective, a {\it representing object} $V$, if it exists, is unique up to isomorphism. We~therefore refer sometimes to $V$ as ``the'' representing object.

\subsection[Cartesian Z2n-graded vector spaces and Cartesian Z2n-manifolds]
{Cartesian $\boldsymbol{\Z_2^n}$-graded vector spaces and Cartesian $\boldsymbol{\Z_2^n}$-manifolds}
\newcommand{\mbf}{\mathbf}
\newcommand{\Ci}{C^\infty}
\newcommand{\iho}{\ul{\op{Hom}}}
In the literature, the space $\R^{p|\ul{q}}$ is viewed, either as the trivial $\Z_2^n$-manifold
\[
\R^{p|\ul{q}}=\big(\R^p,\Ci_{\R^p}[[\zx]]\big)
\]
with canonical $\z$-graded formal parameters $\zx$, or as the Cartesian $\Z_2^n$-graded vector space
\[
\mbf{R}^{p|\ul{q}}=\mbf{R}^p\oplus\bigoplus_{j=1}^N\mbf{R}^{q_j},
\]
where $\mbf{R}^p$ (resp., $\mbf{R}^{q_j}$) is the term of degree $\zg_0=0\in\Z_2^n$ (resp., $\zg_j\in\Z_2^n$). Observe that we use the notation $\mbf{R}^\bullet$ (resp., $\R^\bullet$), when $\R^\bullet$ is viewed as a vector space (resp., as a manifold). It~can happen that we write $\R^\bullet$ for both, the vector space and the manifold, however, in these cases, the meaning is clear from the context. Further, we set $q_0=p$, $\mathbf{q}=(q_0,q_1,\dots,q_N)$, and $|\mathbf{q}|=\sum_iq_i$. When embedding $\mbf{R}^{q_i}$ ($i\in\{0,\dots,N\}$) into $\mbf{R}^{p|\ul{q}}$, we identify each vector of the canonical basis of $\mbf{R}^{q_i}$ with the corresponding vector of the canonical basis of $\mbf{R}^{|\mathbf{q}|}$. We~denote this basis by
\[
\big(e^i_k\big)_{i,k}\qquad (i\in I=\{0,\dots,N\},\, k\in K_i=\{1,\dots,q_i\})
\]
and assign of course the degree $\zg_i$ to every vector $e^i_k$. We~can now write
\[
\mbf{R}^{p|\ul q}=\bigoplus_{i=0}^N\mbf{R}^{q_i}=\bigoplus_{(i,k)\in I\times K_i}\mbf{R}\, e^i_k.
\]
The dual space of $\mbf{R}^{p|\ul q}$ is defined by
\[
\big(\mbf{R}^{p|\ul q}\big)^\vee = \ul{\op{Hom}}\big(\mbf{R}^{p|\ul q},\mbf{R}\big)=\bigoplus_{i=0}^N\iho_{\zg_i}\big(\mbf{R}^{p|\ul q},\mbf{R}\big),
\]
where $\iho$ is the internal Hom of $\Z_2^n\tt Vec$, i.e., the $\Z_2^n$-graded vector space of all linear maps, and where $\iho_{\zg_i}$ is the vector space of all degree $\zg_i$ linear maps. We~sometimes write $\iho_{\Z_2^n\tt Vec}$ instead of $\iho$. The dual basis of $\big(e^i_k\big)_{i,k}$ is defined as usual by
\[
\ze^k_i\big(e^j_\ell\big)=\zd^j_i\zd^k_\ell,
\]
so that $\ze^k_i$ is a linear map of degree $\zg_i$ and
\[
\big(\mbf{R}^{p|\ul q}\big)^\vee=\bigoplus_{(i,k)\in I\times K_i}\mbf{R}\,\ze^k_i.
\]
Let us finally mention that any $\Z_2^n$-vector $x\in\mbf{R}^{p|\ul q}$ reads $x=\sum_{j,\ell}x^\ell_j e^j_\ell$ and that
\begin{gather}\label{Ass1}
\ze^k_i(x)=x^k_i,
\end{gather}
as usual.

Notice now that if $M$ is a smooth $m$-dimensional real manifold and $(U,\zf)$ is a chart of $M$, the coordinate map $\zf$ sends any point $x\in M$ to $\zf(x)=(x^1,\dots,x^m)\in\R^m$, so that
\begin{gather}\label{Ass2}
\zf^i(x)=x^i.
\end{gather}
Hence, what we refer to as coordinate function $x^i\in\Ci(U)$ is actually the function $\zf^i$. Equations~\eqref{Ass1} and~\eqref{Ass2} suggest to associate to any $\Z_2^n$-graded vector space $\mbf{R}^{p|\ul q}$ a $\Z_2^n$-manifold $\R^{p|\ul q}$ with coordinate functions $\ze^k_i$. In other words, the associated $p|\ul q$-dimensional $\Z_2^n$-manifold will be the locally $\Z_2^n$-ringed space
\[
\R^{p|\ul q}=\big(\R^p,\cO_{\R^{p|\ul q}}\big)=\big(\R^p, \Ci_{\R^p}[[\ze_1^1,\dots,\ze_N^{q_N}]]\big),
\]
where $\Ci_{\R^p}$ is the standard function sheaf of $\R^p$, where the degree $\zg_j$ linear maps $\ze_j^1,\dots,\ze_j^{q_j}$ ($j\in\{1,\dots,N\}$) are interpreted as coordinate functions or formal parameters of degree $\zg_j$, and where the degree $0$ linear maps $\ze_0^1,\dots,\ze_0^{p}$ are viewed as coordinates in $\R^{p}$. We~often set
\begin{gather}\label{Equiv}
\zx^\ell_j:=\ze^\ell_j\quad (j\neq 0)\qquad\text{and}\qquad x^\ell := \ze^\ell_0.
\end{gather}

\begin{Remark}
In the following, we denote the coordinates of $\R^{p|\ul q}$ by
\[
\big(x^\ell,\zx_j^\ell\big)=\big(x^a,\zx^A\big)=(u^{\mathfrak{a}}),
\]
if we wish to make a distinction between the coordinates of degree $0,\zg_1,\dots,\zg_N$, if we distinguish between zero degree coordinates and non-zero degree ones, or if we consider all coordinates together.
\end{Remark}

We refer to the category of $\Z_2^n$-graded vector spaces $\mbf{R}^{p|\ul q}$ ($p, q_1,\dots,q_N\in\mathbb N$) and degree 0 linear maps, as the category $\Z_2^n\tt CarVec$ of Cartesian $\Z_2^n$-vector spaces. As just mentioned, the interpretation of the dual basis as coordinates leads naturally to a map
\[
\mathcal{M}\colon\ \Z_2^n{\tt CarVec}\ni\mbf{R}^{p|\ul q}\mapsto \R^{p|\ul q}\in\Z_2^n\tt Man,
\]
where $\Z_2^n\tt Man$ is the category of $\Z_2^n$-manifolds and corresponding morphisms. This map can easily be extended to a functor. Indeed, if $\mbf{L}\colon \mbf{R}^{p|\ul q}\to \mbf{R}^{r|\ul s}$ is a morphism in $\Z_2^n\tt CarVec$ $\big($it is canonically represented by a block diagonal matrix $\mbf{L}\in\op{gl}\big(r|\ul{s}\times p|\ul{q},\mbf{R}\big)\big)$, its dual ($\Z_2^n$-transpose) $\mbf{L}^\vee\colon \big(\mbf{R}^{r|\ul s}\big)^\vee\to \big(\mbf{R}^{p|\ul q}\big)^\vee$ $\big($which is represented by the standard transpose $^t\mbf{L}\in\op{gl}\big(p|\ul{q}\times r|\ul{s},\mbf{R}\big)\big)$ is also a degree 0 linear map. If we set
\[
\mbf{L}=\big(\mbf{L}^{\ell i}_{i k}\big),
\]
where $i\in I$, $\ell\in\{1,\dots, s_i\}$ label the row and $i\in I$, $k\in\{1,\dots,q_i\}$ label the column, we get
\[
\mbf{L}^\vee\big(\ze'^\ell_i\big)=\sum_{k=1}^{q_i} \mbf{L}^{\ell i}_{i k}\ze^k_i,
\]
where $\big(\ze'^\ell_i\big)_{i,\ell}$ is the basis of $\big(\mbf{R}^{r|\ul s}\big)^\vee$. When using notation~\eqref{Equiv}, we obtain
\begin{gather}\label{Pull1}
L^*\big(x'^\ell\big):=\mbf{L}^\vee\big(x'^\ell\big)=\sum_{k=1}^{p} \mbf{L}^{\ell 0}_{0 k}\,x^k\in \cO_{\R^{p|\ul q}}^{0}(\R^p)\qquad(\ell\in\{1,\dots, r\})
\end{gather}
and
\begin{gather}\label{Pull2}
L^*\big(\zx'^{\ell}_j\big):=\mbf{L}^\vee\big(\zx'^\ell_j\big)=\sum_{k=1}^{q_j} \mbf{L}^{\ell j}_{j k}\,\zx^k_j\in\cO_{\R^{p|\ul q}}^{\zg_j}(\R^p)\qquad(j\neq 0,\,\ell\in\{1,\dots,s_j\}).
\end{gather}
These pullbacks define a $\Z_2^n$-morphism $L\colon \R^{p|\ul q}\to \R^{r|\ul s}$. This is the searched $\Z_2^n$-morphism $\cM(\mbf L)\colon \cM\big(\mbf{R}^{p|\ul q}\big)\to \cM\big(\mbf{R}^{r|\ul s}\big)$. Since $\cM(\mbf{L})$ is defined interpreting the standard transpose $^t\mbf{L}$ as pullback $(\cM(\mbf{L}))^*$ of coordinates, we have \[
(\cM(\mbf{M}\circ\mbf{L}))^*\simeq\, ^t\mbf{L}\circ\,^t\mbf{M}\simeq(\cM(\mbf{L}))^*\circ(\cM(\mbf{M}))^*=(\cM(\mbf{M})\circ\cM(\mbf{L}))^*,
\]
so that $\cM$ respects composition. Further, it obviously respects identities. Hence, we defined a~functor $\cM$.

\newcommand{\cV}{\mathcal{V}}

The pullbacks~\eqref{Pull1} and~\eqref{Pull2} are actually linear homogeneous $\z$-functions, i.e., homogeneous $\z$-functions in
\begin{gather}\label{LinFun}
\cO^{\op{lin}}_{\R^{p|\ul q}}(\R^p):=\bigg\{\sum_{k=1}^{p} r_k\,x^k+\sum_{j=1}^N\sum_{k=1}^{q_j} r^j_k\,\zx^k_j\colon r_k, r^j_k\in\R\bigg\}=\big(\mbf{R}^{p|\ul q}\big)^\vee\subset\cO_{\R^{p|\ul q}}(\R^{p}),
\end{gather}
where the last equality is obvious because of equation~\eqref{Equiv}. Hence, the functor $\cM$ is valued in~the subcategory $\Z_2^n\tt{CarMan}\subset\Z_2^n\tt Man$ of Cartesian $\Z_2^n$-manifolds $\R^{p|\ul q}$ ($p,q_1,\dots,q_N\in\mathbb N$) and~$\Z_2^n$-morphisms whose coordinate pullbacks are global linear functions of the source manifold that have the appropriate degree:
\[
\mathcal{M}\colon\ \Z_2^n{\tt CarVec}\to \Z_2^n\tt CarMan.
\]
The inverse ``vectorification functor'' $\cV$ of this ``manifoldification functor'' $\cM$ is readily defined: to a Cartesian $\Z_2^n$-manifold $\R^{p|\ul q}$ we associate the Cartesian $\Z_2^n$-vector space $\mbf{R}^{p|\ul q}$, and to a~linear $\Z_2^n$-morphism we associate the degree 0 linear map that is defined by the transpose of~the block diagonal matrix coming from the morphism's linear pullbacks. It~is obvious that $\cV\circ\cM=\cM\circ\cV=\op{id}$.

\begin{Proposition}\label{IsoCatCar}
We have an isomorphism of categories
\begin{gather}\label{FunEquivCar}
\mathcal{M}\colon\ \Z_2^n{\tt CarVec}\rightleftarrows\Z_2^n\tt CarMan\ \, {:}\,\cV
\end{gather}
between the full subcategory $\Z_2^n{\tt CarVec}\subset\Z_2^n\tt{Vec}$ of Cartesian $\Z_2^n$-vector spaces $\mbf{R}^{p|\ul q}$ and the subcategory $\Z_2^n{\tt Car}{\tt Man}\subset\Z_2^n{\tt Man}$ of Cartesian $\Z_2^n$-manifolds $\R^{p|\ul q}$ and $\Z_2^n$-morphisms with linear coordinate pullbacks.
\end{Proposition}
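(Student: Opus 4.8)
The plan is to assemble the observations already made above. Both functors $\cM$ and $\cV$ have been constructed, and the relations $\cV\circ\cM=\op{id}$, $\cM\circ\cV=\op{id}$ have been noted at the level of objects and of the coordinate‑pullback data; what remains is to make two things precise. First I would record that $\Z_2^n\tt CarVec$ is a \emph{full} subcategory of $\Z_2^n\tt Vec$, which is immediate since both have exactly the degree‑preserving linear maps as morphisms. Then I would check that $\Z_2^n\tt CarMan$ is genuinely a subcategory of $\Z_2^n\tt Man$, i.e., that the class of $\Z_2^n$‑morphisms whose coordinate pullbacks are global linear functions of the appropriate degree contains the identities and is closed under composition. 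The first is clear. For the second, if $L\colon\R^{p|\ul q}\to\R^{r|\ul s}$ and $M\colon\R^{r|\ul s}\to\R^{t|\ul u}$ have linear coordinate pullbacks, then $(M\circ L)^*=L^*\circ M^*$ sends each coordinate of $\R^{t|\ul u}$ first, via $M^*$, to a linear combination of coordinates of $\R^{r|\ul s}$, and then, via the algebra morphism $L^*$ which is linear on coordinates, to a linear combination of coordinates of $\R^{p|\ul q}$ of the same degree; so the composite lies in $\Z_2^n\tt CarMan$ as well.

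Next I would pin down the action of $\cV$ on morphisms and verify mutual inverseness. The key input is the coordinate description of morphisms of $\Z_2^n$‑manifolds recalled in the appendix: a $\Z_2^n$‑morphism $L\colon\R^{p|\ul q}\to\R^{r|\ul s}$ is uniquely and freely specified by the degree‑homogeneous pullbacks $L^*(x'^\ell)\in\cO^{0}_{\R^{p|\ul q}}(\R^p)$ and $L^*(\zx'^\ell_j)\in\cO^{\zg_j}_{\R^{p|\ul q}}(\R^p)$ of the target coordinates, the base map being determined by the former. Imposing that all these pullbacks be linear, i.e., lie in $\cO^{\op{lin}}_{\R^{p|\ul q}}(\R^p)=\big(\mbf{R}^{p|\ul q}\big)^\vee$ (see~\eqref{LinFun}), and using that this dual space is $\Z_2^n$‑graded, $\big(\mbf{R}^{p|\ul q}\big)^\vee=\bigoplus_{i,k}\mbf{R}\,\ze^k_i$, forces $L^*(\zx'^\ell_j)$ to be a linear combination of the $\zx^k_j$ of degree $\zg_j$ only, and $L^*(x'^\ell)$ a linear combination of the $x^k$ (a linear function of $\R^p$). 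This is exactly the data of a block‑diagonal degree‑$0$ matrix $\mbf{L}\in\op{gl}\big(r|\ul s\times p|\ul q,\mbf{R}\big)$, i.e., of a $\Z_2^n\tt CarVec$‑morphism; defining $\cV(L)$ to be the standard transpose of this matrix and comparing with~\eqref{Pull1}--\eqref{Pull2} gives $\cM(\cV(L))=L$ on the nose, while $\cV(\cM(\mbf{L}))=\mbf{L}$ because $\cM(\mbf{L})$ has precisely that transpose as its coordinate pullbacks by construction. Since $\cM$ has already been shown to respect composition and identities, $\cV$ does too (it is built from $\cM$ by inversion), so both are functors and the two composites are the identity functors of $\Z_2^n\tt CarVec$ and $\Z_2^n\tt CarMan$.

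The only genuinely nontrivial ingredient -- the ``hard part'', such as it is -- is the freeness and uniqueness in this coordinate description of morphisms: that an arbitrary choice of degree‑homogeneous elements of $\cO_{\R^{p|\ul q}}(\R^p)$ for the images of the target coordinates, with smooth reduced part, always integrates to a unique $\Z_2^n$‑morphism. I would invoke the fundamental theorem on morphisms of $\Z_2^n$‑manifolds from the appendix, noting that here no obstruction arises: the reduced base map $x\mapsto\big(\sum_k\mbf{L}^{\ell 0}_{0k}x^k\big)_\ell$ is linear, hence smooth, and the non‑zero degree coordinates are sent to linear expressions in the formal parameters, so no convergence or consistency condition intervenes. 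Everything else is bookkeeping with the degree‑graded splitting of $\big(\mbf{R}^{p|\ul q}\big)^\vee$ and the definitions of $\cM$ and $\cV$.
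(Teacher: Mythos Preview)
Your proposal is correct and follows essentially the same approach as the paper, which in fact states no separate proof for this proposition: the discussion immediately preceding it constructs $\cM$, defines $\cV$ by taking the transpose of the block-diagonal matrix encoded in the linear pullbacks, and declares ``It is obvious that $\cV\circ\cM=\cM\circ\cV=\op{id}$.'' You have simply made explicit the bookkeeping (closure of $\Z_2^n\tt CarMan$ under composition, the chart theorem giving the bijection between morphisms and coordinate pullbacks) that the paper leaves implicit.
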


\begin{Remark} Let us stress that the $\Z_2^n$-vector space of linear $\z$-functions
\[
\big(\mbf{R}^{p|\ul q}\big)^\vee\simeq\cO_{\R^{p|\ul q}}^{\op{lin}}(\R^{p})\subset \cO_{\R^{p|\ul q}}(\R^p)
\]
is of course not an algebra. In the case $p=0$, we get
\[
\cO^{\op{lin}}_{\R^{0|\ul q}}(\{\star\})=\zL^{\op{lin}}\subset\cO_{\R^{0|\ul q}}(\{\star\})=\zL,
\]
where $\{\star\}$ denotes the 0-dimensional base manifold $\R^0$ of the $\z$-point $\R^{0|\ul q}$, where $\zL=\R[[\zvy^1_1,\dots$, $\zvy^{q_N}_N]]$ is the $\z$-Grassmann algebra that corresponds to $\R^{0|\ul q}$, and where $\zL^{\op{lin}}$ is the $\z$-vector space of homogeneous degree 1 polynomials in the $\zvy_1^1,\dots,\zvy_N^{q_N}$ (with vanishing term $\zL_0^{\op{lin}}$ of $\z$-degree zero).
\end{Remark}

We close this subsection with some observations regarding the functor of points.

The Yoneda functor of points of the category $\z\tt Man$ is the fully faithful embedding
\[
\mathcal{Y}\colon\ \z{\tt Man}\to {\tt Fun}\big(\z{\tt Man}^{\op{op}},{\tt Set}\big).
\]
In~\cite{Bruce:2019b}, we showed that $\mathcal Y$ remains fully faithful for appropriate restrictions of the source and target of the functor category, as well as of the {\it resulting} functor category. More precisely, we~proved that the functor
\begin{gather}\label{SVYoneda}
\mathcal{S}\colon\ \z{\tt Man}\to{\tt Fun}_0\big(\z{\tt Pts}^{\op{op}},\tt A(N)FM\big)
\end{gather}
is fully faithful. The category $\tt A(N)FM$ is the category of (nuclear) Fr\'echet manifolds over a (nuc\-lear) Fr\'echet algebra, and the functor category is the category of those functors that send a~$\z$-Grassmann algebra $\zL$ to a (nuclear) Fr\'echet $\zL_0$-manifold, and of those natural transformations that have $\zL_0$-smooth $\zL$-components. For any $M\in\z\tt Man$ and any $\R^{0|\ul m}\simeq\zL\in\z{\tt Pts}^{\op{op}}$, we have
\[
M(\zL):=\mathcal{S}(M)(\zL)=\mathcal{Y}(M)(\zL)=\op{Hom}_{\z\tt Man}\big(\R^{0|\ul m},M\big).
\]

On the other hand, the Yoneda functor of points of the category $\z\tt CarVec$ is the fully faithful embedding
\[
\ul\bullet\colon\ \z{\tt CarVec}\ni\mbf{R}^{p|\ul q}\mapsto \ul{\mbf{R}}^{p|\ul q}:=\op{Hom}_{\z\tt Vec}\big({-},\mbf{R}^{p|\ul q}\big)\in{\tt Fun}\big(\z{\tt CarVec}^{\op{op}},\tt Set\big).
\]
The value of this functor on $\mbf{R}^{0|\ul m}\simeq\R^{0|\ul m}\simeq\zL$, is the {\it subset}
\begin{align}
\ul{\mbf{R}}^{p|\ul q}(\zL)&=\op{Hom}_{\z\tt Vec}\big(\mbf{R}^{0|\ul m},\mbf{R}^{p|\ul q}\big)\simeq\op{Hom}_{\z\tt CarMan}\big(\R^{0|\ul m},\R^{p|\ul q}\big)\nonumber
\\
&\subset\R^{p|\ul q}(\zL)=\mathcal{S}\big(\R^{p|\ul q}\big)(\zL)
=\op{Hom}_{\z\tt Man}\big(\R^{0|\ul m},\R^{p|\ul q}\big) \nonumber \\
&\simeq\bigoplus_{i=0}^{N}\bigoplus_{k=1}^{q_i}\cO_{\R^{0|\ul m},\zg_i}(\{\star\})=\bigoplus_{i=0}^{N}\bigoplus_{k=1}^{q_i}\zL_{\zg_i} =\bigoplus_{i=0}^{N}\bigoplus_{k=1}^{q_i}\zL_{\zg_i}\0\mbf{R} e^i_k =\bigoplus_{i=0}^{N}\;\zL_{\zg_i}\!\0\bigoplus_{k=1}^{q_i}\mbf{R} e^i_k\nonumber
\\
&=
\big(\zL\0\mbf{R}^{p|\ul q}\big)_0=\cF\big(\mbf{R}^{p|\ul q}\big)(\zL)=\mbf{R}^{p|\ul q}(\zL)\in\zL_0\tt Mod.
\label{FunPtsYon1}
\end{align}
More precisely,
\begin{gather}\label{FunPtsYon2}
\ul{\mbf{R}}^{p|\ul q}(\zL)=\op{Hom}_{\z\tt Vec}\big(\mbf{R}^{0|\ul m},\mbf{R}^{p|\ul q}\big)\simeq \bigoplus_{i=0}^{N}\bigoplus_{k=1}^{q_i}\cO^{\op{lin}}_{\R^{0|\ul m},\zg_i}(\{\star\})=\big(\zL^{\op{lin}}\0\mbf{R}^{p|\ul q}\big)_0\in\tt Set.
\end{gather}
Remark that, if we denote the coordinates of $\R^{p|\ul q}$ compactly by $(u^{\mathfrak a})$, the bijection in equation~\eqref{FunPtsYon2} sends a degree 0 linear map $\mbf{L}$ to the linear pullbacks $L^*(u^{\mathfrak a})$ of the corresponding $\z$-morphism $L=\cM(\mbf{L})$.

\begin{Remark}
If we restrict the functor $\ul{\mbf{R}}^{p|\ul q}$ $\big($resp., the functor $\cF\big(\mbf{R}^{p|\ul q}\big)\big)$ from $\z{\tt CarVec}^{\op{op}}\simeq\z{\tt CarMan}^{\op{op}}$ (resp., from $\z\tt Pts^{\op{op}}$) to the joint subcategory $\z\tt Car Pts^{\op{op}}$ of $\z$-points and $\z$-morphisms with linear coordinate pullbacks, the restricted Hom functor $\ul{\mbf{R}}^{p|\ul q}$ is actually a subfunctor of the restricted tensor product functor $\cF\big(\mbf{R}^{p|\ul q}\big)$. This observation clarifies the relationship between the fully faithful ``functor of points'' $\cF(\bullet)(-)=\bullet(-)$ of the full subcategory $\z{\tt CarVec}\subset\z\tt Vec$ and its standard fully faithful Yoneda functor of points $\ul\bullet(-)$.
\end{Remark}

Indeed, we observed already that the values of the Hom functor on $\Z_2^n$-points are subsets of the values of the tensor product functor. Further, on morphisms, the values of $\op{Hom}\big({-},\mbf{R}^{p|\ul q}\big)$ are restrictions of the values of $\big({-}\0\mbf{R}^{p|\ul q}\big)_0$. Indeed, if
\[
\op{Hom}_{\z\tt CarPts}\big(\R^{0|\ul n},\R^{0|\ul m}\big)\ni L\simeq \mathcal{V}(L)=\mbf{L}\in\op{Hom}_{\z\tt Vec}\big(\mbf{R}^{0|\ul n},\mbf{R}^{0|\ul m}\big),
\]
the morphisms $\ul{\mbf{R}}^{p|\ul q}(\mbf{L})$ and $\cF\big(\mbf{R}^{p|\ul q}\big)(L)$ are defined on $\ul{\mbf{R}}^{p|\ul q}(\zL)$ and its {\it supset} $\,\cF\big(\mbf{R}^{p|\ul q}\big)(\zL)$, respectively. When interpreting an element $\mbf{K}$ of the first as an element of the second, we use the identifications
\[
\mbf{K}\simeq K\simeq (K^*(u^{\mathfrak a}))_{\mathfrak a}\in\bigoplus_{i=0}^N\bigoplus_{k=1}^{q_i}\zL_{\zg_i}.
\]
Similar identifications are of course required when $\zL$ is replaced by $\zL'$. We~thus get
\[
\ul{\mbf{R}}^{p|\ul q}(\mbf{L})(\mbf{K})=\op{Hom}_{\z\tt Vec}\big(\mbf{L},\mbf{R}^{p|\ul q}\big)(\mbf{K})=\mbf{K}\circ\mbf{L}\simeq K\circ L\simeq \big(L^*(K^*(u^{\mathfrak a}))\big)_{\mathfrak a}.
\]
On the other hand, we have
\[
\cF\big(\mbf{R}^{p|\ul q}\big)(L^*)\big((K^*(u^{\mathfrak{a}}))_{\mathfrak{a}}\big) =\big(L^*(K^*(u^{\mathfrak{a}}))\big)_{\mathfrak{a}},
\]
since
\[
\big(L^*\0\Id_{\mbf{R}^{p|\ul q}}\big)_0\simeq \bigoplus_{i=0}^N\bigoplus_{k=1}^{q_i}L^*,
\]
when read through the isomorphism
\[
\bigoplus_{i=0}^N\;\zL_{\zg_i}\!\0\bigoplus_{k=1}^{q_i}\mbf{R}\,e^i_k\simeq \bigoplus_{i=0}^N\bigoplus_{k=1}^{q_i}\zL_{\zg_i}.
\]
This completes the proof of the subfunctor-statement.

\subsection[Finite dimensional Z2n-graded vector spaces and linear Z2n-manifolds]
{Finite dimensional $\boldsymbol{\Z_2^n}$-graded vector spaces and linear $\boldsymbol{\Z_2^n}$-manifolds}

In this subsection, we extend equivalence~\eqref{FunEquivCar} in a coordinate-free way.

\newcommand{\bbN}{\mathbb{N}}

\subsubsection[Finite dimensional Z2n-graded vector spaces]
{Finite dimensional $\boldsymbol{\Z_2^n}$-graded vector spaces}

We focus on the full subcategory $\Z_2^n{\tt FinVec}\subset\Z_2^n\tt Vec$ of finite dimensional $\Z_2^n$-graded vector spaces, i.e., of $\Z_2^n$-vector spaces $V$ of finite dimension
\[
p|\ul q\qquad \big(p\in\bbN,\, \ul{q}=(q_1,\dots,q_N)\in\bbN^{\times N}\big).
\]
Clearly
\[
\Z_2^n{\tt CarVec}\subset\Z_2^n \tt FinVec
\]
is a full subcategory.

Above, we already used the canonical basis of $\mbf{R}^{p|\ul q}$, i.e., the basis
\[
e^i_k=\,^t(0\dots 0;\dots;0\dots 1\dots 0;\dots;0\dots 0),
\]
where $1$ sits in position $k$ of block $i$. If
\[
(b^i_k)_{i,k}\qquad (i\in I=\{0,\dots,N\},\, k\in K_i=\{1,\dots,q_i\},\, \deg(b^i_k)=\zg_i\in\Z_2^n)
\]
is a basis of $V$, the degree respecting linear map
\begin{gather*}
b\colon\ V\ni v=\sum_{i,k}v^k_ib^i_k\mapsto \sum_{i,k}v^k_ie^i_k=\,^t\big(v_0^1,\dots,v_N^{q_N}\big)=:v^I\in\mbf{R}^{p|\ul q}
\end{gather*}
maps a basis to a basis and is thus an isomorphism of $\Z_2^n$-vector spaces.

We already discussed extensively the functor of points $\cF= \cF(\bullet)(-)=\bullet(-)$ of $\z\tt Vec$. Since $\z\tt FinVec$ is a full subcategory of $\z\tt Vec$, the functor $\cF$ remains fully faithful when restricted to $\z\tt FinVec$:

\begin{Proposition}\label{FunPtsFinVecFF}
The functor of points $\cF\colon \z{\tt FinVec}\to{\tt Fun}_0\big(\z{\tt Pts}^{\op{op}},{\tt AMod}\big)$ of the category $\z{\tt FinVec}$ is fully faithful.
\end{Proposition}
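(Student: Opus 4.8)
The statement is an immediate consequence of Theorem~\ref{trm:FFullyFaith} together with the general categorical fact that a fully faithful functor restricts to a fully faithful functor on any full subcategory. Concretely, $\Z_2^n{\tt FinVec}$ is by definition a full subcategory of $\Z_2^n\tt Vec$, meaning that for $V, W \in \Z_2^n{\tt FinVec}$ we have $\Hom_{\Z_2^n{\tt FinVec}}(V,W) = \Hom_{\Z_2^n{\tt Vec}}(V,W)$. Since the functor of points $\cF$ on $\Z_2^n{\tt FinVec}$ is literally the restriction of the functor $\cF$ on $\Z_2^n\tt Vec$, the map
\[
\cF_{V,W}\colon\ \Hom_{\Z_2^n{\tt FinVec}}(V,W)\longrightarrow \Hom_{{\tt Fun}_0(\Z_2^n{\tt Pts}^{\op{op}},{\tt AMod})}\big(\cF(V),\cF(W)\big)
\]
coincides with the map $\cF_{V,W}$ of Theorem~\ref{trm:FFullyFaith}, which is a bijection. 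Hence $\cF$ remains fully faithful on the subcategory.

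**Key steps, in order.** First, observe that $\Z_2^n{\tt FinVec} \hookrightarrow \Z_2^n\tt Vec$ is a full embedding, so Hom-sets are unchanged. Second, note that the definition of $\cF$ on objects and morphisms given before Definition~\ref{def:Vfun} makes no use of finite-dimensionality: for finite-dimensional $V$, the functor $\cF(V) = V(-)$ and the natural transformation $\cF(\phi) = \phi_-$ are exactly the same data as when $V$ is regarded as a general object of $\Z_2^n\tt Vec$. Third, invoke Theorem~\ref{trm:FFullyFaith}: for every pair $V,W$ of $\Z_2^n$-graded vector spaces — in particular for finite-dimensional ones — the map $\cF_{V,W}$ is a bijection. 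Combining these three observations gives the claim.

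**Main obstacle.** There is essentially no obstacle here; the only thing to check is that the word ``full'' in ``full subcategory'' is really being used, which it is ($\Z_2^n{\tt FinVec}$ is explicitly declared a full subcategory in the text preceding the statement). The one-line argument ``a fully faithful functor restricts to a fully faithful functor on a full subcategory'' is the entire content. The point of stating it separately as a proposition is merely to record the fact for later use (e.g., in the coordinate-free comparison with linear $\Z_2^n$-manifolds that the subsection is building toward), not because it requires any new work beyond Theorem~\ref{trm:FFullyFaith}.
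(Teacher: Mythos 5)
Your argument is exactly the one the paper uses: the text preceding the proposition observes that $\Z_2^n{\tt FinVec}$ is a full subcategory of $\Z_2^n{\tt Vec}$ and that $\mathcal{F}$, already shown fully faithful on $\Z_2^n{\tt Vec}$ in Theorem~\ref{trm:FFullyFaith}, therefore remains fully faithful when restricted. Nothing is missing; the proposal matches the paper's (essentially one-line) justification.
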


\begin{Remark} Later on, we consider linear $\Z_2^n$-manifolds and denote them sometimes using the same letter $V$ as for $\Z_2^n$-vector spaces. We~often disambiguate the concept considered by writing~$\mbf{V}$ in the vector space case.
\end{Remark}

\newcommand{\sfl}{\textsf{L}}

\subsubsection[Linear Z2n-manifolds]{Linear $\boldsymbol{\Z_2^n}$-manifolds}

In this subsection, we investigate the category of linear $\z$-manifolds, linear $\z$-functions of its objects, as well as its functor of points.

\medskip\noindent
{\bf Linear $\boldsymbol{\z}$-manifolds and their morphisms.} A $\Z_2^n$-manifold of dimension $p|\ul q$ is a locally $\Z^n_2$-ringed space $M := \big(|M|, \cO_M\big)$ that is locally isomorphic to $\R^{p|\ul q}$.
\begin{Definition}
A {\it linear $\Z_2^n$-manifold} of dimension $p|\ul q$ is a locally $\Z_2^n$-ringed space $\textsf{L}=\big(|\textsf{L}|,\cO_\sfl\big)$ that is globally isomorphic to $\R^{p|\ul{q}}$, i.e., it is a $\Z_2^n$-manifold such that there exists a $\Z_2^n$-diffeomorphism
\[
\rmh \colon\ \textsf{L} \longrightarrow \R^{p|\ul{q}}.
\]
The diffeomorphism $\rmh$ is referred to as a \emph{linear coordinate map} or a \emph{linear one-chart-atlas}.
\end{Definition}

We now mimic classical differential geometry and say that two linear one-chart-atlases are linearly compatible, if their union is a ``linear two-chart-atlas''. In other words:

\begin{Definition}
Two linear coordinate maps $\rmh_1, \rmh_2 \colon \sfl \rightarrow \R^{p|\ul{q}}$ are said to be \emph{linearly compa\-tible}, if the $\z$-morphisms
\[
h_2\circ h_1^{-1},\,\rmh_1 \circ \rmh^{-1}_2 \colon\ \R^{p|\ul{q}} \longrightarrow \R^{p|\ul{q}}
\]
have linear coordinate pullbacks, i.e., if they are $\z\tt CarMan$-morphisms.
\end{Definition}

Linear compatibility is an equivalence relation on linear one-chart-atlases. There is a \mbox{$1:1$} correspondence between equivalence classes of linear one-chart-atlases and maximal linear atla\-ses, i.e., the unions of all linear one-chart-atlases of an equivalence class. For simplicity, we refer to a maximal linear atlas as a \emph{linear atlas}.

Just as a classical smooth manifold is a set that admits an atlas, or, better, a set endowed with an equivalence class of atlases, a linear $\Z_2^n$-manifold is a locally $\z$-ringed space $\sfl$ equipped with a linear atlas $(\sfl,\rmh_\za)_\za$.

We continue working in analogy with differential geometry and define a linear $\Z_2^n$-morphism between linear $\z$-manifolds as a locally $\Z_2^n$-ringed space morphism, or, equivalently, a $\z$-morphism, with linear coordinate form:
\begin{Definition}\label{def:LinMorpTen}
Let $\sfl$ and $\sfl'$ be two linear $\z$-manifolds of dimension $p|\ul q$ and $r|\ul s$, respectively. A $\Z_2^n$-morphism $\zvf\colon \sfl\to \sfl'$ is {\it linear}, if there exist linear coordinate maps
\[
\rmh\colon\ \sfl\to\R^{p|\ul q}\qquad\text{and}\qquad\rmk\colon\ \sfl'\to\R^{r|\ul s}
\]
in the linear atlases of $\sfl$ and $\sfl'$, such that the $\z$-morphism
\[
\rmk \circ \phi \circ \rmh^{-1}\colon\ \mathbb{R}^{p|\ul{q}} \rightarrow \mathbb{R}^{r|\mathbf{s}}
\]
has linear coordinate pullbacks.
\end{Definition}

It follows that any linear coordinate map $\rm h$ of the linear atlas of a linear $\z$-manifold $\sf L$, {\it is}~a~{\it linear} $\z$-morphism between the linear $\z$-manifolds $\sf L$ and $\R^{p|\ul q}$. This justifies the name ``linear coordinate map''. Further, the inverse ${\rm h}^{-1}$ of $\rm h$ is a linear $\z$-morphism.

\begin{Proposition}\label{prop:IndCoordMap}
If $\zvf\colon \sf L\to \sf L'$ is a linear $\z$-morphism, then, for any linear coordinate maps $(\sf L,\rmh')$ and $(\sf L',\rmk')$ of the linear atlases of $\sf L$ and $\sf L'$, respectively, the $\z$-morphism $\rmk'\circ\zvf\circ\rmh'^{-1}$ has linear coordinate pullbacks.
\end{Proposition}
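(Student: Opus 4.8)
The strategy is to reduce the claim to the already-established fact (Proposition~\ref{IsoCatCar}, or rather the definition of linear $\z$-morphism) that a $\z$-morphism between Cartesian $\z$-manifolds has linear coordinate pullbacks precisely when it lies in $\z\tt CarMan$, together with the fact that $\z\tt CarMan$-morphisms are closed under composition. By Definition~\ref{def:LinMorpTen}, there exist \emph{some} linear coordinate maps $\rmh\colon \sf L\to\R^{p|\ul q}$ and $\rmk\colon \sf L'\to\R^{r|\ul s}$ in the respective linear atlases such that $\rmk\circ\zvf\circ\rmh^{-1}$ has linear coordinate pullbacks, i.e.\ is a $\z\tt CarMan$-morphism. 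Given arbitrary other linear coordinate maps $\rmh'$ and $\rmk'$ in the same linear atlases, I want to show $\rmk'\circ\zvf\circ\rmh'^{-1}$ is again a $\z\tt CarMan$-morphism.

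\textbf{Key steps.} First, I would write the obvious factorisation
\[
\rmk'\circ\zvf\circ\rmh'^{-1}=\big(\rmk'\circ\rmk^{-1}\big)\circ\big(\rmk\circ\zvf\circ\rmh^{-1}\big)\circ\big(\rmh\circ\rmh'^{-1}\big),
\]
all three factors being $\z$-morphisms $\R^{\bullet|\ul\bullet}\to\R^{\bullet|\ul\bullet}$ between Cartesian $\z$-manifolds. Second, I invoke the definition of a linear atlas: since $\rmh$ and $\rmh'$ both belong to the maximal linear atlas of $\sf L$, they are linearly compatible, so by the definition of linear compatibility the transition morphism $\rmh\circ\rmh'^{-1}\colon\R^{p|\ul q}\to\R^{p|\ul q}$ is a $\z\tt CarMan$-morphism; likewise $\rmk'\circ\rmk^{-1}\colon\R^{r|\ul s}\to\R^{r|\ul s}$ is a $\z\tt CarMan$-morphism. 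The middle factor is a $\z\tt CarMan$-morphism by hypothesis. Third, I use that $\z\tt CarMan$ is a (sub)category, hence closed under composition: the composite of three morphisms with linear coordinate pullbacks again has linear coordinate pullbacks. Therefore $\rmk'\circ\zvf\circ\rmh'^{-1}$ is a $\z\tt CarMan$-morphism, which is exactly the assertion.

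\textbf{Main obstacle.} The only nontrivial point — and the reason this is worth stating as a proposition rather than treating it as trivial — is the verification that $\z\tt CarMan$ really is closed under composition, i.e.\ that composing two $\z$-morphisms whose coordinate pullbacks are global linear functions of the appropriate degree yields a $\z$-morphism with the same property. Concretely, if $L^*(u^{\mathfrak a})$ are linear in the source coordinates and $M^*(w^{\mathfrak b})$ are linear in the $u$'s, then $(M\circ L)^*(w^{\mathfrak b})=L^*(M^*(w^{\mathfrak b}))$ is a linear combination of the $L^*(u^{\mathfrak a})$, hence again linear; one also checks degrees are respected since all the pullback maps are degree $0$ as algebra morphisms and map homogeneous linear functions to homogeneous linear functions of the same degree. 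This was already recorded when $\cM$ was shown to respect composition via $^t\mbf L\circ{}^t\mbf M$, so strictly speaking nothing new is needed; the proposition is really just the bookkeeping observation that $\z\tt CarMan$ is a category and that linear atlases are built so that transitions stay inside it. The remaining steps are routine diagram chasing.
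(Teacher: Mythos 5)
Your proof is correct and follows essentially the same route as the paper: the same three-factor decomposition $\rmk'\circ\zvf\circ\rmh'^{-1}=(\rmk'\circ\rmk^{-1})\circ(\rmk\circ\zvf\circ\rmh^{-1})\circ(\rmh\circ\rmh'^{-1})$, with each factor having linear pullbacks by compatibility of charts and the definition of a linear morphism, and the composite inheriting this by closure under composition. Your extra remark that closure under composition was already implicit in showing $\cM$ respects composition is accurate and matches the paper's (very brief) argument.
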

\begin{proof}
We use the notations of Definition~\ref{def:LinMorpTen} and Proposition~\ref{prop:IndCoordMap}. Since
\[
\rmk'\circ\zvf\circ\rmh'^{-1}=\big(\rmk'\circ\rmk^{-1}\big)\circ \big(\rmk\circ \zvf\circ \rmh^{-1}\big)\circ\big(\rmh\circ\rmh'^{-1}\big)
\]
and each parenthesis of the {\small RHS} has linear pullbacks, their composite has linear pullbacks as~well.
\end{proof}

{\sloppy\begin{Proposition} Linear $\Z_2^n$-manifolds and linear $\z$-morphisms form a subcategory $\Z_2^n\catname{LinMan}\subset\z\tt Man$ of the category of $\z$-manifolds. Further, Cartesian $\z$-manifolds and $\z$-morphisms with linear coordinate pullbacks form a full subcategory $\z{\tt CarMan}\subset\z{\tt LinMan}$.
\end{Proposition}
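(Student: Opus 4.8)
The plan is to verify the axioms of a subcategory directly, using the two preceding propositions as the essential ingredients. First I would check that linear $\z$-morphisms are closed under composition. Given linear $\z$-morphisms $\zvf\colon \sfl\to\sfl'$ and $\zvy\colon\sfl'\to\sfl''$, I choose linear coordinate maps $\rmh$, $\rmk$, $\rml$ in the respective linear atlases. By Proposition~\ref{prop:IndCoordMap}, applied to $\zvf$ with the coordinate maps $\rmh$ and $\rmk$, the $\z$-morphism $\rmk\circ\zvf\circ\rmh^{-1}$ has linear coordinate pullbacks; likewise $\rml\circ\zvy\circ\rmk^{-1}$ does. Then
\[
\rml\circ(\zvy\circ\zvf)\circ\rmh^{-1}=\big(\rml\circ\zvy\circ\rmk^{-1}\big)\circ\big(\rmk\circ\zvf\circ\rmh^{-1}\big)
\]
is a composite of two $\z\tt CarMan$-morphisms, hence is a $\z\tt CarMan$-morphism (composition of linear coordinate pullbacks is again linear, exactly as used in the proof of Proposition~\ref{prop:IndCoordMap} and in the construction of the functor $\cM$). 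Thus $\zvy\circ\zvf$ is a linear $\z$-morphism.

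Next I would check that the identity $\z$-morphism $\op{id}_\sfl$ of a linear $\z$-manifold $\sfl$ is linear: picking any linear coordinate map $\rmh$ in the atlas of $\sfl$, the morphism $\rmh\circ\op{id}_\sfl\circ\rmh^{-1}=\op{id}_{\R^{p|\ul q}}$ obviously has linear (indeed identity) coordinate pullbacks. Together with closure under composition, this shows that $\z\tt LinMan$ is a subcategory of $\z\tt Man$. For the second assertion, I observe that a Cartesian $\z$-manifold $\R^{p|\ul q}$ is in particular a linear $\z$-manifold (it is globally isomorphic to itself via the identity), and that a $\z$-morphism $\R^{p|\ul q}\to\R^{r|\ul s}$ with linear coordinate pullbacks is, by definition, a linear $\z$-morphism when one takes $\rmh=\op{id}$ and $\rmk=\op{id}$ as linear coordinate maps. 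To see fullness, I note that if $\sfl=\R^{p|\ul q}$ and $\sfl'=\R^{r|\ul s}$ are Cartesian and $\zvf\colon\sfl\to\sfl'$ is linear as a morphism of linear $\z$-manifolds, then Proposition~\ref{prop:IndCoordMap} with $\rmh'=\op{id}$ and $\rmk'=\op{id}$ forces $\zvf=\op{id}\circ\zvf\circ\op{id}^{-1}$ to have linear coordinate pullbacks; hence every $\z\tt LinMan$-morphism between Cartesian $\z$-manifolds is already a $\z\tt CarMan$-morphism. Therefore $\z\tt CarMan$ is a full subcategory of $\z\tt LinMan$.

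I do not expect a serious obstacle here: the content has essentially been front-loaded into Proposition~\ref{prop:IndCoordMap}, whose role is precisely to make the class of linear $\z$-morphisms independent of the chosen linear coordinate maps, so that closure under composition reduces to the already-established stability of linear coordinate pullbacks under composition (this is the same three-parenthesis argument as in the proof of Proposition~\ref{prop:IndCoordMap}, or equivalently the composition-respecting property of $\cM$ via transposes). The one point requiring a word of care is the fullness claim: one must make sure that ``linear $\z$-morphism between Cartesian $\z$-manifolds'' collapses to ``$\z$-morphism with linear coordinate pullbacks'', and this is exactly the $\rmh'=\rmk'=\op{id}$ instance of Proposition~\ref{prop:IndCoordMap}. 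I would present the composition argument in full (a two-line display) and dispatch identities, the inclusion of objects, and fullness in one or two sentences each.
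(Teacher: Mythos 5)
Your proposal is correct and follows essentially the same route as the paper: closure under composition via Proposition~\ref{prop:IndCoordMap} and the stability of linear coordinate pullbacks under composition, identities via $\rmh\circ\id_\sfl\circ\rmh^{-1}=\id_{\R^{p|\ul q}}$, and the second claim by taking the identity charts. The only difference is that you spell out the fullness of $\z{\tt CarMan}\subset\z{\tt LinMan}$ (the $\rmh'=\rmk'=\op{id}$ instance of Proposition~\ref{prop:IndCoordMap}), which the paper dismisses as obvious; that extra detail is accurate and harmless.
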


}

\begin{proof}
If $\zvf\colon \sfl\to \sfl'$ and $\psi\colon \sfl'\to \sfl''$ are linear $\z$-morphisms, the composite $\z$-morphism is linear as well. Indeed, if $\rmk\circ\zvf\circ\rmh^{-1}$ and $\rmq\circ\psi\circ\rmp^{-1}$ have linear pullbacks, then $\rmq\circ\psi\circ\rmk^{-1}$ has linear pullbacks and so has $\rmq\circ(\psi\circ \zvf)\circ\rmh^{-1}$. Further, for any linear $\z$-manifold $\sfl$, the $\z$-identity map $\id_\sfl$ is linear, as for any linear coordinate map $\rmh$, we have $\rmh\circ\id_\sfl\circ\rmh^{-1}=\id_{\R^{p|\ul q}}$. The second statement is obvious.\end{proof}

\medskip\noindent
{\bf Sheaf of linear $\boldsymbol{\z}$-functions.}

\begin{Definition}
Let $\textsf{L}\in\z\tt LinMan$ be of dimension $p|\ul q$ and let $|U|\subset|\sfl|$ be open. A $\z$-function $f\in\cO_\sfl(|U|)$ is a {\it linear $\z$-function}, if there exists a linear coordinate map $\rmh\colon \sfl\to \R^{p|\ul q}$, such that
\[
(\rmh^*)^{-1}(f)\in\cO^{\op{lin}}_{\R^{p|\ul q}}\big(|\rmh|(|U|)\big).
\]
We denote the subset of all linear $\z$-functions of $\cO_\sfl(|U|)$ by $\cO_\sfl^{\op{lin}}(|U|)$. \end{Definition}

The subset $\cO_{\R^{p|\ul q}}^{\op{lin}}\big(|\rmh|(|U|)\big)$ is defined in the obvious way. If $f\in\cO_\sfl^{\op{lin}}(|U|)$, then for any chart $(\sfl,\rmh')$ of the linear atlas of $\sfl$, we have
\[
(\rmh'^*)^{-1}(f)\in\cO^{\op{lin}}_{\R^{p|\ul q}}\big(|\rmh'|(|U|)\big).
\]
This follows from the equation
\[
(\rmh'^*)^{-1}(f)=\big(\rmh\circ\rmh'^{-1}\big)^*\big((\rmh^*)^{-1}(f)\big)
\]
and the compatibility of the two charts.

As $\cO_{\sfl}^{\op{lin}}(|U|)\subset\cO_{\sfl}(|U|)$ is visibly closed for linear combinations, it is a vector subspace of~$\cO_\sfl(|U|)$. Hence, the intersection
\[
\cO_{\sfl,\zg_i}^{\op{lin}}(|U|):=\cO_{\sfl}^{\op{lin}}(|U|)\cap\cO_{\sfl,\zg_i}(|U|)\subset\cO_\sfl(|U|)
\]
is also a vector subspace. We~thus get vector subspaces $\cO^{\op{lin}}_{\sfl,\zg_i}(|U|)\subset\cO^{\op{lin}}_\sfl(|U|)$, so their direct sum over $i$ is a vector subspace as well. Since any $f\in\cO^{\op{lin}}_\sfl(|U|)$ reads uniquely as
\[
f=\sum_{i=0}^Nf_i\qquad \big(f_i\in\cO_{\sfl,\zg_i}(|U|)\big),
\]
we get
\[
(\rmh^*)^{-1}f_0+\sum_j(\rmh^*)^{-1}f_j=(\rmh^*)^{-1}f=\sum_\ell r_\ell\, x^\ell+\sum_j\sum_\ell r^j_\ell\,\zx^\ell_j.
\]
As $(\rmh^*)^{-1}$ is $\z$-degree preserving, we find that $f_i\in\cO^{\op{lin}}_{\sfl,\zg_i}(|U|)$, so that
\[
\cO^{\op{lin}}_\sfl(|U|)=\bigoplus_i\cO^{\op{lin}}_{\sfl,\zg_i}(|U|)\in\z\tt Vec.
\]

\begin{Remark}\label{RemLinFct} Observe that:
\begin{enumerate}\itemsep=0pt
\item[$(i)$] For any open subset $|U|\subset|\sfl|$ and any linear coordinate map $\rmh\colon \sfl\to\R^{p|\ul q}$, the map
\[
\rmh^*\colon\ \cO^{\op{lin}}_{\R^{p|\ul q}}(|h|(|U|))\to \cO^{\op{lin}}_\sfl(|U|)
\]
is an isomorphism of $\z$-vector spaces of dimension $p|\ul q$.
\item[$(ii)$] The restriction maps and the gluing property of $\cO_\sfl$ endow $\cO^{\op{lin}}_\sfl$ with a sheaf of $\z$-vector spaces structure.
\item[$(iii)$] A $\z$-morphism $\zvf\colon \sfl\to\sfl'$ between linear $\z$-manifolds is itself linear, if and only if $\zvf^*$ is a degree respecting linear map
\[
\zvf^*\colon\ \cO^{\op{lin}}_{\sfl'}(|\sfl'|)\to\cO^{\op{lin}}_{\sfl}(|\sfl|).
\]
\end{enumerate}\end{Remark}

It is straightforward to check the first two statements. For the third one, let $\sfl$ (resp., $\sfl'$) be of dimension $p|\ul q$ (resp., $r|\ul s$) and denote the coordinates of the corresponding Cartesian $\z$-manifold by $u^{\mathfrak a}=\big(x^a,\zx^A\big)$ $\big($resp., $v^{\mathfrak b}=\big(y^b,\zh^B\big)\big)$. The morphism $\zvf$ is linear, if and only if there exist linear coordinate maps $(\sfl,\rmh)$ and $(\sfl',\rmk)$, such that $\rmk\circ\zvf\circ\,\rmh^{-1}$ has linear coordinate pullbacks, i.e., such that
\begin{gather}\label{LinA}
\big((\rmh^*)^{-1}\circ\zvf^*\circ\rmk^*\big)(v^{\mathfrak b})\in\cO^{\op{lin}}_{\R^{p|\ul q}}(\R^p).
\end{gather}
On the other hand, in view of the first item of the previous remark, the condition
\[
\zvf^*\big(\cO^{\op{lin}}_{\sfl'}(|\sfl'|)\big)\subset\cO^{\op{lin}}_{\sfl}(|\sfl|)
\]
of the third item is equivalent to asking that
\begin{gather}\label{LinB}
(\zvf^*\circ\rmk^*)\bigg(\sum_{\mathfrak{b}}r_{\mathfrak{b}}v^{\mathfrak{b}}\bigg)\in \rmh^*\big(\cO^{\op{lin}}_{\R^{p|\ul q}}(\R^p)\big).
\end{gather}
The conditions~\eqref{LinA} and~\eqref{LinB} are visibly equivalent.

\medskip\noindent
{\bf Functor of points of $\boldsymbol{\z{\tt LinMan}}$.} We~start with the following

\begin{Proposition}\label{FL0Mod}
For any linear $\z$-manifold $\sf{L}$ $($of dimension $p|\ul q)$ and any $\z$-Grassmann algebra $\zL\simeq \R^{0|\underline{m}}$, the set
\[
{\sf{L}}(\zL):=\op{Hom}_{\z {\tt Man}}\big(\R^{0|\underline{m}},{\sf L}\big)\simeq \op{Hom}_{\z\tt Alg}(\cO_{\sf L}(|{\sf L}|),\zL)
\]
of $\zL$-points of $\,\sf L$ admits a unique Fr\'echet $\Lambda_0$-module structure, such that, for any chart ${\rm h}\colon {\sf L}\to \R^{p|\ul q}$ of the linear atlas of $\,\sf{L}$, the induced map
\[
{\rm h}_\zL\colon\ {\sf{L}}(\zL)\ni {\rm x}^*\mapsto {\rm x}^*\circ{\rm h}^*\in\R^{p|\ul q}(\zL)
\]
is a Fr\'echet $\zL_0$-module isomorphism.
\end{Proposition}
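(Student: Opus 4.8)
The plan is to transport the known Fréchet $\zL_0$-module structure on $\R^{p|\ul q}(\zL)$ to $\sf L(\zL)$ along a chart, and then show the result is independent of the chart chosen. First I would fix a linear coordinate map ${\rm h}\colon {\sf L}\to\R^{p|\ul q}$ from the linear atlas of $\sf L$. Precomposition with ${\rm h}^*$ gives a bijection ${\rm h}_\zL\colon {\sf L}(\zL)\to\R^{p|\ul q}(\zL)$ because ${\rm h}^*$ is an isomorphism of $\z$-commutative (Fréchet) algebras and hence ${\rm h}^*\mapsto(-)\circ{\rm h}^*$ is invertible on the relevant Hom-sets (its inverse is precomposition with $({\rm h}^{-1})^*=({\rm h}^*)^{-1}$). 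Now $\R^{p|\ul q}(\zL)$ already carries a Fréchet $\zL_0$-module structure: by equation~\eqref{FunPtsYon1} it is identified with $\big(\zL\0\mbf{R}^{p|\ul q}\big)_0=\cF\big(\mbf{R}^{p|\ul q}\big)(\zL)\in\zL_0\tt Mod$, and the topology is the one coming from the (nuclear) Fréchet algebra $\zL$ via the Schwarz--Voronov embedding $\mathcal S$ of~\eqref{SVYoneda}, under which $\R^{p|\ul q}(\zL)=\mathcal S\big(\R^{p|\ul q}\big)(\zL)$ is a Fréchet $\zL_0$-manifold that is in fact a free $\zL_0$-module of finite rank $p|\ul q$. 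Pulling this structure back through ${\rm h}_\zL$ defines a Fréchet $\zL_0$-module structure on ${\sf L}(\zL)$ for which ${\rm h}_\zL$ is, by construction, a Fréchet $\zL_0$-module isomorphism.

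Next I would verify that this structure does not depend on the chosen chart, which is what makes ``unique'' meaningful. Given a second chart ${\rm h}'\colon{\sf L}\to\R^{p|\ul q}$ of the linear atlas, the comparison map is
\[
{\rm h}'_\zL\circ({\rm h}_\zL)^{-1}\colon\ \R^{p|\ul q}(\zL)\to\R^{p|\ul q}(\zL),\qquad {\rm x}^*\mapsto {\rm x}^*\circ\big({\rm h}'\circ{\rm h}^{-1}\big)^*,
\]
which is exactly the value at $\zL$ of $\mathcal S$ applied to the transition $\z$-morphism ${\rm h}'\circ{\rm h}^{-1}\colon\R^{p|\ul q}\to\R^{p|\ul q}$. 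Since ${\rm h}$ and ${\rm h}'$ are linearly compatible, ${\rm h}'\circ{\rm h}^{-1}$ is a $\z\tt CarMan$-morphism, i.e.\ it is $\cM(\mbf L)$ for an invertible degree $0$ linear map $\mbf L$ (invertible because ${\rm h}\circ{\rm h}'^{-1}$ is its two-sided inverse); under the identification~\eqref{FunPtsYon1}, $\mathcal S\big(\cM(\mbf L)\big)(\zL)$ acts as $\big(\Id_\zL\0\mbf L\big)_0=\cF(\mbf L)_\zL$, which is $\zL_0$-linear and a $\zL_0$-module isomorphism, and it is continuous (indeed a Fréchet isomorphism) because it is a finite $\zL_0$-linear combination of the coordinates with constant coefficients. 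Hence the two pulled-back module structures on ${\sf L}(\zL)$ coincide, and both ${\rm h}_\zL$ and ${\rm h}'_\zL$ are Fréchet $\zL_0$-module isomorphisms onto the same target. Uniqueness is then immediate: any $\zL_0$-module structure on ${\sf L}(\zL)$ making some ${\rm h}_\zL$ an isomorphism must be the one transported from $\R^{p|\ul q}(\zL)$ along that ${\rm h}_\zL$, and we have just shown this is chart-independent.

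The step I expect to be the main obstacle is the careful identification, at the level of the restricted functor category ${\tt Fun}_0\big(\z{\tt Pts}^{\op{op}},\tt A(N)FM\big)$, of $\mathcal S\big(\R^{p|\ul q}\big)(\zL)$ with the finite free Fréchet $\zL_0$-module $\big(\zL\0\mbf{R}^{p|\ul q}\big)_0$ together with the claim that linear $\z$-morphisms between Cartesian $\z$-manifolds induce \emph{continuous} $\zL_0$-linear maps on $\zL$-points; one must check that the $\mathcal J$-adic/Fréchet topology on $\zL$ is the one that enters here and that the Schwarz--Voronov functor genuinely lands in $\zL_0$-manifolds that are, in the Cartesian case, topological $\zL_0$-modules. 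Most of this is recorded in~\eqref{FunPtsYon1}, \eqref{SVYoneda} and the discussion of $\tt A(N)FM$ above, so the argument is really an assembly of already-established facts rather than new analysis, but it does require being precise about which Fréchet structure is meant. The remaining verifications --- that ${\rm h}_\zL$ is additive and $\zL_0$-homogeneous once the structure is transported (automatic), and that the transition maps are $\zL_0$-linear homeomorphisms --- are routine given Proposition~\ref{IsoCatCar} and the explicit formulas~\eqref{Pull1}--\eqref{Pull2}.
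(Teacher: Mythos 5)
Your proposal is correct and follows essentially the same route as the paper: transport the component-wise Fr\'echet $\zL_0$-module structure of $\R^{p|\ul q}(\zL)$ to ${\sf L}(\zL)$ along a chart ${\rm h}_\zL$, check chart-independence, and deduce uniqueness. If anything, your justification of chart-independence --- identifying ${\rm h}'_\zL\circ{\rm h}_\zL^{-1}$ with the continuous $\zL_0$-linear map induced (via the explicit pullback formulas~\eqref{Pull1}--\eqref{Pull2}) by the invertible degree-zero linear map underlying the $\z{\tt CarMan}$-transition --- is spelled out more explicitly than in the paper, which at this point merely asserts that ${\rm k}_\zL^{-1}\circ{\rm h}_\zL$ is a $\zL_0$-module isomorphism.
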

The definition of the category $\tt FAMod$ of Fr\'echet modules over Fr\'echet algebras can be found in Appendix~\ref{app:CatModAlg}. In the preceding proposition, it is implicit that the (unital) Fr\'echet algebra morphism that is associated to ${\rm h}_\zL$ is $\id_{\zL_0}$.

\begin{proof}
Let $\Lambda \in \Z_2^n\catname{GrAlg}$. In view of the fundamental theorem of $\z$-morphisms, there is a~$1:1$ correspondence between the $\zL$-points $\rm{x}^*$ of $\R^{p|\ul{q}}$ and the $(p+|\ul{q}|)$-tuples
\[
\mathrm{x}^* \simeq \big( {x}^a_\Lambda , \, \zx^A_\Lambda\big)\in {\Lambda}_{0}^{\times p} \times \Lambda_{\gamma_1}^{\times q_1} \times \dots \times \Lambda_{\gamma_N}^{\times q_N}
\]
(we used this correspondence already in equation~\eqref{FunPtsYon1}). Indeed, the algebra $\zL$ is the $\Z_2^n$-commutative nuclear Fr\'echet $\R$-algebra of global $\Z_2^n$-functions of some $\R^{0|\ul{m}}$ (in particular, the degree zero term $\zL_0$ of $\zL$ is a commutative nuclear Fr\'echet algebra). Hence, all its homogeneous subspaces $\zL_{\zg_i}$ ($i\in\{0,\dots, N\}$, $\zg_0=0$) are nuclear Fr\'echet vector spaces. Since any product (resp., any countable product) of nuclear (resp., Fr\'echet) spaces is nuclear (resp., Fr\'echet), the set $\R^{p|\ul{q}}(\zL)$ of $\zL$-points of $\R^{p|\ul{q}}$ is a nuclear Fr\'echet space. The latter statements can be found in~\cite{Bruce:2018}. The Fr\'{e}chet $\Lambda_{0}$-module structure on $\R^{p|\ul{q}}(\zL)$ is then defined by
\begin{gather}\label{ActionCompWise}
 \triangleleft\colon\ \Lambda_{0} \times \R^{p|\ul{q}}(\Lambda)\ni (\mathrm{a}, \mathrm{x}^*)\mapsto {\rm a}\triangleleft{\rm x}^*:=\big(\mathrm{a} \cdot x_\Lambda^a, \, \mathrm{a} \cdot \zx_\Lambda^{A} \big)\in\R^{p|\ul{q}}(\Lambda) .
\end{gather}
Since this action (which is compatible with addition in $\zL_0$ and addition in $\R^{p|\ul q}(\zL)$) is defined using the continuous associative multiplication $\cdot\colon \zL_{\zg_i}\times\zL_{\zg_j}\to\zL_{\zg_i+\zg_j}$ of the Fr\'echet algebra $\zL$, it is (jointly) continuous.

We now define the $\zL_0$-module structure on ${\sf L}(\zL)$. Observe first that, for any chart map ${\rm h}\colon {\sf L}\rightleftarrows\R^{p|\ul q}\ {:}\, {\rm h}^{-1}$ of the linear atlas of $\sf L$, the induced maps ${\rm h}_\zL\colon {\sf L}(\zL)\rightleftarrows\R^{p|\ul q}(\zL)\ {:}\,\big({\rm h}^{-1}\big)_\zL$ are inverse maps: $\big({\rm h}^{-1}\big)_\zL=({\rm h}_\zL)^{-1}=:{\rm h}^{-1}_\zL$. For $K\in\mathbb{N}\setminus\{0\}$, $k\in\{1,\dots,K\}$, ${\rm a}^k\in\zL_0$, and ${\rm y}_k^*\in{\sf L}(\zL)$, we set
\[
\sum_k {\rm a}^k\star {\rm y}_k^*:={\rm h}_\zL^{-1}\bigg(\sum_k {\rm a}^k\triangleleft{\rm h}_\zL({\rm y}_k^*)\bigg)\in{\sf L}(\zL).
\]
This defines a $\zL_0$-module structure on ${\sf L}(\zL)$ that makes ${\rm h}_\zL$ a $\zL_0$-module isomorphism. The $\zL_0$-module structures ${\sf L}(\zL)_{\rm h}$ and ${\sf L}(\zL)_{\rm k}$ that are implemented by $\rm h$ and another chart $\rm k$ of the linear atlas, respectively, are related by the $\zL_0$-module isomorphism
\[
{\rm k}_\zL^{-1}\circ {\rm h}_\zL\colon\ {\sf L}(\zL)_{\rm h}\to {\sf L}(\zL)_{\rm k}.
\]
Hence, the $\zL_0$-module structure on ${\sf L}(\zL)$ is well-defined.

In order to get a Fr\'echet structure on the real vector space ${\sf L}(\zL)$ that we just defined, we need a countable and separating family of seminorms $(p_n)_{n\in\mathbb{N}}$, such that any sequence in ${\sf L}(\zL)$ that is Cauchy for every $p_n$, converges for every $p_n$ to a fixed vector (i.e., a vector that does not depend on $n$). We~define this family (of course) by transferring to ${\sf L}(\zL)$ the analogous family $(\zr_n)_{n\in\mathbb{N}}$ of the Fr\'echet vector space $\R^{p|\ul q}(\zL)$ (see~\cite[Theorem 14]{Bruce:2018}). In other words, for each ${\rm y}^*\in{\sf L}(\zL)$, we set
\[
p_n({\rm y}^*):=\zr_n({\rm h}_\zL({\rm y}^*))\in\R_+.
\]
It is straightforwardly checked that $(p_n)_{n\in\mathbb{N}}$ is a countable family of seminorms that has the required properties. Moreover, the vector space isomorphism ${\rm h}_\zL$ is an isomorphism of Fr\'echet vector spaces, i.e., a continuous linear map with a continuous inverse. We~show that ${\rm h}_\zL$ is continuous for the seminorm topologies implemented by the $p_n$ and the $\zr_n$, i.e., that, for all $n\in\mathbb{N}$, there exist $m\in\mathbb{N}$ and $C>0$, such that
\[
\zr_n({\rm h}_\zL({\rm y}^*))\le C\, p_m({\rm y}^*),
\]
for all ${\rm y}^*\in{\sf L}(\zL)$. This requirement is of course satisfied. Hence, the composite ${\rm k}_\zL^{-1}\circ\, {\rm h}_\zL$ of~isomorphisms of Fr\'echet spaces is an isomorphism of Fr\'echet spaces, so that the Fr\'echet space structure on ${\sf L}(\zL)$ is well-defined.

The $\zL_0$-module structure and the Fr\'echet vector space structure on ${\sf L}(\zL)$ combine into a~Fr\'echet $\zL_0$-module structure, if they are compatible, i.e., if the $\zL_0$-action \begin{gather}\label{L0AL}
\star\colon\ \zL_0\times {\sf L}(\zL)\ni({\rm a},{\rm y}^*)\mapsto {\rm h}_\zL^{-1}({\rm a}\triangleleft{\rm h}_\zL({\rm y}^*))\in {\sf L}(\zL)
\end{gather}
is continuous. The condition is obviously satisfied as this action is the composite of the continuous maps $\id\times {\rm h}_\zL$, $\triangleleft$ and ${\rm h}_\zL^{-1}$. Further, the map ${\rm h}_\zL$ is clearly a Fr\'echet $\zL_0$-module isomorphism, for any $\rm h$ in the linear atlas of $\sf L$.

There is obviously no other Fr\'echet $\zL_0$-module structure on ${\sf L}(\zL)$ with that property. Indeed, if there were, it would be isomorphic to the Fr\'echet $\zL_0$-module structure on $\R^{p|\ul q}(\zL)$, hence isomorphic to the Fr\'echet $\zL_0$-module structure that we just constructed. \end{proof}

In the following, we denote the $\zL_0$-action $\star$ on ${\sf L}(\zL)$ by simple juxtaposition, i.e., we write~${\rm a}{\rm y^*}$ instead of ${\rm a}\star{\rm y}^*$.

To proceed, we need some preparation.

Let ${\tt Fun}_0\big(\z{\tt Pts}^{\op{op}},{\tt FAMod}\big)$ be the category of functors $F$, whose values $F(\zL)$ are Fr\'echet $\zL_0$-modules, and of natural transformations $\zb$, whose $\zL$-components $\zb_\zL$ are continuous $\zL_0$-linear maps. We~already used above the category ${\tt Fun}_0\big(\z{\tt Pts}^{\op{op}},{\tt AFM}\big)$ of functors, whose values are Fr\'echet $\zL_0$-manifolds, and of natural transformations, whose components are $\zL_0$-smooth maps.

{\sloppy\begin{Proposition}
The category ${\tt Fun}_0\big(\z{\tt Pts}^{\op{op}},{\tt FAMod}\big)$ is a subcategory of the category ${\tt Fun}_0\big(\z{\tt Pts}^{\op{op}},{\tt AFM}\big)$.
\end{Proposition}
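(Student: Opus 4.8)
The plan is to reduce everything to the elementary observation that the category $\tt FAMod$ of Fr\'echet modules over commutative Fr\'echet algebras (Appendix~\ref{app:CatModAlg}) is itself a subcategory of the category $\tt AFM$ of Fr\'echet manifolds over commutative Fr\'echet algebras, and then to transport this inclusion along post-composition to the two restricted functor categories. First I would check the inclusion of $\tt FAMod$ in $\tt AFM$ by hand. A Fr\'echet $A$-module $M$ is canonically a Fr\'echet $A$-manifold: it carries the single global chart $\id_M\colon M\to M$ whose model is the Fr\'echet $A$-module $M$ itself, so the (empty) family of transition maps is vacuously $A$-smooth. A $\tt FAMod$-morphism, i.e., a continuous map $f\colon M\to M'$ linear over an algebra morphism $A\to A'$, read in these global charts is a continuous linear map between Fr\'echet spaces, hence smooth (see~\cite{Hamilton:82}); since its differential $Df(x)=f$ at every $x$ is linear over the same algebra morphism, $f$ is an $\tt AFM$-morphism. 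This assignment is the identity on the underlying data, hence faithful and injective on objects, so it realises $\tt FAMod$ as a subcategory of $\tt AFM$.

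Next I would transport this to the functor categories. Given an object $F$ of ${\tt Fun}_0\big(\z{\tt Pts}^{\op{op}},{\tt FAMod}\big)$, its value $F(\zL)$ on a $\z$-Grassmann algebra $\zL$ is a Fr\'echet $\zL_0$-module, hence a Fr\'echet $\zL_0$-manifold, and its value $F(\zf^*)$ on a $\z$-algebra morphism $\zf^*\colon\zL\to\zL'$ is a continuous $\zL_0$-linear map over $(\zf^*)_0\colon\zL_0\to\zL'_0$, hence a $\zL_0$-smooth $\tt AFM$-morphism; as $F$ already respects compositions and identities, it is an object of ${\tt Fun}_0\big(\z{\tt Pts}^{\op{op}},{\tt AFM}\big)$. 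Likewise, a natural transformation $\zb\colon F\to G$ in the source functor category has components $\zb_\zL$ that are continuous $\zL_0$-linear, hence $\zL_0$-smooth, so $\zb$ is a morphism in the target functor category. The resulting assignment is once more the identity on underlying functors and natural transformations, so it preserves compositions and identities, is faithful, and is injective on objects; this is precisely the assertion that ${\tt Fun}_0\big(\z{\tt Pts}^{\op{op}},{\tt FAMod}\big)$ is a subcategory of ${\tt Fun}_0\big(\z{\tt Pts}^{\op{op}},{\tt AFM}\big)$.

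I do not anticipate a real obstacle: the argument is a routine unwinding of definitions. The only inputs that are not completely formal are (i) that a continuous linear map between Fr\'echet spaces is smooth in the sense of calculus underlying the notion of Fr\'echet manifold used here, and (ii) the precise content of ``Fr\'echet $\zL_0$-manifold'' and ``$\zL_0$-smooth map'', against which one must confirm that the canonical global chart of a Fr\'echet $\zL_0$-module and a continuous $\zL_0$-linear map indeed qualify. Both are immediate once those definitions are recalled, so the effort is essentially bookkeeping.
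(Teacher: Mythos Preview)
Your proposal is correct and follows essentially the same route as the paper: both arguments reduce to the observation that a Fr\'echet $A$-module is trivially a Fr\'echet $A$-manifold and that a continuous map linear over an algebra morphism is smooth with first derivative equal to the map itself (the paper computes this limit explicitly, you cite~\cite{Hamilton:82}), which is then linear over the same algebra morphism as required. The only organisational difference is that you first isolate the inclusion ${\tt FAMod}\subset{\tt AFM}$ and then post-compose, whereas the paper carries out the identical verifications directly at the functor-category level.
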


}

\begin{proof}
Observe first that composition of natural transformations (resp., identities of functors) is (resp., are) induced by composition (resp., identities) in the target category of the functors considered, which is (resp., are) in both target categories the standard set-theoretical composition (resp., identities). Hence, composition and identities are the same in both functor categories. However, we still have to show that objects (resp., morphisms) of the first functor category are objects (resp., morphisms) of the second.

Let $F$ be a functor with target $\tt FAMod$. Since a Fr\'echet $\zL_0$-module (i.e., a Fr\'echet vector space with a (compatible) continuous $\zL_0$-action) is clearly a Fr\'echet $\zL_0$-manifold, the functor $F$ sends $\z$-Grassmann algebras $\zL$ to Fr\'echet $\zL_0$-manifolds $F(\zL)$. Let now $\zf^*\colon \zL\to \zL'$ be a morphism of $\z$-algebras. As $F(\zf^*)\colon F(\zL)\to F(\zL')$ is a morphism between Fr\'echet modules over the Fr\'echet algebras $\zL_0$ and $\zL'_0$, respectively, it is continuous and it has an associated continuous (unital, $\R$-) algebra morphism $\psi\colon \zL_0\to\zL'_0$, such that
\begin{gather}\label{QL0L}
F(\zf^*)({\rm a}{\rm v}+{\rm a}'{\rm v}')=\psi({\rm a})F(\zf^*)({\rm v})+\psi({\rm a}')F(\zf^*)({\rm v}'),
\end{gather}
for all ${\rm a,a'}\in\zL_0$ and all ${\rm v,v'}\in F(\zL)$. We must show that $F(\zf^*)$ is a morphism between Fr\'echet manifolds over $\zL_0$ and $\zL'_0$, respectively, i.e., we must show that $F(\zf^*)$ is smooth and has first order derivatives that are linear in the sense of~\eqref{QL0L} (see~\cite{Bruce:2019b}). Since, for any $t\in\R$, we have $\psi(t)=t\psi(1)=t$, it follows from~\eqref{QL0L} that
\[
{\rm d}_{\rm x}F(\zf^*)({\rm v}):=\lim_{t\to 0}\frac{1}{t}(F(\zf^*)({\rm x}+t{\rm v})-F(\zf^*)({\rm x}))=F(\zf^*)({\rm v})
\]
and
\[
{\rm d}^{k+1}_{\rm x}F(\zf^*)({\rm v}_1,\dots,{\rm v}_{k+1})=0,
\]
for any ${\rm x}, {\rm v}, {\rm v}_1,\dots, {\rm v}_{k+1}\in F(\zL)$ and any $k\ge 1$. Hence, all derivatives exist everywhere and are (jointly) continuous. This implies that $F(\zf^*)$ has the required properties, so that $F$ is a functor with target $\tt AFM$.

As for morphisms, let $\zh\colon F\to G$ be a natural transformation between functors valued in~$\tt FAMod$. Its $\zL$-components $\zh_\zL\colon F(\zL)\to G(\zL)$ are continuous and $\zL_0$-linear maps. Repeating the proof given in the preceding paragraph for $F(\zf^*)$, we obtain that $\zh_\zL$ is $\zL_0$-smooth, i.e., is smooth and has $\zL_0$-linear first order derivatives. Therefore, the morphism $\zh$ of the functor category with target $\tt FAMod$ is a morphism of the functor category with target $\tt AFM$.
\end{proof}

Since
\[
\z{\tt LinMan}\subset\z{\tt Man}\qquad\text{and}\qquad{\tt Fun}_0\big(\z{\tt Pts}^{\op{op}},{\tt FAMod}\big)\subset{\tt Fun}_0\big(\z{\tt Pts}^{\op{op}},{\tt AFM}\big)
\]
are subcategories, we expect that:

\begin{Proposition}
The functor
\begin{gather*}
\mathcal{S}\colon\ \z{\tt Man}\to{\tt Fun}_0\big(\z{\tt Pts}^{\op{op}},\tt AFM\big)
\end{gather*}
$($see equation~\eqref{SVYoneda}$)$ restricts to a functor
\begin{gather*}
\mathcal{S}\colon\ \z{\tt LinMan}\to{\tt Fun}_0\big(\z{\tt Pts}^{\op{op}},\tt FAMod\big).
\end{gather*}
\end{Proposition}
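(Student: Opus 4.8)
The plan is to show that when $\cS$ is applied to a linear $\z$-manifold $\sfl$, its values and components automatically land in the smaller target category $\tt FAMod$, so the restriction is well-defined; functoriality then comes for free from the fact that $\cS$ is already a functor on all of $\z\tt Man$. More precisely, I would first recall that for $M\in\z\tt Man$ the functor $\cS(M)$ sends a $\z$-Grassmann algebra $\zL\simeq\R^{0|\ul m}$ to $M(\zL)=\op{Hom}_{\z\tt Man}(\R^{0|\ul m},M)$, and sends a $\z$-algebra morphism $\zf^*\colon\zL\to\zL'$ to the map $M(\zf^*)$ obtained by post-composition of $\z$-points. For $M=\sfl\in\z\tt LinMan$, Proposition~\ref{FL0Mod} already tells us that $\sfl(\zL)$ carries a canonical Fr\'echet $\zL_0$-module structure, independent of the chosen linear chart. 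So the first and easiest step is: $\cS(\sfl)$ takes values that are objects of $\tt FAMod$, not merely of $\tt AFM$.

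The second step is to check that the transition maps $\sfl(\zf^*)\colon\sfl(\zL)\to\sfl(\zL')$, for $\zf^*\colon\zL\to\zL'$ a $\z$-algebra morphism, are $\tt FAMod$-morphisms, i.e., continuous and semilinear over the algebra morphism $(\zf^*)_0\colon\zL_0\to\zL'_0$. Here I would fix a linear chart $\rmh\colon\sfl\to\R^{p|\ul q}$ and use the $\zL_0$-module isomorphism $\rmh_\zL$ of Proposition~\ref{FL0Mod} to transport the question to $\R^{p|\ul q}$: by naturality of $\cS$ applied to $\rmh$ (which is a $\z$-morphism), we get a commuting square relating $\sfl(\zf^*)$ to $\R^{p|\ul q}(\zf^*)$ through $\rmh_\zL$ and $\rmh_{\zL'}$. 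On $\R^{p|\ul q}(\zL)$, identified via the fundamental theorem of $\z$-morphisms with $\zL_0^{\times p}\times\zL_{\zg_1}^{\times q_1}\times\dots\times\zL_{\zg_N}^{\times q_N}$, the map $\R^{p|\ul q}(\zf^*)$ acts component-wise by applying the homogeneous pieces $(\zf^*)_{\zg_i}\colon\zL_{\zg_i}\to\zL'_{\zg_i}$; these are continuous (being restrictions of a continuous algebra morphism between Fr\'echet algebras) and compatible with the $\triangleleft$-actions of~\eqref{ActionCompWise} because $\zf^*$ is multiplicative, so $\R^{p|\ul q}(\zf^*)$ is a $\tt FAMod$-morphism over $(\zf^*)_0$. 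Conjugating back through the $\zL_0$-module isomorphisms $\rmh_\zL^{-1}$, $\rmh_{\zL'}$ shows $\sfl(\zf^*)$ is one too.

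The third step is the analogous statement for morphisms: if $\zvf\colon\sfl\to\sfl'$ is a linear $\z$-morphism, each component $\cS(\zvf)_\zL\colon\sfl(\zL)\to\sfl'(\zL)$ must be a continuous $\zL_0$-linear map, so that $\cS(\zvf)$ is a morphism in ${\tt Fun}_0(\z{\tt Pts}^{\op{op}},\tt FAMod)$. Again I would use linear charts $\rmh$ on $\sfl$ and $\rmk$ on $\sfl'$; by Proposition~\ref{prop:IndCoordMap} the $\z$-morphism $\rmk\circ\zvf\circ\rmh^{-1}\colon\R^{p|\ul q}\to\R^{r|\ul s}$ has linear coordinate pullbacks, i.e., lies in $\z\tt CarMan$. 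For such a Cartesian linear morphism $L$, the induced map on $\zL$-points is, in the component description above, given by the block-diagonal matrix $\mathcal V(L)=\mbf L$ acting by matrix multiplication with entries in $\R$, which is manifestly $\zL_0$-linear and continuous; conjugating by the $\zL_0$-module isomorphisms $\rmh_\zL$, $\rmk_\zL$ shows $\cS(\zvf)_\zL$ is continuous and $\zL_0$-linear. Naturality of $\cS(\zvf)$ is inherited from naturality in $\z\tt Man$. Finally, since composition and identities in both functor categories are the underlying set-theoretic ones, and $\cS$ already respects them on $\z\tt Man$, the restricted assignment is a functor. The main obstacle is the second step, namely verifying carefully that the transition maps remain semilinear and continuous after transport through a chart --- in particular, that the resulting $\zL_0$-module structure on $\sfl(\zL)$ really is the one from Proposition~\ref{FL0Mod} and that the chart-independence there is compatible with the naturality squares; once the component-wise description on $\R^{p|\ul q}$ is in hand, everything else is bookkeeping.
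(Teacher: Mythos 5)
Your proposal is correct and follows essentially the same route as the paper: values are Fr\'echet $\zL_0$-modules by Proposition~\ref{FL0Mod}; the transition maps $\sfl(\zf^*)$ are reduced via charts and naturality of $\rmh_-$ to the entry-wise action of $\zf^*$ on $\R^{p|\ul q}(\zL)$; and linearity of the components of $\cS(\zvf)$ is reduced to the Cartesian case, where the linear coordinate pullbacks make $\Phi_\zL$ manifestly $\zL_0$-linear. The paper carries out the same steps, only spelling out the matrix form of $\Phi_\zL$ from the explicit $\zL$-point formulas.
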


\begin{proof}
We have to explain why $\mathcal{S}$ sends linear $\z$-manifolds and linear $\z$-morphisms to objects and morphisms, respectively, of the target subcategory.

Let ${\sf L}\in\z{\tt LinMan}$. The functor $\mathcal{S}({\sf L})$ is an object of the functor category with target $\tt AFM$. Since composition and identities are the same in both target categories, it suffices to show that, for any $\z$-Grassmann algebra $\zL$, the value $\mathcal{S}({\sf L})(\zL)={\sf L}(\zL)$ is a Fr\'echet $\zL_0$-module and that, for any $\z$-algebra morphism $\zf^*\colon \zL\to \zL'$, the morphism
\[
{\sf L}(\zf^*)\colon\ {\sf L}(\zL)\ni {\rm y}^*\mapsto \zf^*\circ{\rm y}^*\in {\sf L}(\zL')
\]
is a morphism of the category $\tt FAMod$. The first of the preceding conditions holds in view of~Pro\-position~\ref{FL0Mod}. We~start proving the second condition for ${\sf L}=\R^{p|\ul q}$. Since $\R^{p|\ul q}(\zf^*)$ is a~mor\-phism of $\tt AFM$, it is smooth, hence, continuous. Further, omitting the summation symbols and using our standard notation, we get
\begin{align*}
\R^{p|\ul q}(\zf^*)\big({\rm a}^k \triangleleft{\rm x}_k^*\big)& =\R^{p|\ul q}(\zf^*)\big({\rm a}^k\cdot x_{\zL,k}^a,{\rm a}^k\cdot\zx_{\zL,k}^A\big)=\big(\zf^*\big({\rm a}^k\big)\cdot \zf^*(x_{\zL,k}^a),\zf^*\big({\rm a}^k\big)\cdot\zf^*\big(\zx_{\zL,k}^A\big)\big)\nonumber\\
 & = \zf^*\big({\rm a}^k\big)\,\triangleleft\;\R^{p|\ul q}(\zf^*)({\rm x}^*_k) .
\end{align*}
It now suffices to recall that the $\z$-algebra morphism $\zf^*$ is the pullback $\zf^*_{\star}$ over the whole base manifold $\{\star\}$ of a $\z$-morphism $\zf\colon \R^{0|\ul m'}\to \R^{0|\ul m}$, and that all pullbacks of $\z$-morphisms are continuous, so that the restriction $\zf^*\colon \zL_0\to\zL'_0$ is a continuous algebra morphism. We~are now able to prove that the second condition holds also for an arbitrary linear $\z$-manifold $\sf L$. Indeed, since $\zf^*\colon \zL\to \zL'$ is a morphism of $\z$-algebras, the map ${\sf L}(\zf^*)\colon {\sf L}(\zL)\to {\sf L}(\zL')$ is a morphism of~$\tt AFM$, hence, it is continuous. Recall now that any chart ${\rm h}\colon {\sf L}\to \R^{p|\ul q}$ is a $\z$-morphism, so that $\mathcal{S}({\rm h})\colon {\sf L}(-)\to \R^{p|\ul q}(-)$ is a natural transformation ${\rm h}_-$ with $\zL$-components ${\rm h}_\zL\colon {\sf L}(\zL)\to\R^{p|\ul q}(\zL)$ that are Fr\'echet $\zL_0$-module isomorphisms in view of Proposition~\ref{FL0Mod}. Naturality of ${\rm h}_-$ implies that
\[
{\rm h}_{\zL'}\circ{\sf L}(\zf^*)=\R^{p|\ul q}(\zf^*)\circ {\rm h}_\zL,
\]
and, due to invertibility, that
\[
{\sf L}(\zf^*)={\rm h}_{\zL'}^{-1}\circ\R^{p|\ul q}(\zf^*)\circ {\rm h}_\zL.
\]
Definition~\eqref{L0AL} yields
\[
{\sf L}(\zf^*)\big({\rm a}^k\star {\rm y}^*_k\big)=\big({\rm h}_{\zL'}^{-1}\circ\R^{p|\ul q}(\zf^*)\circ {\rm h}_\zL\big)\big({\rm h}_\zL^{-1}\big({\rm a}^k\triangleleft {\rm h}_\zL({\rm y}^*_k)\big)\big)=\zf^*\big({\rm a}^k\big)\star{\sf L}(\zf^*)({\rm y}^*_k)
\]
(we used our standard notation). Hence, the functor $\mathcal{S}({\sf L})$ is an object of the functor category with target $\tt FAMod$.

As for morphisms, we consider a linear $\z$-morphism
\[
\zvf\colon\ {\sf L}\to{\sf L}'
\]
and will prove that $\mathcal{S}(\zvf)$, which is a natural transformation $\zvf_-$ of the functor category with target $\tt AFM$, i.e., a natural transformation with $\zL_0$-smooth $\zL$-components $\zvf_\zL$, has actually continuous (but this results from $\zL_0$-smoothness) $\zL_0$-linear components.

Let $p|\ul q$ (resp., $r|\ul s$) be the dimension of $\sf L$ (resp., of $\sf L'$). We~first discuss the case of a linear $\z$-morphism
\[
\Phi\colon\ \R^{p|\ul q}\to \R^{r|\ul s}
\]
between the corresponding Cartesian $\z$-manifolds with canonical coordinates $\big(x^a,\zx^A\big)$ and $\big(y^b,\zh^B\big)$, respectively. We~know from~\cite{Bruce:2019b} that, if the $\z$-morphism (resp., the linear $\z$-morphism) $\Phi$ reads
\begin{subequations}
\begin{align}
& \Phi^*\big(y^b\big)=\sum_{|\za|\ge 0}\Phi_\za^b(x)\, \zx^\za\qquad
\bigg(\text{resp.,}\;=\sum_{a}{\mbf L}^b_a x^a\bigg),\label{coeffy}
\\
& \Phi^*\big(\zh^B\big)=\sum_{|\za|> 0}\Phi_\za^B(x)\, \zx^\za\qquad
\bigg(\text{resp.,}\;=\sum_A {\mbf L}^B_A\zx^A\bigg)\label{coeffh}
\end{align}
\end{subequations}
(where the right-hand sides have the appropriate degree and where the coefficients $\mbf L^{\ast}_\ast$ are real numbers), then the $\zL$-component $\Phi_\zL$ associates to the $\zL$-point ${\rm x}^*\simeq \big(x^a_\Lambda;\zx^A_\Lambda\big) =\big(x^a_{||},\mathring{x}_\Lambda^a;\zx^A_\Lambda \big)$ of~$\R^{p|\ul{q}}(\zL)$, the $\zL$-point ${\rm x}^*\circ\Phi^*\simeq \big(y_\zL^b;\zh_\zL^B\big)$ of $\R^{r|\ul{s}}(\zL)$ that is given by
\begin{subequations}
\begin{align}
& y^b_\Lambda = \sum_{|\za|\ge 0}\sum_{|\zb|\ge 0}\frac{1}{\zb!}\,\big(\partial_{x}^\zb\Phi_\za^b\big)\big(x_{||}\big)\,\mathring{x}_\zL^\zb\,\zx_\zL^\za\qquad
\bigg(\text{resp.,}\;=\sum_{a}{\mbf L}^b_a x_\zL^a\bigg), \label{eqn:smoothzero}
\\
& \zh^B_\Lambda =\sum_{|\za|> 0}\sum_{|\zb|\ge 0}\frac{1}{\zb!}\,\big(\partial_x^\zb\Phi_\za^B\big)\big(x_{||}\big)\,\mathring{x}_\zL^\zb\,\zx_\zL^\za\qquad
\bigg(\text{resp.,}\;=\sum_A {\mbf L}^B_A\zx_\zL^A\bigg). \label{eqn:smoothnonzero}
\end{align}
\end{subequations}
Here, we used the obvious decomposition $\Lambda = \R \times \mathring{\Lambda}$ and wrote $x^a_\Lambda = (x^a_{||}, \mathring{x}^a_{\Lambda})$. The particular linear versions of equations~\eqref{eqn:smoothzero} and~\eqref{eqn:smoothnonzero} (in parentheses), show that the component $\Phi_\zL$ is $\zL_0$-linear, as needed.

In the general case of a linear $\z$-morphism $\phi\colon {\sf L}\to{\sf L}'$, the $\z$-morphism $\Phi:={\rm k}\circ\phi\circ{\rm h}^{-1}\colon \R^{p|\ul q}\to\R^{r|\ul s}$ has linear coordinate pullbacks $\Phi^*\big(y^b\big)$ and $\Phi^*\big(\zh^B\big)$ (and is thus a linear $\z$-morphism), for any charts $\rm h$ and $\rm k$ of $\sf L$ and ${\sf L}'$, respectively. Since $\zvf={\rm k}^{-1}\circ\Phi\circ{\rm h}$, we have $\zvf_\zL={\rm k}_\zL^{-1}\circ\Phi_\zL\circ{\rm h}_\zL$ and, in view of Proposition~\ref{FL0Mod} and the result of the preceding paragraph, all three factors of the {\small RHS} are $\zL_0$-linear.

Finally, the natural transformation $\mathcal{S}(\zvf)$ is a natural transformation of the functor category with target $\tt FAMod$.\end{proof}

\begin{Theorem}\label{FunPtsLinManFF}
The functor of points
\[
\mathcal{S}\colon\ \z{\tt LinMan}\to{\tt Fun}_0\big(\z{\tt Pts}^{\op{op}},\tt FAMod\big)
\]
of the category $\z\tt LinMan$ is fully faithful.
\end{Theorem}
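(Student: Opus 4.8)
The strategy is to reduce full faithfulness of $\mathcal{S}$ on $\z{\tt LinMan}$ to the already-established full faithfulness of $\mathcal{S}$ on $\z{\tt Man}$ (equation~\eqref{SVYoneda}), using the fact that $\z{\tt LinMan}$ is a (non-full) subcategory of $\z{\tt Man}$. Faithfulness is immediate: if $\zvf,\psi\colon {\sf L}\to{\sf L}'$ are linear $\z$-morphisms with $\mathcal{S}(\zvf)=\mathcal{S}(\psi)$, then, viewing them as plain $\z$-morphisms, faithfulness of $\mathcal{S}$ on $\z{\tt Man}$ forces $\zvf=\psi$. So the entire content is \emph{fullness}: given a natural transformation $\zb\colon {\sf L}(-)\to{\sf L}'(-)$ in ${\tt Fun}_0\big(\z{\tt Pts}^{\op{op}},\tt FAMod\big)$, one must produce a \emph{linear} $\z$-morphism $\zvf\colon{\sf L}\to{\sf L}'$ with $\mathcal{S}(\zvf)=\zb$. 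The natural first step is: since ${\tt Fun}_0\big(\z{\tt Pts}^{\op{op}},\tt FAMod\big)\subset{\tt Fun}_0\big(\z{\tt Pts}^{\op{op}},\tt AFM\big)$, the transformation $\zb$ is in particular a morphism of the larger functor category, so fullness of $\mathcal{S}$ on $\z{\tt Man}$ yields a $\z$-morphism $\zvf\colon{\sf L}\to{\sf L}'$ with $\mathcal{S}(\zvf)=\zb$. It remains only to check that this $\zvf$ is \emph{linear}.

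To verify linearity of $\zvf$, I would pick linear coordinate maps $\rmh\colon{\sf L}\to\R^{p|\ul q}$ and $\rmk\colon{\sf L}'\to\R^{r|\ul s}$ from the respective linear atlases and show that $\Phi:=\rmk\circ\zvf\circ\rmh^{-1}\colon\R^{p|\ul q}\to\R^{r|\ul s}$ has linear coordinate pullbacks. Equivalently, by the third item of Remark~\ref{RemLinFct}, it suffices to show $\Phi^*$ restricts to a degree-preserving linear map $\cO^{\op{lin}}_{\R^{r|\ul s}}(\R^r)\to\cO^{\op{lin}}_{\R^{p|\ul q}}(\R^p)$, i.e.\ that each $\Phi^*(y^b)$ and $\Phi^*(\zh^B)$ is a homogeneous degree-$1$ polynomial $\sum_a \mbf L^b_a x^a$ resp.\ $\sum_A \mbf L^B_A \zx^A$ with \emph{no} higher-order terms and no inhomogeneous degree-$0$ term. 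The key observation is that naturality transports the module structure: by Proposition~\ref{FL0Mod} each $\rmh_\zL$, $\rmk_\zL$ is a Fr\'echet $\zL_0$-module isomorphism intertwining $\mathcal{S}({\sf L})$ with $\mathcal{S}(\R^{p|\ul q})=\cF(\mbf R^{p|\ul q})$ (via the identification in~\eqref{FunPtsYon1}), and since $\zb$ has $\zL_0$-linear components, so does $\mathcal{S}(\Phi)=\Phi_-$; that is, $\Phi_-$ is a natural transformation $\cF(\mbf R^{p|\ul q})\to\cF(\mbf R^{r|\ul s})$ with $\zL_0$-linear components, i.e.\ a morphism in ${\tt Fun}_0\big(\z{\tt Pts}^{\op{op}},\tt AMod\big)$.

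Now I would invoke Theorem~\ref{trm:FFullyFaith} (full faithfulness of the zero degree rules functor $\cF$): the natural transformation $\Phi_-=\cF(\zvf)\text{-type}$ object must be $\cF(\mbf L)$ for a unique degree-$0$ linear map $\mbf L\colon\mbf R^{p|\ul q}\to\mbf R^{r|\ul s}$. But examining the explicit $\zL$-component formulas~\eqref{eqn:smoothzero}--\eqref{eqn:smoothnonzero} for $\Phi_\zL$ in terms of the coefficients $\Phi^b_\za$, $\Phi^B_\za$, a morphism of the form $\cF(\mbf L)$ acts on $\R^{p|\ul q}(\zL)=\big(\zL\0\mbf R^{p|\ul q}\big)_0$ simply as $(\Id_\zL\0\mbf L)_0$; comparing with~\eqref{eqn:smoothzero}--\eqref{eqn:smoothnonzero} evaluated, say, on the generic $\zL_1$-point and using $\zL_1$ from~\eqref{Lambda1} (or the rescaling endomorphisms $\zf^*_{r_0}$ used in the proof of Theorem~\ref{trm:FFullyFaith}) forces all the nonlinear Taylor coefficients $\partial^\zb_x\Phi^b_\za$ with $|\za|+|\zb|\ge 2$, and the constant term $\Phi^b_0(x_{||})$ being nonconstant, to vanish — precisely the statement that $\Phi^*$ has linear pullbacks. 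Hence $\Phi$ is a $\z{\tt CarMan}$-morphism, $\zvf$ is linear, and $\mathcal{S}(\zvf)=\zb$.

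\textbf{Main obstacle.} The delicate point is the last step: squeezing linearity of the coordinate pullbacks of $\Phi$ out of mere $\zL_0$-linearity of all the components $\Phi_\zL$. One cannot just read it off from one algebra; one must feed in a family of test algebras and $\z$-algebra morphisms between them (the rescalings $\zf^*_{r_0}$, and maps killing selected generators) and exploit naturality — exactly the bookkeeping that made the proof of Theorem~\ref{trm:FFullyFaith} lengthy. The non-nilpotent non-zero-degree formal parameters mean $\Phi^*(y^b)$ a priori involves a genuine formal power series $\sum_{|\za|\ge0}\Phi^b_\za(x)\zx^\za$ with smooth (not polynomial) coefficients, so the argument must simultaneously kill infinitely many higher-order terms and rigidify each $\Phi^b_\za$ to an affine-linear (indeed linear) function of $x$; the homogeneity/rescaling trick handles both at once, but verifying it carefully is where the real work lies.
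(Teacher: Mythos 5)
Your proof is correct, but the decisive step is handled by a genuinely different route than the paper's. Both arguments get faithfulness for free from the full faithfulness of $\mathcal{S}$ on $\z{\tt Man}$, and both reduce fullness to the Cartesian case by conjugating a given natural transformation with the linear coordinate maps $\rmh_-$, $\rmk_-$ (using Proposition~\ref{FL0Mod}). Where you diverge is in showing that the $\z$-morphism $\Phi\colon\R^{p|\ul q}\to\R^{r|\ul s}$ implementing a transformation with $\zL_0$-linear components is linear. The paper attacks this analytically: writing $\Phi_\zL$ via the Taylor-type formulas~\eqref{eqn:smoothzero2}--\eqref{eqn:smoothnonzero2}, testing $\zL_0$-linearity against real scalars $r>0$ to obtain $r^{|\za|+|\zb|}F^*_{\za\zb}\big(rx_{||}\big)=rF^*_{\za\zb}\big(x_{||}\big)$, and then invoking Euler's homogeneous function theorem and the limits $r\to0^+$ to kill all coefficients with $|\za|+|\zb|\ge2$ and rigidify the rest. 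You instead transport $\Phi_-$ along the elementary natural identification $\mathcal{S}\big(\R^{p|\ul q}\big)\cong\cF\big(\mbf{R}^{p|\ul q}\big)$ and invoke Theorem~\ref{trm:FFullyFaith} to produce $\mbf{L}$ with $\cF(\mbf{L})=\Phi_-$; since $\cF(\mbf{L})_\zL=(\cM(\mbf{L}))_\zL$ for every $\zL$ (this is exactly~\eqref{NatI}, which the paper only records later but whose verification is an independent direct computation), faithfulness of $\mathcal{S}$ on $\z{\tt Man}$ gives $\Phi=\cM(\mbf{L})$ outright, with no Euler's theorem and no limit arguments. This is a genuine economy: the rescaling work was already done once, purely algebraically, in the proof of Theorem~\ref{trm:FFullyFaith}, and your route avoids redoing it in the smooth setting, where the coefficients are functions of the base point rather than vectors. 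Two small cautions: (i) you must explicitly check that the identification $\R^{p|\ul q}(\zL)\simeq\big(\zL\0\mbf{R}^{p|\ul q}\big)_0$ is natural in $\zL$ and compatible with the $\zL_0$-module structures~\eqref{ActionCompWise} --- both are one-line computations and involve no circularity; (ii) your parenthetical suggestion to finish by evaluating on the single test algebra $\zL_1$ and reading off Taylor coefficients is the one fragile point (a single $\zL$ does not separate $\z$-morphisms, and some generators of $\zL_1$ are nilpotent), but it is also unnecessary: once $\cF(\mbf{L})=\Phi_-$ is known, faithfulness of $\mathcal{S}$ closes the argument.
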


\begin{proof} We need to prove that the map
\[
\mathcal{S}_{\sf L,L'}\colon\ \op{Hom}_{\z\tt LinMan}({\sf L,L'})\ni\phi\mapsto \zvf_-\in\op{Hom}_{{\tt Fun}_0(\z{\tt Pts}^{\op{op}},\tt FAMod)}({\sf L}(-),{\sf L'}(-))
\]
is bijective, for all linear $\z$-manifolds $\sf L,\ L'$.

Since $\mathcal{S}$ is the restriction of the fully faithful functor $\mathcal{S}\colon {\z\tt Man}\to{\tt Fun}_0\big(\z{\tt Pts}^{\op{op}},{\tt AFM}\big)$, the map $\mathcal{S}_{\sf L,L'}$ is injective.

To prove that $\mathcal{S}_{\sf L,L'}$ is also surjective, it actually suffices to show that the property holds for Cartesian $\z$-manifolds. Indeed, in this case, if $\zh\colon {\sf L}(-)\to{\sf L}'(-)$ is a natural transformation of ${\tt Fun}_0\big(\z{\tt Pts}^{\op{op}},\tt FAMod\big)$, then $ {\rm k}_-\circ \zh\circ{\rm h}^{-1}_-$ is a natural transformation in the same category from $\R^{p|\ul q}(-)$ to $\R^{r|\ul s}(-)$, and this transformation is implemented by a linear $\z$-morphism $\zf\colon \R^{p|\ul q}\to \R^{r|\ul s}$. It~follows that
\[
\zh={\rm k}_-^{-1}\circ\zf_-\circ{\rm h}_-=({\rm k}^{-1}\circ\zf\circ{\rm h})_-,
\]
where the latter composite is a linear $\z$-morphism.

Let now $H\colon \R^{p|\ul q}(-)\to \R^{r|\ul s}(-)$ be a natural transformation of ${\tt Fun}_0\big(\z{\tt Pts}^{\op{op}},\tt FAMod\big)$, hence, a natural transformation of ${\tt Fun}_0\big(\z{\tt Pts}^{\op{op}},\tt AFM\big)$. We~know from~\cite{Bruce:2019b} that $H$ is implemented by a~$\z$-morphism $\Phi\colon \R^{p|\ul q}\to \R^{r|\ul s}$, but we still have to prove that this morphism is linear. It~follows from equations~\eqref{eqn:smoothzero} and~\eqref{eqn:smoothnonzero} that $H_\zL=\Phi_\zL$ is given by
\begin{subequations}
\begin{align}
& y^b_\Lambda = \sum_{|\za|\ge 0}\sum_{|\zb|\ge 0}F^b_{\za\zb}\big(x_{||}\big)\,\mathring{x}_\zL^\zb\,\zx_\zL^\za, \label{eqn:smoothzero2}\\
& \zh^B_\Lambda =\sum_{|\za|> 0}\sum_{|\zb|\ge 0}F^B_{\za\zb}\big(x_{||}\big)\,\mathring{x}_\zL^\zb\,\zx_\zL^\za, \label{eqn:smoothnonzero2}
\end{align}
\end{subequations}
where we set
\begin{gather}\label{ExplForm}
F^\ast_{\za\zb}(x):=\frac{1}{\zb !}\partial_x^\zb\Phi_\za^\ast\in\Ci(\R^p)
\end{gather}
(the $\Phi^*_\za\in\Ci(\R^p)$ are the coefficients of the coordinate pullbacks by $\Phi$, see equations~\eqref{coeffy} and~\eqref{coeffh}), and where the {\small RHS}-s have of course the same $\z$-degree as the corresponding coordinates of $\R^{r|\ul s}$. Since $H_\zL$ is $\zL_0$-linear, we have
\[
\sum_{\za}\sum_{\zb}F^\ast_{\za\zb}\big(r\, x_{||}\big)\,r^{|\za|+|\zb|}\mathring{x}_\zL^\zb\,\zx_\zL^\za=r\sum_{\za}\sum_{\zb}F^\ast_{\za\zb}\big(x_{||}\big)\,\mathring{x}_\zL^\zb\,\zx_\zL^\za,
\]
i.e.,
\[
r^{|\za|+|\zb|}F^\ast_{\za\zb}\big(r\, x_{||}\big)=r\, F^\ast_{\za\zb}\big(x_{||}\big) ,
\]
for any $r\in\R_{>0}\subset\zL_0$, any $\za$, $\zb$ and for any $x_{||}\in\R^p$. When deriving with respect to $r$, we obtain
\[
r^{|\za|+|\zb|-1}\bigg((|\za|+|\zb|)F^\ast_{\za\zb}\big(r x_{||}\big)+r\sum_{a=1}^{p}x_{||}^a\big(\partial_{x^a_{||}} F^*_{\za\zb}\big)\big(r x_{||}\big)\bigg)=F_{\za\zb}^\ast\big(x_{||}\big),
\]
so that setting $r=1$ yields
{\samepage\begin{gather}\label{EVF}
\sum_{a=1}^{p}x_{||}^a\partial_{x^a_{||}} F^*_{\za\zb}=(1-n)F_{\za\zb}^\ast\big(x_{||}\big)\qquad
(n:=|\za|+|\zb|\in\mathbb{N}),
\end{gather}
again for all $\za$, $\zb$ and all $x_{||}\in\R^p$.

}

Recall now that Euler's homogeneous function theorem states that, if $F\in C^1(\R^p\setminus\{0\})$, then, for any $\zn\in\R$, we have
\[
\sum_{a=1}^{p}x^a\partial_{x^a}F=\zn F(x),\qquad\forall x\in\R^p\setminus\{0\}
\]
is equivalent to
\[
F(r x)=r^\zn F(x),\qquad\forall r>0,\quad\forall x\in\R^p\setminus\{0\}.
\]
In view of~\eqref{EVF}, we thus get
\begin{gather}\label{Homogeneity}
F_{\za\zb}^\ast\big(rx_{||}\big)=r^{1-n}F_{\za\zb}^\ast\big(x_{||}\big),\qquad
\forall r>0,\quad \forall x_{||}\in\R^p,
\end{gather}
 where we could extend the equality from $\R^p\setminus\{0\}$ to $\R^p$ due to continuity. If $r$ tends to $0^+$, the limit of the {\small LHS} is $F_{\za\zb}^\ast(0)\in\R$ and, for $n=0$ (resp., $n=1$; resp., $n\ge 2$), the limit of the {\small RHS} is $0$ $\big($resp., $F_{\za\zb}^\ast\big(x_{||}\big)$; resp., $+\infty\cdot F_{\za\zb}^\ast\big(x_{||}\big)\big)$.

In the case $n\ge 2$, we conclude that
\begin{gather}\label{LinForm1}
F_{\za\zb}^*\big(x_{||}\big)=0,\qquad
\forall x_{||}\in\R^p,\quad \forall \za,\zb\colon\ |\za|+|\zb|\ge 2.
\end{gather}

For $n=0$, we get
\[
F^*_{00}(0)=0.
\]
Observe that $\za=\zb=0$ is only possible in equation~\eqref{eqn:smoothzero2}. Differentiating~\eqref{Homogeneity}, in the case $n=0$, with respect to any component $x_{||}^a$ of $x_{||}$ and simplifying by $r$, we obtain
\[
\big(\partial_{x^a_{||}}F_{00}^b\big)\big(rx_{||}\big)=\partial_{x_{||}^a}F^b_{00}\big(x_{||}\big),
\]
and taking the limit $r\to 0^+$, we get
\[
\partial_{x_{||}^a}F^b_{00}\big(x_{||}\big)=\partial_{x_{||}^a}F^b_{00}(0)=:\mbf{L}^b_a\in\R.
\]
Integration yields
\begin{gather}\label{LinForm2}
F^b_{00}\big(x_{||}\big)=\sum_a \mbf{L}^b_a x_{||}^a,\qquad
\forall x_{||}\in\R^p,\quad \forall b,
\end{gather}
as $F_{00}^b(0)=0$.

In the remaining case $n=|\za|+|\zb|=1$, we have necessarily $\za=0$ and $\zb=e_a$, or $\za=e_A$ and $\zb=0$ (the $e_*$ are of course the vectors of the canonical basis of $\R^p$ and $\R^{|\ul q|}$, respectively). For $\z$-degree reasons, the first (resp., second) possibility is incompatible with equation~\eqref{eqn:smoothnonzero2} (resp., equation~\eqref{eqn:smoothzero2}). Hence, the only terms in~\eqref{eqn:smoothzero2} that still need being investigated are the terms $(\za,\zb)=(0,e_a)$. It~follows from equation~\eqref{Homogeneity} and its limit $r\to 0^+$ (see above) that $F_{0\,e_a}^b\big(x_{||}\big)=\mbf{K}^b_a$, where we set $\mbf{K}^b_a:=F^b_{0\,e_a}(0)\in\R$. However, equations~\eqref{ExplForm} and~\eqref{LinForm2} imply that
\[
\mbf{K}^b_a=F^b_{0\,e_a}\big(x_{||}\big)=\partial_{x_{||}^a}F^b_{00}\big(x_{||}\big)=\mbf{L}^b_a,
\]
so that
\begin{gather}\label{LinForm3}
F_{0\,e_a}^b\big(x_{||}\big)=\mbf{L}^b_a\in\R,\qquad
\forall x_{||}\in\R^p, \quad \forall a,b.
\end{gather}
In equation~\eqref{eqn:smoothnonzero2}, the only terms that still need being investigated are the terms $(\za,\zb)=(e_A,0)$. Using again the limit $r\to 0^+$ of equation~\eqref{Homogeneity}, we find
\begin{gather}\label{LinForm4}
F^B_{e_A 0}\big(x_{||}\big)=\mbf{L}^B_A,\qquad\forall x_{||}\in\R^p,\quad\forall A,B,
\end{gather}
where we wrote $\mbf{L}^B_A$ instead of $F^B_{e_A 0}(0)$.

When combining now the results of equations~\eqref{LinForm1}, \eqref{LinForm2}, \eqref{LinForm3}, and~\eqref{LinForm4}, we see that equations~\eqref{eqn:smoothzero2} and~\eqref{eqn:smoothnonzero2} reduce to \begin{gather*}
y^b_\Lambda = \sum_a\mbf{L}^b_a\, \big(x_{||}^a+\mathring{x}_{\zL}^a\big)\qquad\text{and}\qquad
\zh^B_\Lambda =\sum_A\mbf{L}^B_A \, \zx_\zL^A\;
\end{gather*}
and that the $\z$-morphism $\Phi$ that induces the natural transformation $H$ is defined by the coordinate pullbacks
\[
\Phi^*\big(y^b\big)= \sum_a\mbf{L}^b_a\, x^a\qquad\text{and}\qquad \Phi^*\big(\zh^B\big)=\sum_A\mbf{L}^B_A \, \zx^A,
\]
 i.e., that $\Phi$ is linear (see~\eqref{coeffy},~\eqref{coeffh},~\eqref{eqn:smoothzero}, and~\eqref{eqn:smoothnonzero}).\end{proof}

\subsubsection[Isomorphism between finite dimensional Z2n-graded vector spaces and linear Z2n-manifolds]
{Isomorphism between finite dimensional $\boldsymbol{\Z_2^n}$-graded vector spaces\\ and linear $\boldsymbol{\Z_2^n}$-manifolds}

In this subsection, we extend the isomorphism
\[
\cM\colon\ \Z_2^n{\tt CarVec}\rightleftharpoons\Z_2^n{\tt CarMan}\ \,{:}\, \cV
\]
of Proposition~\ref{IsoCatCar} between the full subcategories $\Z_2^n{\tt CarVec}\subset\Z_2^n{\tt FinVec}$ and $\Z_2^n{\tt CarMan}\subset\Z_2^n{\tt LinMan}$, to an isomorphism
\[
\mathcal{M}\colon\ \Z_2^n{\tt FinVec}\rightleftharpoons\Z_2^n{\tt LinMan}\ \,{:}\, \mathcal{V}.
\]

\noindent\textbf{$\boldsymbol{\Z_2^n}$-symmetric tensor algebra.} We~start with some remarks on tensor and $\Z_2^n$-symmetric tensor algebras over a (finite dimensional) $\Z_2^n$-vector space (see~\cite{Manin1988} and~\cite{BP13}).

Let
\[
V=\bigoplus_{i=0}^NV_{i}:=\bigoplus_{i=0}^NV_{\zg_i}\in\Z_2^n{\tt FinVec}
\]
be of dimension $p|\ul{q}$. The $\Z_2^n$-symmetric tensor algebra of $V$ is defined exactly as in the non-graded case, as the quotient of the $\Z_2^n$-graded associative unital tensor algebra of $V$ by the homogeneous ideal
\[
\bar{I}=\big(v_i \otimes v_j - (-1)^{\langle \zg_i, \zg_j \rangle } v_j \otimes v_i\colon v_i \in V_i,\, v_j \in V_j \big).
\]

More precisely, for $k\ge 2$, we have
\begin{gather*}
V^{\0 k}=\bigoplus_{i_1,\dots, i_k=0}^NV_{i_1}\0\cdots\0V_{i_k}=\bigoplus_{i_1\le\cdots\le i_k}V_{i_1,\dots,i_k}:=\bigoplus_{i_1\le\cdots\le i_k}\bigg(\bigoplus_{\zs\in\op{Perm}}V_{\zs_{i_1}}\0\cdots\0V_{\zs_{i_k}}\bigg),
\end{gather*}
where Perm is the set of all permutations of $i_1\le\cdots\le i_k$. For instance, if $n=1$, i.e., in the standard super case, the space $V^{\0 3}$ is the direct sum of the tensor products whose three factors have the subscripts $000$, $001$, $010$, $011$, $100$, $101$, $110$, $111$. The notation we just introduced means that we write
\begin{gather*}
V^{\0 3}=V_{000}\oplus V_{001}\oplus V_{011}\oplus V_{111},
\end{gather*}
where we used the lexicographical order and where
\[
V_{000}=V_0\0V_0\0V_0,\qquad
V_{001}=V_0\0V_0\0V_1\oplus V_0\0V_1\0V_0\oplus V_1\0V_0\0V_0,\quad
\text{et cetera}.
\]

Further, as we are dealing with formal power series in this paper, we define the $\Z_2^n$-graded tensor algebra of $V$ by
\[
\overline{T}V:=\Pi_k V^{\0 k},
\]
where $\Pi_k$ means that we consider not only finite sums of tensors of different tensor degrees, but full sequences of such tensors. The vector space structure on such sequences is obvious and~the algebra structure is defined exactly as in the standard case. Indeed, for $T^k\in V^{\0 k}$ and $U^\ell\in V^{\0 \ell}$, we have $T^k\0 U^\ell\in V^{\0(k+\ell)}$ and we just extend this tensor product by linearity. In~other words,~if
\[
T=\sum_{k=0}^\infty T^k\in\overline{T}V\qquad\text{and}\qquad U=\sum_{\ell=0}^{\infty}U^\ell\in\overline{T}V,
\]
we set
\begin{gather}\label{TenMult}
T\0 U=\sum_k\sum_\ell T^k\0 U^\ell=\sum_m\sum_{k+\ell=m}T^k\0 U^\ell\in \overline{T}V.
\end{gather}
It is clear that the just defined tensor multiplication endows $\overline{T}V$ with a $\Z_2^n$-graded algebra structure. Indeed, since
\[
V^{\0 k}=\bigoplus_{i_1\le\cdots\le i_k}V_{i_1,\dots, i_k}=\bigoplus_{p=0}^N\!\bigoplus_{
\substack{i_1\le\cdots\le i_k \\ \sum_j\zg_{i_j}=\zg_p}}V_{i_1,\dots, i_k}=:\bigoplus_{p=0}^N\,\big(V^{\0 k}\big)_p
\]
is visibly a $\Z_2^n$-graded vector space, the space $\overline{T}V$ is itself $\Z_2^n$-graded:
\[
\overline{T}V=\Pi_k\bigoplus_{p=0}^N\,\big(V^{\0 k}\big)_p=\bigoplus_{p=0}^N\Pi_k\,\big(V^{\0 k}\big)_p=:\bigoplus_{p=0}^N\,\big(\overline{T}V\big)_p.
\]
Now, if $T\in\big(\overline{T}V\big)_p$ and $U\in\big(\overline{T}V\big)_q$, we have $T^k\in\big(V^{\0 k}\big)_p$ and $U^\ell\in\big(V^{\0 \ell}\big)_q$, so that $T\0 U\in\big(\overline{T}V\big)_{p+q}$ (where $p+q$ means $\zg_p+\zg_q$), which shows that $\overline{T}V$ is a $\Z_2^n$-graded (associative unital) algebra (over $\R$), as announced.

The ideal $\bar I$ is homogeneous with respect to the decomposition
\[
\overline{T}V=\Pi_k\bigoplus_{i_1\le\cdots\le i_k}V_{i_1,\dots, i_k},\qquad\text{i.e.,}\qquad \bar I=\Pi_{k(\ge 2)}\bigoplus_{i_1\le\cdots\le i_k}\big(V_{i_1,\dots, i_k}\cap \bar I\big).
\]
Therefore, the $\Z_2^n$-symmetric tensor algebra of $V$ is given by
\begin{align}
\bar{S}V&=\Pi_k \bigoplus_{i_1\le\cdots\le i_k} V_{i_1,\dots,i_k}/\big(V_{i_1,\dots,i_k}\cap \bar I\big)=:\Pi_k \bigoplus_{i_1\le\cdots\le i_k} V_{i_1}\odot\cdots\odot V_{i_k}\nonumber
\\ \label{GraSymTenAlg}
&=\bigoplus_{p=0}^N\;\,\Pi_k\! \bigoplus_{\begin{subarray}{c}{i_1\le\cdots\le i_k}\\ {\sum_j\zg_{i_j}=\zg_p}\end{subarray}} V_{i_1}\odot\cdots\odot V_{i_k},
\end{align}
see~\cite{BP13}. We~denote by $\odot$ the $\Z_2^n$-commutative multiplication that is induced on $\bar{S}V$ by the multiplication $\0$ of $\overline{T}V$. By definition, we have, for $[T]\in (\bar S V)_{\zg_i}$ and $[U]\in (\bar S V)_{\zg_j}$ (obvious notation),
\[
[T]\odot[U]=[T\0 U]=(-1)^{\la \zg_i,\zg_j\rangle}[U]\odot[T].
\]
For instance, if $v_i\in V_i\subset(\bar{S}V)_{\zg_i}$, $v_j\in V_j\subset(\bar{S}V)_{\zg_j}$ and if $i\le j$, we get
\begin{gather}\label{GraCom}
v_i\odot v_j=[v_i\0 v_j] =\big[(-1)^{\la\zg_i,\zg_j\rangle}v_j\0v_i\big] =(-1)^{\la\zg_i,\zg_j\rangle}v_j\odot v_i\in V_i\odot V_j.
\end{gather}

Notice further that, if $i<j$, the linear map
\begin{gather}\label{CanIsoTen}
\iota\colon\ V_i\0 V_j\ni T\mapsto [T]\in V_i\odot V_j\qquad\ (\iota\colon\ V_i\0 V_j\ni v_i\0 v_j\mapsto v_i\odot v_j\in V_i\odot V_j)
\end{gather}
is a vector space isomorphism. Indeed, if $[T]=0$, the representative $T$ is a vector in $(V_i\0 V_j\oplus V_j\0 V_i)\cap\bar I$ and is therefore a finite sum of generators of $\bar I$:
\begin{gather}\label{Inj}
(-1)^{\la \zg_i,\zg_j\rangle}\sum_kv^k_j\0 v^k_i=\sum_kv^k_i\0 v^k_j-T\in (V_i\0 V_j)\cap(V_j\0 V_i)=\{0\}.
\end{gather}
It follows that the {\small LHS} of
equation~\eqref{Inj} vanishes; hence, the first term of the {\small RHS} vanishes, due to the isomorphism $V_i\0 V_j\simeq V_j\0 V_i$, and thus $T$ vanishes as well. In order to show that $\iota$ is also surjective, consider an arbitrary vector in $V_i\odot V_j$. It~reads
\[
[T]=\bigg[\sum_k v_i^k\0 v_j^k+\sum_\ell w_j^\ell\0 w_i^\ell\bigg].
\]
The image by $\iota$ of
\[
\sum_k v_i^k\0 v_j^k+(-1)^{\la\zg_i,\zg_j\rangle}\sum_\ell w_i^\ell\0 w_j^\ell\in V_i\0 V_j
\]
is the corresponding class. This class coincides with $[T]$, since the difference of the representatives is a vector of $\bar I$.

It follows that, for $n=2$ for instance, we have in particular
\begin{align}
V_{00}\odot V_{00}\odot V_{01}\odot V_{10}\odot V_{10}\odot V_{10}\odot V_{11}&\simeq \odot^2 V_{00}\0 V_{01}\0\odot^3V_{10}\0 V_{11}\nonumber
\\
&\simeq\vee^2 V_{00}\0 V_{01}\0\wedge^3V_{10}\0 V_{11},\label{IsoGraSymTenPro}
\end{align}
where $\vee$ (resp., $\wedge$) is the symmetric (resp., antisymmetric) tensor product. Moreover, if the (finite dimensional) vector space $V$ has dimension $q_0|q_1$, $q_2$, $q_3$, we denote the vectors of its basis (in accordance with the notation we adopted earlier in this text) by $b^i_{j}$, where $i\in\{0,1,2,3\}$ refers to the degrees $00$, $01$, $10$, $11$ and where $j\in\{1,\dots,q_i\}$. The basis of the $\Z_2^n$-symmetric tensor product~\eqref{IsoGraSymTenPro} is then made of the tensors
\[
b^0_{j_1}\vee b^0_{j_2}\0 b^1_{j_3}\0 b^2_{j_4}\wedge b^2_{j_5}\wedge b^2_{j_6}\0 b^3_{j_7}
\]
$(j_1\le j_2\;\text{and}\;j_4<j_5<j_6)$, which can also be written
\[
b^0_{j_1}\odot b^0_{j_2}\odot b^1_{j_3}\odot b^2_{j_4}\odot b^2_{j_5}\odot b^2_{j_6}\odot b^3_{j_7}
\]
$(j_1\le j_2\;\text{and}\;j_4<j_5<j_6)$ (see~\eqref{CanIsoTen}). More generally, the basis of $V_{i_1}\odot\cdots\odot V_{i_k}$ ($i_1\le\cdots\le i_k$) is made of the tensors
\begin{gather}\label{BasGraSymTenAlg}
b^{i_1}_{j_1}\odot\cdots\odot b^{i_k}_{j_k}
\end{gather}
($j_\ell\le j_{\ell+1}$ (resp., $<$), if $i_\ell = i_{\ell+1}$ and $\la\zg_{i_\ell},\zg_{i_{\ell+1}}\rangle$ even (resp., odd)). To refer to the previous condition regarding the $j$-s, we write in the following $j_1\lhd\cdots\lhd j_k$.

Observe also that
\[
S^kV=\bigoplus_{i_1\le\cdots\le i_k}V_{i_1}\odot\cdots\odot V_{i_k}=S^k\bigoplus_i V_i=\bigg(\bigotimes_i SV_{i}\bigg)^k,
\]
as well as that, in order to define a linear map on $V_{i_1}\odot\cdots\odot V_{i_k}$ (see~\eqref{IsoGraSymTenPro}), it suffices to define a $k$-linear map on $V_{i_1}\times\cdots\times V_{i_k}$ that is $\Z_2^n$-commutative in the variables $i_\ell=\cdots=i_m$.

\medskip\noindent
{\bf Manifoldification functor.} If $V$ is a $\Z_2^n$-graded vector space, its dual $V^\vee$ is defined by
\[
V^\vee:=\ul{\op{Hom}}(V,\R)=\bigoplus_{i=0}^N\ul{\op{Hom}}_{\zg_i}(V,\R) =\bigoplus_{i=0}^N\op{Hom}(V_i,\R)=\bigoplus_{i=0}^N(V_i)^\vee\in\Z_2^n\tt Vec.
\]
More explicitly, we consider the space of $\R$-linear maps from $V$ to $\R$ of any $\Z_2^n$-degree. It~is clear that the linear maps of degree $\zg_i$ are the linear maps from $V_i$ to $\R$ (that vanish in any other degree). Hence,
\[
\big(V^\vee\big)_i=(V_i)^\vee=:V_i^\vee.
\]
It follows that, if $V$ is finite dimensional of dimension $p|\ul q$, its dual $V^\vee$ has the same dimension. Moreover, any basis $(b^i_k)_{i,k}$ ($i\in\{0,\dots, N\}$ and $k\in\{1,\dots, q_i\}$, where we set $q_0:= p$) of $V$ defines a dual basis $(\zb^k_i)_{i,k}$ of $V^\vee$.

Let now $V\in\Z_2^n\tt FinVec$ be of dimension $p|\ul q$. We~set
\[
V^\vee_*:=\bigoplus_{j=1}^N V^\vee_{j}\in\Z_2^n{\tt FinVec}\qquad
\big(\op{dim}\big(V^\vee_*\big)=0|\ul q\big).
\]

\begin{Proposition}\label{NonCanIso}
If $V$ is a $\Z_2^n$-graded vector space of dimension $p|\ul q$, there is a non-canonical isomorphism of $\Z_2^n$-commutative associative unital $\R$-algebras
\begin{gather*}
\flat\colon\ \bar{S}\big(V^\vee_*\big)\stackrel{\sim}{\longrightarrow}\R[[\zx]],
\end{gather*}
where $\R[[\zx]]$ is the global function algebra of $\R^{0|\ul q}$. \end{Proposition}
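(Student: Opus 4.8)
The plan is to construct the isomorphism explicitly after choosing a basis, and to check bijectivity one tensor degree at a time, using the monomial basis~\eqref{BasGraSymTenAlg}. First I would fix a homogeneous basis $(b^i_k)_{i,k}$ of $V$ (with $i\in\{0,\dots,N\}$, $k\in\{1,\dots,q_i\}$, $q_0:=p$, $\deg(b^i_k)=\zg_i$), pass to the dual basis $(\zb^k_i)_{i,k}$ of $V^\vee$, and restrict it to the non-zero degree part to obtain the basis $(\zb^k_j)_{j\in\{1,\dots,N\},\,k\in\{1,\dots,q_j\}}$ of $V^\vee_*$, which indeed has dimension $0|\ul q$. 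On the other side, recall that $\R[[\zx]]=\cO_{\R^{0|\ul q}}(\{\star\})$ is the completed $\Z_2^n$-commutative associative unital $\R$-algebra topologically generated by the formal parameters $\zx^k_j$ of degree $\zg_j$, subject only to the $\Z_2^n$-commutation rules $\zx^k_i\zx^\ell_j=(-1)^{\la\zg_i,\zg_j\rangle}\zx^\ell_j\zx^k_i$ (so that, in the even-like non-zero degrees, the generators are not nilpotent, which is precisely why power series, not polynomials, appear).

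Next I would define $\flat$ on generators by $\flat(\zb^k_j):=\zx^k_j$. Since $\bar S(V^\vee_*)$ is, by construction, the $\Pi_k$-completion of the quotient of the $\Z_2^n$-graded tensor algebra $\overline{T}(V^\vee_*)$ by the homogeneous ideal generated by the elements $v_i\0 v_j-(-1)^{\la\zg_i,\zg_j\rangle}v_j\0 v_i$, and since the $\zx^k_j$ satisfy exactly these relations, the assignment on generators extends to an algebra morphism $\overline{T}(V^\vee_*)\to\R[[\zx]]$ that annihilates the ideal, hence descends to the quotient and then to the completions: an element of $\bar S(V^\vee_*)$ is a sequence $(T^k)_k$ of $\Z_2^n$-symmetric tensors, and $\sum_k\flat(T^k)$ is a well-defined formal power series because each total monomial degree is reached by only one $T^k$. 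The resulting $\flat\colon\bar S(V^\vee_*)\to\R[[\zx]]$ is an $\R$-algebra morphism, and it is $\Z_2^n$-degree preserving because $\zb^k_j$ and $\zx^k_j$ both carry degree $\zg_j$ (compare the gradings displayed in~\eqref{GraSymTenAlg}).

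Then I would establish bijectivity degree by degree using~\eqref{BasGraSymTenAlg}: for each $k\ge 1$ the tensors $\zb^{i_1}_{j_1}\odot\cdots\odot\zb^{i_k}_{j_k}$, with $i_1\le\cdots\le i_k$ and $j_1\lhd\cdots\lhd j_k$, form a basis of $S^k(V^\vee_*)$; their $\flat$-images $\zx^{i_1}_{j_1}\cdots\zx^{i_k}_{j_k}$ are exactly the standard ordered monomials forming a basis of the space of homogeneous degree-$k$ polynomials in the $\zx^k_j$. Hence $\flat$ restricts to a linear isomorphism $S^k(V^\vee_*)\stackrel{\sim}{\to}\R[\zx]_k$ for every $k$, and passing to the products $\Pi_k$ on both sides — which are $\bar S(V^\vee_*)$ on the left and $\R[[\zx]]$ on the right — shows that $\flat$ is bijective. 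Therefore $\flat$ is an isomorphism of $\Z_2^n$-commutative associative unital $\R$-algebras; it depends on the chosen basis $(b^i_k)$ (equivalently, on the chosen linear coordinates of $\R^{0|\ul q}$), which accounts for ``non-canonical''.

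The only genuinely delicate point, and the one I would be most careful about, is keeping the formal completions straight: one must verify that $\flat$ is well-defined on all of $\bar S(V^\vee_*)=\Pi_k S^k(V^\vee_*)$ and surjects onto all of $\R[[\zx]]$, not merely onto its polynomial subalgebra. This is exactly why $\bar S$ was defined with $\Pi_k$ rather than $\bigoplus_k$, and why no nilpotency can be used in the non-zero even-like degrees; but since the isomorphism is built level by level and each level is finite dimensional, taking the product is harmless once the level-wise statement via~\eqref{BasGraSymTenAlg} is in place.
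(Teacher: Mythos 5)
Your proposal is correct and follows essentially the same route as the paper: choose a homogeneous basis of $V_*$, match the monomial basis~\eqref{BasGraSymTenAlg} of $\bar S\big(V^\vee_*\big)=\Pi_\za\R\,\zb^\za$ with the monomials $\zx^\za$ spanning $\R[[\zx]]=\Pi_\za\R\,\zx^\za$, and observe that the multiplications (\eqref{TenMult} and~\eqref{GraCom} versus the $\Z_2^n$-commutation of the $\zx$'s) coincide, so the level-wise identification is compatible with the $\Pi_k$-completions. Your extra step of first defining the map on $\overline{T}\big(V^\vee_*\big)$ and descending through the ideal is a harmless elaboration of the same argument.
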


\begin{proof} As usual, we ordered the $\Z_2^n$-degrees lexicographically, so that the $\zx_j^\ell$-s are ordered unambiguously. We~have
\[
\R[[\zx]]=\Pi_\za\R\,\zx^\za,
\]
where the multi-index $\za$ has components $\za_j^\ell\in\mathbb{N}$ (resp., $\za_j^\ell\in\{0,1\}$), if $\la\zg_j,\zg_j\rangle$ is even (resp., odd).

On the other hand, it follows from equations~\eqref{GraSymTenAlg} and~\eqref{BasGraSymTenAlg} that, choosing a basis $\big(b^j_\ell\big)_{j,\ell}$ of $V_*$ (defined similarly as $V^\vee_*$) and denoting its dual basis by $(\zb^\ell_j)_{j,\ell}$, leads to
\begin{gather}\label{SBarBasis}
\bar S\big(V^\vee_*\big)=\Pi_k\bigoplus_{j_1\le\cdots\le j_k}\bigoplus_{\ell_1\lhd\cdots\lhd \ell_k}\R\,\zb_{j_1}^{\ell_1}\odot\cdots\odot \zb_{j_k}^{\ell_k}=\Pi_k\bigoplus_{|\za|=k}\R\,\zb^\za=\Pi_\za\R\,\zb^\za,
\end{gather}
where $\za_j^\ell\in\mathbb{N}$ (resp., $\za_j^\ell\in\{0,1\}$), if $\la\zg_j,\zg_j\rangle$ is even (resp., odd).

In view of~\eqref{TenMult} and~\eqref{GraCom}, the multiplications of $\R[[\zx]]$ and $\bar S\big(V^\vee_*\big)$ are exactly the same, so that the two $\Z_2^n$-commutative algebras are canonically isomorphic, once a basis of $V_*$ has been chosen.
\end{proof}

\begin{Remark}
We denoted the isomorphism by $\flat$ to remind us of its dependence on the basis~$\big(b^j_\ell\big)_{j,\ell}$.
\end{Remark}

We are now prepared to define the linear $\Z_2^n$-manifold associated to a finite dimensional $\Z_2^n$-vector space. From here we denote the vector space by $\mbf{V}$ instead of $V$ and reserve the notation~$V$ for the manifold $V:=\cM(\mbf{V})$.

Hence, let $\mbf{V}\in\Z_2^n\tt FinVec$ be of dimension $p|\ul q$. The $p$-dimensional vector space $\mbf{V}_0$ of degree~$0$ is of course a smooth manifold of dimension~$p$, as well as a linear $\Z_2^n$-manifold $V_0$ of dimension~$p|\ul 0$. On the other hand, the algebra $\bar{S}\big(\mbf{V}^\vee_*\big)$ is a sheaf of $\Z_2^n$-commutative associative unital $\R$-algebras over $\{\star\}$, i.e., it is a $\Z_2^n$-ringed space with underlying topological space $\{\star\}$, and, in view of~Proposition~\ref{NonCanIso}, this space is (non-canonically) globally isomorphic to $\R^{0|\ul q}=(\{\star\},\R[[\zx]])$. Hence, the space $\big(\{\star\},\bar{S}\big(\mbf{V}_*^\vee\big)\big)$ is a linear $\Z_2^n$-manifold $V_>$ of~dimension $0|\ul q$. Finally, the product $V=V_0\times V_>$ is a $\Z_2^n$-manifold of dimension $p|\ul q$, with base manifold $V_0\times\{\star\}\simeq V_0$ and function sheaf $\cO_{V}$ that is, for any open subset $\zW\subset V_0\simeq \R^p$, given~by
\begin{align}
\cO_V(\zW)&=\cO_{V_0\times V_>}(\zW\times\{\star\})= \Ci_{V_0}(\zW)\widehat{\0}_\R\,\cO_{V_>}(\{\star\})\simeq \Ci(\zW)\widehat{\0}_\R\,\R[[\zx]]\nonumber
\\
&= \Ci(\zW)[[\zx]]=\cO_{\R^{p|\ul q}}(\zW)\label{FunShV}
\end{align}
(since $\zW$ and $\{\star\}$ are $\Z_2^n$-chart domains; for more information about the problem with the function sheaf of product $\Z_2^n$-manifolds, we refer the reader to~\cite{Bruce:2019}). In particular, the $\Z_2^n$-algebras $\cO_V(V_0)$ and $\cO_{\R^{p|\ul q}}(\R^p)$ are isomorphic (see also Definition~13 of product $\Z_2^n$-manifolds in~\cite{Bruce:2019}), so that the $\Z_2^n$-manifolds $V$ and $\R^{p|\ul q}$ are diffeomorphic (given what has been said above, the diffeomorphism is implemented by the choice of a basis of $\mbf{V}$). Finally $V\in\Z_2^n\tt LinMan$ $\big(\op{dim} V=p|\ul q\big)$. We~define the manifoldification functor $\cM$ on objects by
\[
\cM(\mbf{V})=V.
\]

We now define $\cM$ on morphisms. A degree zero linear map $\mbf{L}\colon \mbf{V}\to\mbf{W}$ between finite dimensional vector spaces (of dimensions $p|\ul q$ and $r|\ul s$, respectively) is a family of linear maps $\mbf{L}_i\colon \mbf{V}_i\to\mbf{W}_i$ ($i\in\{0,\dots, N\}$). We~denote the transpose maps by $^t\mbf{L}_i\colon \mbf{W}_i^\vee\to \mbf{V}_i^\vee$.

The linear map $\mbf{L}_0\colon \mbf{V}_0\to\mbf{W}_0$ is of course a smooth map $L_0\colon V_0\to W_0$, where $V_0,W_0$ are the vector spaces $\mbf{V}_0,\mbf{W}_0$ viewed as smooth manifolds. The map $L_0$ can also be interpreted as $\Z_2^n$-morphism $L_0\colon V_0\to W_0$ between the $\Z_2^n$-manifolds $V_0,W_0$ (which are of dimension zero in all non-zero degrees). The base morphism of $L_0$ is $L_0$ itself and, for any open subset $\zW\subset W_0$, the pullback $(L_0)^*_\zW$ is the (unital) algebra morphism $-\circ L_0|_\zw\colon \Ci(\zW)\to \Ci(\zw)$ $\big(\zw:=L_0^{-1}(\zW)\big)$ that extends the transpose $^t\mbf{L}_0(-)= -\circ \mbf{L}_0$.

The linear maps $^t\mbf{L}_j\colon \mbf{W}_j^\vee\to \mbf{V}_j^\vee$ ($j\in\{1,\dots, N\}$) define a linear map
\[
{\bar S}\big(^t\mbf{L}\big)\colon\ {\bar S}\big(\mbf{W}_*^\vee\big)\to {\bar S}\big(\mbf{V}_*^\vee\big).
\]
Observe first that to define such a map, it suffices to define a linear map in each tensor degree~$k$, hence, it suffices to define a linear map
\[
\big(^t\mbf{L}\big)^{\odot k}_{j_1\dots j_k}\colon\ \mbf{W}_{j_1}^\vee\odot\cdots\odot\mbf{W}_{j_k}^\vee \to\mbf{V}_{j_1}^\vee\odot\cdots\odot\mbf{V}_{j_k}^\vee,
\]
for any $j_1\le\dots \le j_k$ ($j_a\in\{1,\dots,N\}$). Since the $k$-linear maps
\begin{gather*}
\big(^t\mbf{L}\big)^{\times k}_{j_1\dots j_k}\colon\ \mbf{W}_{j_1}^\vee\times\cdots\times\mbf{W}_{j_k}^\vee\ni \big(\zw^1_{j_1},\dots,\zw^k_{j_k}\big)\mapsto
\\ \hphantom{\big(^t\mbf{L}\big)^{\times k}_{j_1\dots j_k}\colon\ }
^t\mbf{L}_{j_1}\big(\zw^1_{j_1}\big)\odot\cdots\odot\, ^t\mbf{L}_{j_k}\big(\zw^k_{j_k}\big)\in\mbf{V}_{j_1}^\vee\odot\cdots\odot\mbf{V}_{j_k}^\vee
\end{gather*}
are $\Z_2^n$-commutative in the variables $j_\ell=\cdots=j_m$, they define the degree zero linear maps $\big(^t\mbf{L}\big)^{\odot k}_{j_1\dots j_k}$ $\big($we set $\big(^t\mbf{L}\big)^{\odot\, 0}=\id_\R\big)$ and thus the degree zero linear map $\bar S\big(^t\mbf{L}\big)$ that we are looking for. In view of our definitions, the latter is a (unital) $\Z_2^n$-algebra morphism between the global function algebras of the $\Z_2^n$-manifolds $W_>$ and $V_>$, and it therefore defines a unique $\Z_2^n$-morphism $L_>\colon V_>\to W_>$. The base morphism of $L_>$ is the identity $c\colon \{\star\}\to\{\star\}$.

We thus get a $\Z_2^n$-morphism
\begin{gather}\label{ML}
\cM(\mbf{L}):=L:=L_0\times L_>\colon\ \cM(\mbf{V})=V=V_0\times V_>\to \cM(\mbf{W})=W=W_0\times W_>,
\end{gather}
with base map $L_0\times c\simeq L_0$ and pullback ($\zW$ open subset of $W_0$, $\zw:=L_0^{-1}(\zW)$)
\begin{gather}\label{PullProd}
L^*_\zW\colon\ \cO_{W}(\zW)=\Ci_{W_0}(\zW)\widehat{\otimes}_\R\,\bar S\big(\mbf{W}^\vee_*\big)\to \cO_{V}(\zw)=\Ci_{V_0}(\zw)\widehat{\otimes}_\R\,\bar S\big(\mbf{V}^\vee_*\big),
\end{gather}
which is fully defined by $(-\circ L_0|_\zw)\otimes \bar S\big(^t\mbf{L}\big)$.

We must now prove that the $\Z_2^n$-morphism $\cM(\mbf{L})=L$ is a morphism of $\Z_2^n\tt LinMan$, i.e., that in linear coordinates it has linear coordinate pullbacks. As said above, the linear coordinate map $\rmk\colon W\to\R^{r|\ul s}$ is the product of the linear coordinate maps $\rmk_0\colon W_0\to \R^{r|\ul 0}$ and $\rmk_>\colon W_>\to \R^{0|\ul s}$. The first of these coordinate maps is implemented by a basis $b_W$ of $\mbf{W}_0$ and its global pullback $b_W^*\colon \Ci(\R^r)\to \Ci_{W_0}(W_0)$ sends a coordinate function $y^\ell\in \Ci(\R^r)$ to
\[
b_W^*\big(y^\ell\big)=y^\ell\circ b_W=\zb_W^\ell\in\Ci_{W_0}(W_0),
\]
where $\zb_W$ is the dual basis (observe that $b^*_W$ extends the transpose of $b_W$ viewed as vector space isomorphism). Similarly, it is clear from Proposition~\ref{NonCanIso} that the global pullback $\flat_W^{-1}$ of the second coordinate map sends a coordinate function $\zh_j^\ell\in\R[[\zh]]$ to
\[
\flat^{-1}_W\big(\zh_j^\ell\big)=\zb_j^\ell\in\bar S\big(\mbf{W}^\vee_*\big),
\]
where $\big(\zb_j^\ell\big)_{j,\ell}$ is the dual of a basis of $\mbf{W_*}$. Based on what we just said and on the statement~\eqref{PullProd}, we get that the coordinate pullbacks in the linear coordinate expression of $L$ are
\[
\big(b^*_V\big)^{-1}\big(\big(b_W^*\big(y^\ell\big)\big)\circ L_0\big) =\big(b^*_V\big)^{-1}\big(^t\mbf{L}_0\big(\zb_W^{\ell}\big)\big)
=\big(b^*_V\big)^{-1}\bigg(\sum_k(\mbf{L}_0)^\ell_k\zb_V^k\bigg)
=\sum_k(\mbf{L}_0)^\ell_kx^k
\]
and \[
\flat_V\big(^t\mbf{L}_j\big(\flat_W^{-1}\big(\zh_j^{\ell}\big)\big)\big)
=\flat_V\big(^t\mbf{L}_j\big(\zb_j^{\ell}\big)\big) =\flat_V\bigg(\sum_k(\mbf{L}_j)^\ell_k\zb^k_j\bigg)=\sum_k(\mbf{L}_j)^\ell_k\zx_j^k,
\]
where the notations are self-explanatory. Hence, $\cM(\mbf{L})\colon \cM(\mbf{V})\to\cM(\mbf{W})$ is a morphism of $\Z_2^n\tt LinMan$.

Since $\cM(\mbf{L})$ is essentially the transpose of $\mbf L$, we have defined a functor
\[
\cM\colon\ \Z_2^n{\tt FinVec}\to\Z_2^n\tt LinMan
\]
and this functor coincides on $\Z_2^n\tt CarVec$ with the functor $\cM$ that we defined earlier.

We already mentioned that the $\z$-diffeomorphism, say $\rmh$, between $V=\cM(\mbf{V})$ and $\R^{p|\ul q}$ is implemented by a basis $(b^i_k)_{i,k}$ of $\mbf{V}$. Now we can explain this observation in more detail. Indeed, the basis chosen provides a $\z$-vector space isomorphism $\mbf{b}\colon \mbf{V}\to\mbf{R}^{p|\ul q}$, hence, the image $\cM(\mbf{b})\colon \cM(\mbf{V})\to \cM\big(\mbf{R}^{p|\ul q}\big)$ is a $\z$-diffeomorphism (it is even an isomorphism of $\z{\tt LinMan}$), say $b\colon V\to \R^{p|\ul q}$. The diffeomorphism $b=\cM(\mbf{b})$ is a special case of the map $L=\cM(\mbf{L})$ of $\z{\tt LinMan}$, whose construction has been described above. It~is almost obvious from the penultimate paragraph that the diffeomorphism $\rmh$ coincides with the diffeomorphism $b$. Indeed, the diffeomorphism $\rmh$ is the product of two $\z$-diffeomorphisms $\rmh_0\colon V_0\to\R^{p|\ul 0}$ and $\rmh_>\colon V_>\to\R^{0|\ul q}$ (see $\rmk$ in the penultimate paragraph). The same holds for $b$, which is defined as $b=b_0\times b_>$, where $b_0\colon V_0\to\R^{p|\ul 0}$ and $b_>\colon V_>\to \R^{0|\ul q}$ (see~\eqref{ML} and $\eqref{PullProd}$). The map $\rmh_0$ is canonically induced by the basis $(b^0_k)_k$ of $\mbf{V}_0$, and so is $b_0\,;$ hence
$\rmh_0=b_0$. The $\z$-diffeomorphism $b_>$ is defined by the corresponding $\z$-algebra isomorphism
\[
\bar{S}(^t\mbf{b})\colon\ \bar{S}\big(\big(\mbf{R}^{0|\ul q}\big)^\vee\big)\to \bar{S}\big(\mbf{V}_*^\vee\big),
\]
where the source algebra is $\Pi_\za\R\,\ze^\za=\R[[\zx]]$. As seen above, this algebra morphism is fully defined by the transposes $^t\mbf{b}_j\colon (\mbf{R}^{q_j})^\vee\to\mbf{V}_j^\vee$ and their action on the basis $(\ze^\ell_j)_\ell$. The action is
\[
^t\mbf{b}_j\big(\ze_j^\ell\big)=\ze_j^\ell\circ \mbf{b}_j=\zb_j^\ell,
\]
since the image of any $v_j=\sum_{k}v^k_j\, b_k^j\in\mbf{V}_j$ by the two maps is $v_j^\ell$. It follows that
\begin{gather}\label{GlobDiffcM1}
\bar{S}\big(^t\mbf{b}\big)=\flat^{-1}.
\end{gather}
This yields $b_>=\rmh_>$. Finally, we get
\begin{gather}\label{GlobDiffcM2}
\rmh=b=\cM(\mbf{b}).
\end{gather}

\noindent
{\bf Vectorification functor.} In this subsection, we define the vectorification functor
\[
\cV\colon\ \z{\tt LinMan}\to\z{\tt FinVec}.
\]

If $\sfl\in\z{\tt LinMan}$ has dimension $p|\ul q$, we set
\begin{gather}\label{DefVecFun}
\cV(\sfl):=\mbf{L}:=\big(\cO_\sfl^{\op{lin}}(|\sfl|)\big)^\vee =\bigoplus_i\big(\cO_{\sfl,\zg_i}^{\op{lin}}(|\sfl|)\big)^\vee=:\bigoplus_i\mbf{L}_i\in\z{\tt FinVec},
\end{gather}
where $\mbf{L}_i$ has dimension $q_i$ ($q_0=p$). Further, in view of item $(iii)$ of Remark~\ref{RemLinFct}, if $\zvf\colon \sfl\to\sfl'$ is a morphism of $\z\tt LinMan$, then $^t\zvf^*$ is a degree preserving linear map
\[
\cV(\zvf):=\mbf{\Phi}:=\,^t\zvf^*\colon\ \cV(\sfl) =\big(\cO_{\sfl}^{\op{lin}}(|\sfl|)\big)^\vee\to\big(\cO_{\sfl'}^{\op{lin}}(|\sfl'|)\big)^\vee =\cV(\sfl').
\]
The definition of $\cV(\zvf)$ implies that $\cV$ is a functor.

\medskip\noindent
{\bf Compositions of the manifoldification and the vectorification functors.}

\smallskip
$(i)$ We first turn our attention to $\cV\circ\cM$. If
\[
\mbf{V}\in\z{\tt FinVec}\qquad\big(\op{dim}\mbf{V}=p|\ul q\big),
\]
its image
\[
\cM(\mbf{V})=V=V_0\times V_>\in\z{\tt LinMan}\qquad\big(\op{dim}V=p|\ul q\big)
\]
is the product of the linear $\z$-manifolds $V_0$ and $V_>$. Let $\big(b^i_\ell\big)_{i,\ell}$ be a basis of $\mbf{V}$ with dual $\big(\zb_i^\ell\big)_{i,\ell}$ and induced $\z$-vector space isomorphism $\mbf{b}\colon \mbf{V}\to\mbf{R}^{p|\ul q}$ (we denote the induced diffeomorphism from $V_0$ to $\R^p$ by $b_0$). As explained above, it defines a linear coordinate map
\begin{gather}\label{LinCooMapFroBas}
\rmh=\cM(\mbf{b})\colon\ V\to\R^{p|\ul q}
\end{gather}
with pullback morphism
\[
\rmh^*=(-\circ b_0)\widehat{\0}_\R\,\flat^{-1}\colon\ \Ci(\R^p)\widehat{\0}_\R\,\R[[\zx]]\to \Ci_{V_0}(V_0)\widehat{\0}_\R\,\bar S\big(\mbf{V}^\vee_*\big)
\]
(see~\eqref{GlobDiffcM2},~\eqref{PullProd} and~\eqref{GlobDiffcM1}). Using equation~\eqref{LinFun}, denoting the basis of $\big(\mbf{R}^{p|\ul q}\big)^\vee$ as usual by~$\big(\ze_i^\ell\big)_{i,\ell}$, and remembering the identifications~\eqref{Equiv}, we thus get
\begin{align*}
\cV(\cM(\mbf{V}))&=\big(\cO_V^{\op{lin}}(V_0)\big)^\vee=\big(\rmh^*\cO_{\R^{p|\ul q}}^{\op{lin}}(\R^p)\big)^\vee=\bigg(\rmh^*\bigg(\bigoplus_\ell\R\,\ze^\ell_0\oplus\bigoplus_{j,\ell} \R\,\ze^\ell_j\bigg)\bigg)^\vee
\\
&=\bigg(\bigoplus_\ell\R\,\big(\ze^\ell_0\circ b_0\big)\oplus\bigoplus_{j,\ell}\R\,\flat^{-1}(\zx^\ell_j)\bigg)^\vee =\bigg(\bigoplus_\ell\R\,\zb^\ell_0\oplus\bigoplus_{j,\ell}\R\,\zb^\ell_j\bigg)^\vee=\mbf{V}.
\end{align*}

$(ii)$ Regarding $\cM\circ\cV$, recall that if
\[
\sfl\in\z{\tt LinMan}\qquad \big(\op{dim}\sfl=p|\ul q\big).
\]
Definition~\eqref{DefVecFun} yields $\cV(\sfl)=\mbf{L}=\big(\cO_{\sfl}^{\op{lin}}(|\sfl|)\big)^\vee$ (notice that $\mbf{L}$ denotes a vector space here, and not a linear map) and Definition~\eqref{FunShV} leads to $\cM(\mbf{L}):=L:=(L_0,\cO_L)$, where $L_0$ is $\mbf{L}_0=\big(\cO_{\sfl,\zg_0}^{\op{lin}}(|\sfl|)\big)^\vee$ viewed as smooth manifold, and where $\cO_L(\zw)$ ($\zw\subset L_0$ open) is
\[
\cO_L(\zw)=\Ci_{L_0}(\zw)\widehat{\0}_\R\,\bar S\big(\mbf{L}_*^\vee\big)
\]
(see~\eqref{FunShV}). If we choose a basis $\big(\zb_j^\ell\big)_{j,\ell}$ of $\mbf{L}_*^\vee$, we have
\[
\bar S\big(\mbf{L}^\vee_*\big)=\Pi_k\bigoplus_{j_1\le\cdots\le j_k}\bigoplus_{\ell_1\lhd\cdots\lhd \ell_k}\R\,\zb_{j_1}^{\ell_1}\odot\cdots\odot \zb_{j_k}^{\ell_k}=\Pi_\za\R\,\zb^\za,
\]
where $\za_j^k\in\mathbb{N}$ (resp., $\za_j^k\in\{0,1\}$), if $\la\zg_j,\zg_j\rangle$ is even (resp., odd) (see~\eqref{SBarBasis}). Just as
\[
\Ci_{\R^p}(\zW)\widehat{\0}_\R\,\Pi_\za\,\R\,\zx^\za=\Ci_{\R^p}(\zW)\widehat{\0}_\R\,\R[[\zx]] =\Ci_{\R^p}(\zW)[[\zx]]=\zP_\za\,\Ci_{\R^p}(\zW)\,\zx^\za
\]
($\zW\subset\R^p$ open) (see~\cite{Bruce:2019}), we have
\begin{gather}\label{FctShVec}
\cO_L(\zw)=\Pi_\za\,\Ci_{L_0}(\zw)\,\zb^\za =\Pi_k\bigoplus_{j_1\le\cdots\le j_k}\bigoplus_{\ell_1\lhd\cdots\lhd \ell_k}\Ci_{L_0}(\zw)\,\zb_{j_1}^{\ell_1}\odot\cdots\odot \zb_{j_k}^{\ell_k} .
\end{gather}

\begin{Remark}
Let us mention that $\sfl$ and $L$ denote a priori different linear $\z$-manifolds and that our goal is to show that they do coincide.
\end{Remark}

Recall first that, for any $\z$-manifold $M$, there is a projection
\[
\zve_M\colon\ \cO_{M}\to\Ci_{|M|}
\]
of $|M|$-sheaves of $\z$-algebras and that $\zve_M$ commutes with pullbacks. In particular, if $\rmh\colon \sfl\to\R^{p|\ul q}$ is a linear coordinate map of $\sfl$ (a (linear) $\z$-diffeomorphism), its pullback is, for any open subset $|U|\subset|\sfl|$, a $\z$-algebra isomorphism
\[
\rmh^*\colon\ \cO_{\R^{p|\ul q}}(|h|(|U|))\to \cO_\sfl(|U|)
\]
and it restricts to a $\z$-vector space isomorphism
\[
\rmh^*\colon\ \cO^{\op{lin}}_{\R^{p|\ul q}}(|h|(|U|))\to \cO^{\op{lin}}_\sfl(|U|).
\]
Further, as just said, we have
\[
\zve_\sfl\circ \rmh^*=\rmh^*\circ\zve_{\R^{p|\ul q}}=(-\circ|\rmh|)\circ\zve_{\R^{p|\ul q}}
\]
on $\cO_{\R^{p|\ul q}}(|h|(|U|))$. Taking $|U|=|\sfl|$ and restricting the equality to degree zero linear functions
\[
\cO_{\R^{p|\ul q},\zg_0}^{\op{lin}}(\R^p)=(\mbf{R}^{p})^\vee
\]
(see~\eqref{LinFun}), we obtain
\begin{gather}\label{HStar1}
\zve_\sfl\circ\rmh^*=-\circ|\rmh|,
\end{gather}
or, equivalently,
\begin{gather}\label{HStar1'}
\zve_\sfl=(-\circ|\rmh|)\circ(\rmh^*)^{-1},
\end{gather}
where $(\rmh^*)^{-1}$ is a vector space isomorphism from $(\mbf{L}_0)^\vee=\cO_{\sfl,\zg_0}^{\op{lin}}(|\sfl|)$ to $(\mbf{R}^p)^\vee$ and where $-\circ|\rmh|$ is an algebra isomorphism from $\Ci(\R^p)$ to $\Ci(|\sfl|)$. In view of the diffeomorphism $|\rmh|\colon |\sfl|\to \R^p$, the smooth manifold $|\sfl|$ is linear. Hence, it is a finite dimensional vector space also denoted $|\sfl|$ and $|\rmh|$ is a vector space isomorphism, whose dual $^t|\rmh|=-\circ|\rmh|$ is a vector space isomorphism from $(\mbf{R}^p)^\vee\subset\Ci(\R^p)$ to $|\sfl|^{\vee}$. It~follows (see also equation~\eqref{HStar1'}) that the canonical map $\zve_\sfl$ is a vector space isomorphism from $(\mbf{L}_0)^\vee$ to $|\sfl|^\vee$. When identifying these vector spaces, we get $\zve_\sfl=\id$ and $|\sfl|=\mbf{L}_0$, hence the corresponding linear manifolds do also coincide: $|\sfl|=L_0$.

To~prove that the linear $\z$-manifolds $\sfl$ and $L$ coincide, it now suffices to show that their function sheaves coincide. The pullback of $\rmh$ is an isomorphism $\rmh^*\colon \cO_{\R^{p|\ul q}}\to\cO_\sfl$ of sheaves of~$\z$-algebras. Since $\rmh^*$ is a $\z$-vector space isomorphism
\[
\rmh^*\colon\ \big(\mbf{R}^{0|\ul q}\big)^\vee=\cO^{\op{lin}}_{\R^{0|\ul q}}(\R^p)\to \cO^{\op{lin}}_{\sfl,*}(|\sfl|)=\mbf{L}_*^\vee,
\]
the images $\big(\rmh^*\big(\ze_j^\ell\big)\big)_{j,\ell}$ are a basis $\big(\zb_j^\ell\big)_{j,\ell}$ of $\mbf{L}_*^\vee$. Moreover, we know that
\[
|\rmh|=\big(\dots,\zve_\sfl\big(\rmh^*\!\big(\ze_0^\ell\big)\big),\dots\big)=\big(\dots,\rmh^*\!\big(\ze_0^\ell\big),\dots\big),
\]
as $\zve_\sfl=\id$ on $(\mbf{L}_0)^\vee$. Therefore, if $f(x)\in\Ci(\R^p)$, we get
\begin{gather}\label{HStar2}
\rmh^*(f(x))=f(\rmh^*\!(x))=f\circ \big(\dots,\rmh^*\!\big(\ze^\ell_0\big),\dots\big)=f\circ|\rmh|\in\Ci(|\sfl|).
\end{gather}
Equation~\eqref{HStar2} (which generalizes equation~\eqref{HStar1}) shows that $\rmh^*$ is an algebra isomorphism $\rmh^*\colon \Ci(\R^p)\to\Ci(|\sfl|)$. Similarly, if $\zw\subset|\sfl|$ is open, $\zW:=|\rmh|(\zw)\subset\R^p$ and $f(x)\in\Ci(\zW)$, we have
\[
\rmh^*(f(x))=f\circ|\rmh||_{\zw}\in\Ci(\zw),
\]
so that
\begin{gather}\label{IsoBase}
\rmh^*\colon\ \Ci(\zW)\to \Ci(\zw)
\end{gather}
is also an algebra isomorphism. Finally, the $\z$-algebra isomorphism
\begin{gather}\label{CompIso1}
\rmh^*\colon\ \Pi_\za\Ci(\zW)\,\zx^\za\to\cO_\sfl(\zw)
\end{gather}
sends any series $\sum_\za f_\za(x)\zx^\za$ to
\begin{align*}
\sum_\za\rmh^*(f_\za(x))\big(\dots,\rmh^*\!\big(\zx_j^\ell\big),\dots\big)^\za &=\sum_\za\rmh^*(f_\za(x))\big(\dots,\rmh^*\!\big(\ze_j^\ell\big),\dots\big)^\za
\\
&=\sum_\za\rmh^*(f_\za(x))\zb^\za\in\cO_L(\zw)
\end{align*}
(see~\eqref{FctShVec}). The $\z$-algebra morphism
\begin{gather}\label{CompIso2}
\rmh^*\colon\ \Pi_\za\Ci(\zW)\zx^\za\to \cO_L(\zw)
\end{gather}
we get this way (notice that the targets of the arrows~\eqref{CompIso1} and~\eqref{CompIso2} are different) is visibly an isomorphism. Indeed, it is obviously injective, and it is surjective due to~\eqref{IsoBase}. It~follows from~\eqref{CompIso1} and~\eqref{CompIso2} that $\cO_\sfl(\zw)=\cO_L(\zw)$, for any open subset $\zw\subset|\sfl|$. Since $\rmh^*$ commutes with restrictions, the sheaves $\cO_\sfl$ and $\cO_L$ coincide and $\cM(\cV(\sfl))=\sfl$. An alternative way of saying what we just said is to observe that in view of~\eqref{CompIso1} every element of $\cO_\sfl(\zw)$ is the image by $\rmh^*$ of a unique series $\sum_\za f_\za(x)\zx^\za$ and therefore belongs to $\cO_L(\zw)$. Conversely, in view of~\eqref{IsoBase} every element $\sum_\za g_\za \zb^\za$ ($g_\za\in\Ci(\zw)$) of $\cO_{L}(\zw)$ uniquely reads $\sum_\za \rmh^*(f_\za(x))\zb^\za$, is therefore the image by $\rmh^*$ of $\sum_\za f_\za(x)\zx^\za$ and so belongs to $\cO_\sfl(\zw)$.
$(iii)$ We leave it to the reader to check that both functors, $\cV\circ\cM$ and $\cM\circ\cV$, coincide also on morphisms with the identity functors.

\begin{Theorem}
The functors
\[
\cM\colon\ \z{\tt FinVec}\rightleftarrows\z{\tt LinMan}\ \,{:}\, \cV
\]
are an isomorphism of categories.
\end{Theorem}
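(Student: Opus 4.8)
The plan is to harvest the statement from the constructions already made in this subsection, so that no new machinery is needed. First I would note that $\cM\colon \z{\tt FinVec}\to\z{\tt LinMan}$ and $\cV\colon \z{\tt LinMan}\to\z{\tt FinVec}$ have already been shown to be functors; hence the claim reduces to checking that $\cV\circ\cM$ and $\cM\circ\cV$ are the respective identity functors. On objects this is exactly parts $(i)$ and $(ii)$ above: $\cV(\cM(\mbf{V}))=\mbf{V}$ for every finite dimensional $\Z_2^n$-vector space $\mbf{V}$, and $\cM(\cV(\sfl))=\sfl$ for every linear $\Z_2^n$-manifold $\sfl$. So the only remaining content is the morphism-level statement --- item $(iii)$ --- and it is this that I would spell out.

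For $\cV\circ\cM=\id$ on morphisms, I would use that part $(i)$ not only gives $\cV(\cM(\mbf{V}))=\big(\cO^{\op{lin}}_{\cM(\mbf{V})}(|\cM(\mbf{V})|)\big)^\vee=\mbf{V}$ but, read backwards, also the identification $\cO^{\op{lin}}_{\cM(\mbf{V})}(|\cM(\mbf{V})|)\simeq\mbf{V}^\vee$ (and likewise for $\mbf{W}$). Given a degree zero linear map $\mbf{L}\colon\mbf{V}\to\mbf{W}$, the $\Z_2^n$-morphism $\cM(\mbf{L})=L$ is linear with pullback $(-\circ L_0)\widehat{\0}_\R\,\bar S({}^t\mbf{L})$ (see~\eqref{ML} and~\eqref{PullProd}); restricting this pullback to linear $\z$-functions picks out $(-\circ\mbf{L}_0)={}^t\mbf{L}_0$ in degree $0$ and the degree-one part of $\bar S({}^t\mbf{L})$, which is ${}^t\mbf{L}_j$, in degree $\zg_j$. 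In other words $(\cM(\mbf{L}))^*$ restricted to linear $\z$-functions is the transpose ${}^t\mbf{L}\colon\mbf{W}^\vee\to\mbf{V}^\vee$, so that $\cV(\cM(\mbf{L}))={}^t\big((\cM(\mbf{L}))^*\big|_{\op{lin}}\big)={}^t({}^t\mbf{L})=\mbf{L}$, the last step being the canonical bidual identification $\mbf{V}^{\vee\vee}\simeq\mbf{V}$ that is already built into the definition $\cV(\sfl)=\big(\cO^{\op{lin}}_\sfl(|\sfl|)\big)^\vee$.

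For $\cM\circ\cV=\id$ on morphisms, I would invoke item $(iii)$ of Remark~\ref{RemLinFct}: a linear $\Z_2^n$-morphism $\zvf\colon\sfl\to\sfl'$ is precisely one whose pullback restricts to a degree preserving linear map $\zvf^*\colon\cO^{\op{lin}}_{\sfl'}(|\sfl'|)\to\cO^{\op{lin}}_{\sfl}(|\sfl|)$, and by definition $\cV(\zvf)={}^t\zvf^*$. Feeding this into $\cM$ produces, by the construction just recalled, a linear $\Z_2^n$-morphism $\cM(\cV(\zvf))$ whose pullback on linear $\z$-functions is the transpose of ${}^t\zvf^*$, that is $\zvf^*$ itself. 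Thus $\cM(\cV(\zvf))$ and $\zvf$ are linear $\Z_2^n$-morphisms $\sfl\to\sfl'$ inducing the same map on linear $\z$-functions; choosing linear coordinate maps $\rmh\colon\sfl\to\R^{p|\ul q}$ and $\rmk\colon\sfl'\to\R^{r|\ul s}$, the Cartesian $\Z_2^n$-morphisms $\rmk\circ\cM(\cV(\zvf))\circ\rmh^{-1}$ and $\rmk\circ\zvf\circ\rmh^{-1}$ then have the same coordinate pullbacks, hence coincide by the fundamental theorem of $\Z_2^n$-morphisms, and therefore $\cM(\cV(\zvf))=\zvf$. Together with part $(i)$ this shows that $\cM$ and $\cV$ are mutually inverse, i.e.\ an isomorphism of categories.

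The substantive work has all been done already (Proposition~\ref{NonCanIso} together with parts $(i)$ and $(ii)$); the one thing I would be careful about is keeping the canonical identifications honest --- the bidual $\mbf{V}^{\vee\vee}\simeq\mbf{V}$, the identification $\cO^{\op{lin}}_{\cM(\mbf{V})}(|\cM(\mbf{V})|)\simeq\mbf{V}^\vee$, and the fact that the diffeomorphism $\rmh=\cM(\mbf{b})$ depends on a chosen basis $\mbf{b}$ while the final statement is basis-free. I also expect it to be worth saying once, explicitly, that the degree-zero (ordinary-manifold) direction and the nonzero-degree (Grassmann) direction are handled on exactly the same footing, since in both the restriction of a pullback to linear functions is literally the linear transpose; this is what makes ``$\cM$ is essentially the transpose and $\cV$ undoes it'' rigorous rather than merely heuristic.
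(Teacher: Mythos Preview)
Your proposal is correct and follows precisely the approach the paper takes: parts $(i)$ and $(ii)$ handle the object-level identities $\cV\circ\cM=\id$ and $\cM\circ\cV=\id$, and your item-$(iii)$ argument supplies exactly the morphism-level verification that the paper explicitly leaves to the reader. The two computations you give---that $(\cM(\mbf L))^*$ restricted to linear $\z$-functions is the transpose ${}^t\mbf L$, and that $\cM(\cV(\zvf))$ and $\zvf$ agree on linear coordinate pullbacks hence coincide by the chart theorem---are the intended ones.
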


\noindent
{\bf Comparison of the functors of points.} Since $\z{\tt FinVec}\simeq\z{\tt LinMan}$, the fully faithful functors of points $\cF$ (see Proposition~\ref{FunPtsFinVecFF}) and $\mathcal{S}$ (see Theorem~\ref{FunPtsLinManFF}) of these categories should coincide. However, up till now, the functor $\cF$ is valued in ${\tt Fun}_0\big(\z{\tt Pts}^{\op{op}},{\tt AMod}\big)$, whereas the functor $\mathcal{S}$ is valued in ${\tt Fun}_0\big(\z{\tt Pts}^{\op{op}},{\tt FAMod}\big)$. Since $\tt FAMod$ is a subcategory of $\tt AMod$, the latter functor category is a subcategory of the former. Hence, if we show that the image $\cF(\mbf{V})$ of any object $\mbf{V}$ of $\z{\tt FinVec}$ is a functor of ${\tt Fun}_0\big(\z{\tt Pts}^{\op{op}},{\tt FAMod}\big)$ ($\star$) and that the image $\cF(\zvf)$ of any morphism $\zvf\colon \mbf{V}\to \mbf{W}$ of $\z{\tt FinVec}$ is a natural transforation of ${\tt Fun}_0\big(\z{\tt Pts}^{\op{op}},{\tt FAMod}\big)$ ($\ast$), we~can conclude that $\cF$ is a functor
\[
\cF\colon\ \z{\tt FinVec}\to{\tt Fun}_0\big(\z{\tt Pts}^{\op{op}},{\tt FAMod}\big).
\]

We start proving ($\star$). Since $\tt FAMod$ is a subcategory of $\tt AMod$, we just have to show that the image
\[
\cF(\mbf{V})(\zL)=\mbf{V}(\zL)=(\zL\0\mbf{V})_0
\]
of any object $\zL$ of $\z{\tt Pts}^{\op{op}}$ is a Fr\'echet $\zL_0$-module ($\bullet$) and that the image
\[
\cF(\mbf{V})(\zf^*)=\mbf{V}(\zf^*)=(\zf^*\0\Id_{\mbf{V}})_0
\]
of any morphism $\zf^*\colon \zL\to\zL'$ of $\z\tt Alg$ is a morphism of $\tt FAMod$ ($\circ$).

\newcommand{\rma}{\textrm{a}}
\newcommand{\rmv}{\textrm{v}}

To prove ($\bullet$), we consider a basis of $\mbf{V}$ ($\dim\mbf{V}=p|\ul q$), i.e., an isomorphism $\mbf{b}\colon \mbf{V}\rightleftarrows\mbf{R}^{p|\ul q}\colon \mbf{b}^{-1}$ of $\z$-vector spaces. Since $\cF(\mbf{b})=\mbf{b}_-$ is a natural isomorphism of ${\tt Fun}_0\big(\z{\tt Pts}^{\op{op}},{\tt AMod}\big)$, any of its $\zL$-components is an isomorphism
\[
\mbf{b}_\zL\colon\ \mbf{V}(\zL)\rightleftarrows\mbf{R}^{p|\ul q}(\zL)\ \,{:}\, \mbf{b}_\zL^{-1}
\]
of $\zL_0$-modules. We~use this isomorphism to transfer to $\mbf{V}(\zL)$ the Fr\'echet vector space struc\-ture~of
\begin{gather}\label{TrivId}
\mbf{R}^{p|\ul q}(\zL)=\big(\zL\0\mbf{R}^{p|\ul q}\big)_0=\bigoplus_i\bigoplus_k\zL_{\zg_i}=\Pi_i\Pi_k\zL_{\zg_i}=\zL_0^{\times p}\times\zL_{\zg_1}^{\times q_1}\times\cdots\times\zL_{\zg_N}^{\times q_N}
\end{gather}
(see proof of Proposition~\ref{FL0Mod} and equation~\eqref{FunPtsYon1}), thus obtaining a well-defined Fr\'echet structure and making $\mbf{b}_\zL$ a Fr\'echet vector space isomorphism, i.e., a continuous linear map with continuous inverse. Since $\mbf{b}_\zL$ is $\zL_0$-linear, the action $\cdot$ of $\zL_0$ on $\mbf{V}(\zL)$ is related to its action $\triangleleft$ on $\mbf{R}^{p|\ul q}(\zL)$ by
\[
\rma\cdot\rmv=\mbf{b}_\zL^{-1}(\rma\triangleleft\mbf{b}_\zL(\rmv)),
\]
for any $\textrm{a}\in\zL_0$ and any $\textrm{v}\in\mbf{V}(\zL)$. The action $\cdot$ is thus the composite of the continuous maps $\mathop{\rm id}\times\,\mbf{b}_\zL$, $\triangleleft$, and $\mbf{b}_\zL^{-1}$, hence, it is itself continuous. The $\zL_0$-module and the Fr\'echet vector space structures on $\mbf{V}(\zL)$ therefore define a Fr\'echet $\zL_0$-module structure on $\mbf{V}(\zL)$ and $\mbf{b}_\zL$ becomes an~isomorphism of Fr\'echet $\zL_0$-modules (for any basis $\mbf{b}$ of $\mbf{V}$).

As concerns ($\circ$), recall that $\mbf{V}(\zf^*)$ is a $(\zf^*)_0$-linear map, where the algebra morphism $(\zf^*)_0\colon \zL_0\to \zL'_0$ is the restriction of $\zf^*$. Observe now that, in view of~\eqref{TrivId}, we have
\[
\mbf{R}^{p|\ul q}(\zf^*)=(\zf^*\0\Id)_0=\Pi_i\Pi_k\,\zf^*,
\]
so that $\mbf{R}^{p|\ul q}(\zf^*)$ is continuous as product of continuous maps (indeed, the $\z\tt Alg$-morphism $\zf^*$ is continuous as pullback of the associated $\z$-morphism). As $\mbf{b}_-$ is a natural transformation of~${\tt Fun}_0\big(\z{\tt Pts}^{\op{op}},{\tt AMod}\big)$, we have
\[
\mbf{V}(\zf^*)=\mbf{b}_{\zL'}^{-1}\circ\mbf{R}^{p|\ul q}(\zf^*)\circ\mbf{b}_\zL,
\]
so that $\mbf{V}(\zf^*)$ is continuous (and $(\zf^*)_0$-linear), hence, is a morphism of $\tt FAMod$.

It remains to show that ($\ast$) holds. We~know that $\cF(\zvf)=\zvf_-$ is a natural transformation of~${\tt Fun}_0\big(\z{\tt Pts}^{\op{op}},{\tt AMod}\big)$, i.e., its $\zL$-components $\zvf_\zL$ are $\zL_0$-linear maps and the naturality condition is satisfied. It~thus suffices to explain that $\zvf_\zL=(\Id\0\zvf)_0$ is continuous. Since $\zL$ is a~Fr\'echet algebra, it is a locally convex topological vector space ({\small LCTVS}) and $\Id\colon \zL\to\zL$ is a~degree zero continuous linear map. Further, since $\mbf{V}$ and $\mbf{W}$ are finite dimensional $\z$-vector spaces, the deg\-ree zero linear map $\zvf\colon \mbf{V}\to\mbf{W}$ is automatically continuous for the canonical {\small LCTVS} structures on its source and target. It~follows that $\Id\0\zvf$ and $(\Id\0\zvf)_0$ are continuous linear maps.

\begin{Proposition}\label{FunPtsFinVecFF2}
The functor
\[
\cF\colon\ \z{\tt FinVec}\to{\tt Fun}_0\big(\z{\tt Pts}^{\op{op}},{\tt FAMod}\big)
\]
is fully faithful.
\end{Proposition}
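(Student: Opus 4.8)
The plan is to obtain this as a purely formal consequence of Proposition~\ref{FunPtsFinVecFF} together with the discussion preceding the statement. Let $\iota\colon{\tt Fun}_0\big(\z{\tt Pts}^{\op{op}},{\tt FAMod}\big)\hookrightarrow{\tt Fun}_0\big(\z{\tt Pts}^{\op{op}},{\tt AMod}\big)$ denote the inclusion of the subcategory. The content of items $(\star)$, $(\bullet)$, $(\circ)$ and $(\ast)$ checked above is exactly that the functor $\cF$ of the present statement is well defined and that the composite $\iota\circ\cF$ coincides with the functor $\cF\colon\z{\tt FinVec}\to{\tt Fun}_0\big(\z{\tt Pts}^{\op{op}},{\tt AMod}\big)$ of Proposition~\ref{FunPtsFinVecFF}; in particular $\iota\circ\cF$ is fully faithful. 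So the task is to transfer full faithfulness from $\iota\circ\cF$ down to $\cF$, which is a little subtler than it looks because $\iota$ need not be full.

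I would fix $\mbf V,\mbf W\in\z{\tt FinVec}$ and pass to morphism sets. Abbreviating $H_0:=\Hom_{\z{\tt FinVec}}(\mbf V,\mbf W)$, $H:=\Hom_{{\tt Fun}_0(\z{\tt Pts}^{\op{op}},{\tt FAMod})}\big(\cF(\mbf V),\cF(\mbf W)\big)$ and $H':=\Hom_{{\tt Fun}_0(\z{\tt Pts}^{\op{op}},{\tt AMod})}\big(\cF(\mbf V),\cF(\mbf W)\big)$, one has the commuting triangle
\[
(\iota\circ\cF)_{\mbf V,\mbf W}=\iota_{\cF(\mbf V),\cF(\mbf W)}\circ\cF_{\mbf V,\mbf W}\colon\ H_0\longrightarrow H\longrightarrow H',
\]
in which $(\iota\circ\cF)_{\mbf V,\mbf W}$ is bijective by Proposition~\ref{FunPtsFinVecFF} and $\iota_{\cF(\mbf V),\cF(\mbf W)}\colon H\hookrightarrow H'$ is injective, being a subset inclusion of morphism sets. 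Since a composite which is injective has injective right factor, $\cF_{\mbf V,\mbf W}$ is injective, hence $\cF$ is faithful; since a composite which is surjective has surjective left factor, $\iota_{\cF(\mbf V),\cF(\mbf W)}$ is also surjective, hence bijective, and therefore $\cF_{\mbf V,\mbf W}=\big(\iota_{\cF(\mbf V),\cF(\mbf W)}\big)^{-1}\circ(\iota\circ\cF)_{\mbf V,\mbf W}$ is bijective, so $\cF$ is full. Concretely, fullness says that a natural transformation $\Phi_-\colon\cF(\mbf V)\to\cF(\mbf W)$ with continuous $\zL_0$-linear components, regarded (forgetting continuity) as a natural transformation in ${\tt Fun}_0\big(\z{\tt Pts}^{\op{op}},{\tt AMod}\big)$, is of the form $\cF(\phi)$ for a unique degree $0$ linear map $\phi\colon\mbf V\to\mbf W$, and the same underlying natural transformation then witnesses $\cF(\phi)=\Phi_-$ in ${\tt Fun}_0\big(\z{\tt Pts}^{\op{op}},{\tt FAMod}\big)$. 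As a by-product, $\iota_{\cF(\mbf V),\cF(\mbf W)}$ is in fact a bijection, i.e.\ \emph{every} $\zL_0$-linear natural transformation between objects in the image of $\cF$ automatically has continuous components.

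I expect the main obstacle to be purely conceptual: ${\tt FAMod}$ is \emph{not} a full subcategory of ${\tt AMod}$ (a module morphism between Fréchet modules need not be continuous), so one cannot simply invoke ``a fully faithful functor whose essential image lies in a full subcategory stays fully faithful''. The argument therefore genuinely uses that $\iota\circ\cF$ \emph{is} the functor of Proposition~\ref{FunPtsFinVecFF}, which is precisely what the preceding pages establish: that each $\mbf V(\zL)=(\zL\0\mbf V)_0$ carries a Fréchet $\zL_0$-module structure, and that each $\mbf V(\zf^*)=(\zf^*\0\Id_{\mbf V})_0$ and each component $\zvf_\zL=(\Id\0\zvf)_0$ are continuous --- shown by transporting the Fréchet structure of $\mbf R^{p|\ul q}(\zL)$ along a choice of basis and by continuity of pullbacks of $\z$-morphisms. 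Once this factorization is in hand, the present statement needs no further analysis beyond the elementary set-theoretic manipulation above.
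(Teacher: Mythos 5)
Your argument is correct and is essentially the paper's own proof, which simply declares the result ``obvious'' from Proposition~\ref{FunPtsFinVecFF} because ${\tt Fun}_0\big(\z{\tt Pts}^{\op{op}},{\tt FAMod}\big)$ is a subcategory of ${\tt Fun}_0\big(\z{\tt Pts}^{\op{op}},{\tt AMod}\big)$. Your extra care about the inclusion not being full (so that fullness of $\cF$ needs the injective-plus-surjective factorization argument rather than a bare appeal to restriction of target) is a legitimate and worthwhile sharpening of a point the paper glosses over, but it does not change the route.
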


\begin{proof}
The result is obvious in view of Proposition~\ref{FunPtsFinVecFF}, since ${\tt Fun}_0\big(\z{\tt Pts}^{\op{op}},{\tt FAMod}\big)$ is a subcategory of ${\tt Fun}_0\big(\z{\tt Pts}^{\op{op}},{\tt AMod}\big)$.
\end{proof}

\newcommand{\cS}{\mathcal{S}}

We are now ready to refine the idea expressed at the beginning of this subsection that the (fully faithful) functors of points
\[
\cF\colon\ \z{\tt FinVec}\to{\tt Fun}_0\big(\z{\tt Pts}^{\op{op}},{\tt FAMod}\big)
\]
(see Proposition~\ref{FunPtsFinVecFF2}) and
\[
\mathcal{S}\colon\ \z{\tt LinMan}\to{\tt Fun}_0\big(\z{\tt Pts}^{\op{op}},\tt FAMod\big)
\]
(see Theorem~\ref{FunPtsLinManFF}) of the isomorphic categories
\[
\cM\colon\ \z{\tt FinVec}\rightleftarrows\z{\tt LinMan}\ \,{:}\, \cV\
\]
should coincide.

\begin{Theorem}\label{NatIsoSMFGene}
The functors
\[
\mathcal{S}\circ\cM,\, \cF\colon\ \z{\tt FinVec}\to{\tt Fun}_0\big(\z{\tt Pts}^{\op{op}},{\tt FAMod}\big)
\]
are naturally isomorphic.
\end{Theorem}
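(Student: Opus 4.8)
The plan is to construct, for each finite dimensional $\z$-graded vector space $\mbf{V}$, a natural isomorphism $\zT_{\mbf{V}}\colon (\cS\circ\cM)(\mbf{V})\to\cF(\mbf{V})$ in ${\tt Fun}_0\big(\z{\tt Pts}^{\op{op}},{\tt FAMod}\big)$, and then to verify that the family $(\zT_{\mbf{V}})_{\mbf{V}}$ is itself natural in $\mbf{V}$, i.e.\ constitutes a natural isomorphism of functors. First I would unwind both sides on an arbitrary $\z$-Grassmann algebra $\zL\simeq\R^{0|\ul m}$. On the one hand, $(\cS\circ\cM)(\mbf{V})(\zL)=V(\zL)=\op{Hom}_{\z\tt Man}\big(\R^{0|\ul m},V\big)\simeq\op{Hom}_{\z\tt Alg}\big(\cO_V(|V|),\zL\big)$, where $V=\cM(\mbf{V})=V_0\times V_>$; on the other hand $\cF(\mbf{V})(\zL)=\mbf{V}(\zL)=(\zL\0\mbf{V})_0$. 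The obvious candidate for $\zT_{\mbf{V},\zL}$ is the composite of the $\zL_0$-module isomorphism ${\rm h}_\zL\colon V(\zL)\to\R^{p|\ul q}(\zL)$ induced by a linear coordinate map ${\rm h}=\cM(\mbf{b})$ (Proposition~\ref{FL0Mod}) with the identification $\R^{p|\ul q}(\zL)=\big(\zL\0\mbf{R}^{p|\ul q}\big)_0=\cF\big(\mbf{R}^{p|\ul q}\big)(\zL)$ from~\eqref{FunPtsYon1}, followed by $\mbf{b}_\zL^{-1}=\cF(\mbf{b})_\zL^{-1}$. So $\zT_{\mbf{V},\zL}:=\mbf{b}_\zL^{-1}\circ{\rm h}_\zL$, with ${\rm h}=\cM(\mbf{b})$.

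Next I would check the two ingredients of naturality. The $\zL$-naturality of $\zT_{\mbf{V}}$, i.e.\ that $(\zT_{\mbf{V},\zL})_\zL$ is a natural transformation $V(-)\to\mbf{V}(-)$, follows because ${\rm h}_-=\cS({\rm h})$ is a natural transformation (${\rm h}$ is a $\z$-morphism), because the identification $\R^{p|\ul q}(-)=\cF\big(\mbf{R}^{p|\ul q}\big)(-)$ is natural in $\zL$ (this is essentially the content of the chain of isomorphisms~\eqref{FunPtsYon1}, together with $\mbf{R}^{p|\ul q}(\zf^*)=\Pi_i\Pi_k\zf^*$ from~\eqref{TrivId}, which matches $\R^{p|\ul q}(\zf^*)$ read componentwise via equations~\eqref{eqn:smoothzero}--\eqref{eqn:smoothnonzero} in the linear case), and because $\mbf{b}_-=\cF(\mbf{b})$ is a natural transformation. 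Each $\zL$-component is a composite of Fr\'echet $\zL_0$-module isomorphisms (Proposition~\ref{FL0Mod}, and ($\bullet$) above), hence $\zT_{\mbf{V},\zL}$ is an isomorphism in $\tt FAMod$, so $\zT_{\mbf{V}}$ is an isomorphism in the functor category. Then I would check independence of the auxiliary basis $\mbf{b}$: a second basis $\mbf{b}'$ gives ${\rm h}'=\cM(\mbf{b}')$ with ${\rm h}'\circ{\rm h}^{-1}=\cM(\mbf{b}'\circ\mbf{b}^{-1})$, and since $\cM$ and $\cF$ agree on $\z\tt CarVec$ (this is exactly where Proposition~\ref{IsoCatCar} and the construction of $\cM$ on $\z\tt FinVec$ feed in) one gets ${\rm h}'_\zL\circ{\rm h}_\zL^{-1}=\mbf{b}'_\zL\circ\mbf{b}_\zL^{-1}$, whence $\mbf{b}'^{-1}_\zL\circ{\rm h}'_\zL=\mbf{b}_\zL^{-1}\circ{\rm h}_\zL$; so $\zT_{\mbf{V}}$ does not depend on the chosen basis.

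Finally I would verify naturality in $\mbf{V}$: for a degree zero linear map $\zvf\colon\mbf{V}\to\mbf{W}$, with $L=\cM(\zvf)$, I must show $\cF(\zvf)\circ\zT_{\mbf{V}}=\zT_{\mbf{W}}\circ\cS(L)$, i.e.\ $\zvf_\zL\circ\zT_{\mbf{V},\zL}=\zT_{\mbf{W},\zL}\circ L_\zL$ for every $\zL$. Choosing bases $\mbf{b}$ of $\mbf{V}$ and $\mbf{c}$ of $\mbf{W}$ with coordinate maps ${\rm h}=\cM(\mbf{b})$, ${\rm k}=\cM(\mbf{c})$, the $\z$-morphism ${\rm k}\circ L\circ{\rm h}^{-1}=\cM\big(\mbf{c}\circ\zvf\circ\mbf{b}^{-1}\big)$ is a Cartesian-type linear morphism, and on it $\cS$ and $\cF$ literally agree (again via the coincidence of $\cM$ and $\cF$ on $\z\tt CarVec$ and the linear case of~\eqref{eqn:smoothzero}--\eqref{eqn:smoothnonzero}); expanding both sides through ${\rm h}_\zL$, ${\rm k}_\zL$, $\mbf{b}_\zL$, $\mbf{c}_\zL$ and cancelling then gives the desired identity. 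The main obstacle is not any one of these steps individually but the bookkeeping of the several identifications—between $\z$-points of $\R^{p|\ul q}$ as morphisms, as algebra morphisms to $\zL$, and as tuples in $\big(\zL\0\mbf{R}^{p|\ul q}\big)_0$—and making sure the $\zL_0$-module (as opposed to merely set-theoretic) structure is respected throughout; once the linear specialisations of equations~\eqref{eqn:smoothzero} and~\eqref{eqn:smoothnonzero} are invoked, everything reduces to the already-established agreement of $\cM$ and $\cF$ on Cartesian objects, so no genuinely new $\z$-analytic input is needed.
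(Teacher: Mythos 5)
Your proposal is correct and follows essentially the same route as the paper: the paper first establishes the Cartesian case by constructing the explicit identification $\sfi_{\mbf{R}^{p|\ul q},\zL}\colon \rmx\mapsto(\rmx^*(u^k_i))$ and proving its naturality in $\zL$ and in $\mbf{R}^{p|\ul q}$ (equation~\eqref{NatI}), and then defines $\cI_{\mbf{V}}=\cF(\mbf{b}^{-1})\circ\sfi_{\mbf{R}^{p|\ul q}}\circ\cS(\cM(\mbf{b}))$, which is exactly your $\zT_{\mbf{V}}$, with basis-independence and naturality in $\mbf{V}$ both reduced to~\eqref{NatI} just as you propose.
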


\newcommand{\sfi}{\textsf{I}}

We first prove the theorem in the Cartesian case
\[
\mathcal{M}\colon\ \Z_2^n{\tt CarVec}\rightleftarrows\Z_2^n\tt CarMan\ \,{:}\, \cV
\]
(see Proposition~\ref{IsoCatCar}). More precisely, it follows from Proposition~\ref{FunPtsFinVecFF2} and Theorem~\ref{FunPtsLinManFF} that the functors $\cF$ and $\cS$ are (fully faithful) functors
\[
\cF\colon\ \z{\tt CarVec}\to{\tt Fun}_0\big(\z{\tt Pts}^{\op{op}},{\tt FAMod}\big)
\]
and
\[
\mathcal S\colon\ \z{\tt CarMan}\to{\tt Fun}_0\big(\z{\tt Pts}^{\op{op}},{\tt FAMod}\big).
\]
Actually:

\begin{Proposition}
The functors
\[
\mathcal{S}\circ\cM,\, \cF\colon\ \z{\tt CarVec}\to{\tt Fun}_0\big(\z{\tt Pts}^{\op{op}},{\tt FAMod}\big)
\]
are naturally isomorphic.
\end{Proposition}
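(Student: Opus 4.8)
The plan is to produce the natural isomorphism $\zeta\colon\cS\circ\cM\Rightarrow\cF$ \emph{componentwise}, using the chain of identifications already recorded in equations~\eqref{FunPtsYon1} and~\eqref{FunPtsYon2}. Fix a Cartesian $\z$-vector space $\mbf{R}^{p|\ul q}$ and a $\z$-Grassmann algebra $\zL\simeq\R^{0|\ul m}$. By the fundamental theorem of $\z$-morphisms, the $\zL$-point ${\rm x}^*\in\cS\big(\cM(\mbf{R}^{p|\ul q})\big)(\zL)=\op{Hom}_{\z{\tt Man}}\big(\R^{0|\ul m},\R^{p|\ul q}\big)$ is identified with the tuple $({\rm x}^*(x^a),{\rm x}^*(\zx^A))=(x^a_\zL,\zx^A_\zL)\in\zL_0^{\times p}\times\zL_{\zg_1}^{\times q_1}\times\cdots\times\zL_{\zg_N}^{\times q_N}$ of its coordinate pullbacks, and this tuple is precisely an element of $(\zL\0\mbf{R}^{p|\ul q})_0=\cF(\mbf{R}^{p|\ul q})(\zL)$ under $\bigoplus_i\zL_{\zg_i}\0\bigoplus_k\mbf{R}\,e^i_k$. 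I would \emph{define} $\zeta_{\mbf{R}^{p|\ul q},\zL}$ to be this bijection.

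\textbf{The components are $\tt FAMod$-isomorphisms and are natural in $\zL$.} By Proposition~\ref{FL0Mod} and~\eqref{ActionCompWise}, the Fr\'echet $\zL_0$-module structure on $\R^{p|\ul q}(\zL)$ is the componentwise one on $\zL_0^{\times p}\times\cdots\times\zL_{\zg_N}^{\times q_N}$, and on the other side the $\cF$-module structure on $(\zL\0\mbf{R}^{p|\ul q})_0$ together with the product Fr\'echet structure~\eqref{TrivId} is manifestly the same; hence $\zeta_{\mbf{R}^{p|\ul q},\zL}$ is (after the identification, literally the identity, and thus) a Fr\'echet $\zL_0$-module isomorphism with associated algebra morphism $\id_{\zL_0}$, and its inverse is again continuous and $\zL_0$-linear. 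For naturality in $\zL$, given a $\z$-algebra morphism $\zf^*\colon\zL\to\zL'$, the map $\cS\big(\cM(\mbf{R}^{p|\ul q})\big)(\zf^*)$ sends ${\rm x}^*$ to $\zf^*\circ{\rm x}^*$, which on coordinate-pullback tuples is the componentwise application of $\zf^*$; on the other side $\cF(\mbf{R}^{p|\ul q})(\zf^*)=(\zf^*\0\Id)_0=\Pi_i\Pi_k\,\zf^*$ (as observed near~\eqref{TrivId}) does exactly the same. Hence each $\zeta_{\mbf{R}^{p|\ul q}}$ is an isomorphism of the functor category ${\tt Fun}_0\big(\z{\tt Pts}^{\op{op}},\tt FAMod\big)$.

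\textbf{Naturality in the vector space.} This is the only substantive point. Given a $\z{\tt CarVec}$-morphism $\mbf{L}\colon\mbf{R}^{p|\ul q}\to\mbf{R}^{r|\ul s}$, the $\z$-morphism $L=\cM(\mbf{L})$ has the linear coordinate pullbacks~\eqref{Pull1} and~\eqref{Pull2}. Composing ${\rm x}^*\simeq(x^a_\zL,\zx^A_\zL)$ with $L^*$ on the left merely substitutes these linear combinations for the coordinate functions, producing the tuple $\big(\sum_k\mbf{L}^{\ell0}_{0k}x^k_\zL\,,\ \sum_k\mbf{L}^{\ell j}_{jk}\zx^k_{j,\zL}\big)$; this is precisely $(\Id_\zL\0\mbf{L})_0=\cF(\mbf{L})_\zL$ evaluated on $(x^a_\zL,\zx^A_\zL)$ read through $\bigoplus_i\zL_{\zg_i}\0\bigoplus_k\mbf{R}\,e^i_k$. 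Therefore $\zeta_{\mbf{R}^{r|\ul s}}\circ\cS(\cM(\mbf{L}))=\cF(\mbf{L})\circ\zeta_{\mbf{R}^{p|\ul q}}$, i.e.\ $\zeta$ is natural. (This is exactly the computation already carried out in the ``subfunctor'' remark following~\eqref{FunPtsYon2}, where $K\circ L\simeq(L^*(K^*(u^{\mathfrak a})))_{\mathfrak a}$ was matched with $\cF(\mbf{R}^{p|\ul q})(L^*)((K^*(u^{\mathfrak a}))_{\mathfrak a})$.)

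\textbf{Main obstacle.} There is no deep difficulty here; the work is bookkeeping, namely checking that the single bijection of~\eqref{FunPtsYon1} is \emph{simultaneously} compatible with (i) the Fr\'echet $\zL_0$-module structures on both sides, (ii) functoriality in $\zL$ (precomposition by $\zf^*$ versus $(\zf^*\0\Id)_0$), and (iii) functoriality in the vector space (precomposition by $L^*=\cM(\mbf{L})^*$ versus $(\Id\0\mbf{L})_0$, i.e.\ duality $\mbf{L}\mapsto{}^t\mbf{L}$). Each of these three compatibilities has already been verified separately in the excerpt (Proposition~\ref{FL0Mod}, the proof that $\cS$ restricts to a $\tt FAMod$-valued functor, and the subfunctor remark), so assembling them into the natural isomorphism $\zeta$ is routine. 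The general (non-Cartesian) case of Theorem~\ref{NatIsoSMFGene} will then follow by the usual argument: choosing a basis of $\mbf{V}$ gives an isomorphism $\mbf{b}\colon\mbf{V}\to\mbf{R}^{p|\ul q}$ in $\z{\tt FinVec}$ and a compatible linear coordinate map $\rmh=\cM(\mbf{b})$ (see~\eqref{GlobDiffcM2}), and one transports $\zeta$ along $\cF(\mbf{b})$ and $\cS(\rmh)$, checking independence of the basis via naturality in the Cartesian case.
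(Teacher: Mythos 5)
Your proposal is correct and follows essentially the same route as the paper: you define the same component maps (sending a $\zL$-point to its tuple of coordinate pullbacks, identified with an element of $(\zL\0\mbf{R}^{p|\ul q})_0$), invoke the same fact that the Fr\'echet $\zL_0$-module structure on $\R^{p|\ul q}(\zL)$ was transferred through precisely this bijection, and verify naturality in $\zL$ and in $\mbf{R}^{p|\ul q}$ by the same componentwise computations with $\zf^*$ and with the linear pullbacks $L^*(u'^\ell_i)=\sum_k\mbf{L}^{\ell i}_{ik}u^k_i$. The closing remark on deducing the general case by transporting along a basis is likewise exactly how the paper proceeds.
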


\newcommand{\rmx}{\mathrm{x}}

\begin{proof}
In order to construct a natural isomorphism $\textsf{I}\colon \mathcal{S}\circ\cM\to\cF$, we must define, for any~$\mbf{R}^{p|\ul q}$, a natural isomorphism
\[
\textsf{I}_{\mbf{R}^{p|\ul q}}\colon\ \mathcal{S}\big(\R^{p|\ul q}\big)\to\cF\big(\mbf{R}^{p|\ul q}\big)
\]
of ${\tt Fun}_0\big(\z{\tt Pts}^{\op{op}},{\tt FAMod}\big)$ that is natural in $\mbf{R}^{p|\ul q}$. To build $\textsf{I}_{\mbf{R}^{p|\ul q}}$, we have to define, for each $\zL$, an isomorphism
\[
\sfi_{\mbf{R}^{p|\ul q},\zL}\colon\ \mathcal{S}\big(\R^{p|\ul q}\big)(\zL)\to \cF\big(\mbf{R}^{p|\ul q}\big)(\zL)
\]
of Fr\'echet $\zL_0$-modules that is natural in $\zL$. Recalling that the source and target of this arrow are
\[
\R^{p|\ul q}(\zL)=\op{Hom}_{\z\tt Man}\big(\R^{0|\ul m},\R^{p|\ul q}\big)\qquad \big(\R^{0|\ul m}\simeq\zL\big)
\]
and
\[
\mbf{R}^{p|\ul q}(\zL)=\big(\zL\0\mbf{R}^{p|\ul q}\big)_0=\bigoplus_{i=0}^N\bigoplus_{k=1}^{q_i}\zL_{\zg_i}\0\mbf{R}e^i_k =\bigoplus_{i=0}^N\bigoplus_{k=1}^{q_i}\zL_{\zg_i}=\zL_0^{\times p}\times\zL_{\zg_1}^{\times q_1}\times\cdots\times\zL_{\zg_N}^{\times q_N},
\]
respectively, we set
\[
\sfi_{\mbf{R}^{p|\ul q},\zL}\colon\ \rmx\mapsto \sum_{i,k}\rmx^*\big(u^k_i\big)\0 e^i_k=\big(\rmx^*\big(u^k_i\big)\big)=\big(\rmx^*\big(x^k\big),\rmx^*\big(\zx^k_j\big)\big) =:\big(x^k_\zL,\zx^k_{j,\zL}\big),
\]
where $\big(u^k_i\big)=\big(x^k,\zx^k_j\big)$ are the coordinates of $\R^{p|\ul q}$ and where $\big(e^i_k\big)_{i,k}$ is the canonical basis of $\mbf{R}^{p|\ul q}$. Since we actually used this $1:1$ correspondence to transfer the Fr\'echet $\zL_0$-module structure from $\mbf{R}^{p|\ul q}(\zL)$ to $\R^{p|\ul q}(\zL)$ (see~\eqref{ActionCompWise}), the bijection $\sfi_{\mbf{R}^{p|\ul q},\zL}$ is an isomorphism of Fr\'echet $\zL_0$-modules. This isomorphism is natural with respect to $\zL$. Indeed, if $\zf^*\colon \zL\to\zL'$ is a $\z$-algebra map (with corresponding $\z$-morphism $\zf$), we have
\begin{gather*}
\sfi_{\mbf{R}^{p|\ul q},\zL'}\big(\R^{p|\ul q}(\zf^*)(\rmx)\big)=\sfi_{\mbf{R}^{p|\ul q},\zL'}(\rmx\circ\zf)=\big(\zf^*\big(\rmx^*\big(x^k\big)\big),\zf^*\big(\rmx^*\big(\zx^k_j\big)\big)\big) =(\zf^*\0\Id)_0\big(\sfi_{\mbf{R}^{p|\ul q},\zL}(\rmx)\big).
\end{gather*}
It now suffices to check that $\sfi_{\mbf{R}^{p|\ul q}}$ is natural with respect to $\mbf{R}^{p|\ul q}$ . Hence, let $\mbf{L}\colon \mbf{R}^{p|\ul q}\to\mbf{R}^{r|\ul s}$ be a~degree zero linear map and let $L\colon \R^{p|\ul q}\to\R^{r|\ul s}$ be the corresponding linear $\z$-mor\-phism~$\cM(\mbf{L})$. In~order to prove that
\begin{gather}\label{NatI}
\sfi_{\mbf{R}^{r|\ul s}}\circ\mathcal{S}(L)=\cF(\mbf{L})\circ\sfi_{\mbf{R}^{p|\ul q}},
\end{gather}
we have to show that the $\zL$-components of these natural transformations coincide. To find that these Fr\'echet $\zL_0$-module morphisms coincide, we must explain that they associate the same image to every $\rmx\in\R^{p|\ul q}(\zL)$. When denoting the coordinates of $\R^{r|\ul s}$ by $\big(u'^\ell_i\big)=\big(x'^\ell,\zx'^\ell_j\big)$, we~obtain
\[
\sfi_{\mbf{R}^{r|\ul s},\zL}\big(\mathcal{S}(L)_\zL(\rmx)\big)=\sfi_{\mbf{R}^{r|\ul s},\zL}(L\circ\rmx)=\big(\rmx^*\big(L^*\big(x'^\ell\big)\big), \rmx^*\big(L^*\big(\zx'^{\ell}_j\big)\big)\big)=\big(\rmx^*\big(L^*\big(u'^\ell_i\big)\big)\big),
\]
where
\[
L^*(u'^{\ell}_i)=\sum_{k=1}^{q_i} \mbf{L}^{\ell i}_{i k}\,u^k_i,
\]
in view of~\eqref{Pull1} and~\eqref{Pull2}. It~follows that
\begin{align*}
\big(\rmx^*\big(L^*\big(u'^\ell_i\big)\big)\big)&=\bigg(\sum_{k=1}^{q_i} \mbf{L}^{\ell i}_{i k}\,\rmx^*\big(u^k_i\big)\bigg)=\sum_{i,\ell}\bigg(\sum_k\mbf{L}^{\ell i}_{ik}\,\rmx^*\big(u_i^k\big)\bigg)\0 e'^i_\ell
\\&
=\sum_{i,k}\rmx^*\big(u_i^k\big)\0 \bigg(\sum_\ell\mbf{L}^{\ell i}_{ik}e'^i_\ell\bigg)
=\sum_{i,k}\rmx^*\big(u_i^k\big)\0 \mbf{L}\big(e^i_k\big)
\\
&=(\Id\0\mbf{L})_0\bigg(\sum_{i,k}\rmx^*\big(u_i^k\big)\0 e^i_k\bigg)=\cF(\mbf{L})_\zL\big(\sfi_{\mbf{R}^{p|\ul q},\zL}(\rmx)\big),
\end{align*}
where $\big(e'^i_\ell\big)_{i,\ell}$ is the basis of $\mbf{R}^{r|\ul s}$.
\end{proof}

\newcommand{\cI}{\mathcal{I}}

We are now able to prove Theorem~\ref{NatIsoSMFGene}.

\begin{proof}
For simplicity, we set
\[
{\tt T}:={\tt Fun}_0\big(\z{\tt Pts}^{\op{op}},{\tt FAMod}\big).
\]
In order to build a natural isomorphism $\cI\colon \cS\circ\cM\to\cF$, we must define, for any $\mbf{V}\in\z{\tt FinVec}$, a natural isomorphism
\[
\cI_{\mbf{V}}\colon \cS(V)\to \cF(\mbf{V})
\]
of ${\tt T}$ that is natural in $\mbf{V}$.
Set
\[
\dim\mbf{V}=p|\ul q
\]
and let $\mbf{b}$ be a basis of $\mbf{V}$, or, equivalently, a $\z$-vector space isomorphism $\mbf{b}\colon \mbf{V}\to\mbf{R}^{p|\ul q}$. In view of~\eqref{LinCooMapFroBas}, the morphism $\cM(\mbf{b})\colon V\to \R^{p|\ul q}$ is a linear $\z$-diffeomorphism. Using Proposition~\ref{FunPtsFinVecFF2} and Theorem~\ref{FunPtsLinManFF}, we obtain that
\[
\cF(\mbf{b})\colon\ \cF(\mbf{V})\to\cF\big(\mbf{R}^{p|\ul q}\big)
\]
and
\[
\cS(\cM(\mbf{b}))\colon\ \cS(V)\to\cS\big(\R^{p|\ul q}\big)
\]
are natural isomorphisms of ${\tt T}$. As
\[
\sfi_{\mbf{R}^{p|\ul q}}\colon\ \cS\big(\R^{p|\ul q}\big)\to\cF\big(\mbf{R}^{p|\ul q}\big)
\]
is a natural isomorphism of ${\tt T}$ as well, the transformation
\[
\cI_{\mbf{V}}:=\cF\big(\mbf{b}^{-1}\big)\circ\sfi_{\mbf{R}^{p|\ul q}}\circ\cS(\cM(\mbf{b}))
\]
is a natural isomorphism
\[
\cI_{\mbf{V}}\colon\ \cS(V)\to\cF(\mbf{V})
\]
as requested. In view of equation~\eqref{NatI}, the transformation $\cI_{\mbf{V}}$ is well-defined, i.e., is independent of the basis chosen.

It remains to show that $\cI_\mbf{V}$ is natural in $\mbf{V}$, i.e., that, for any degree zero linear map $\phi\colon \mbf{V}\to\mbf{W}$ ($\dim\mbf{W}=r|\ul s$) and for any basis $\mbf{b}$ (resp., $\mbf{c}$) of $\mbf{V}$ (resp., $\mbf{W}$), we have
\[
\cF(\phi)\circ\cF\big(\mbf{b}^{-1}\big)\circ\sfi_{\mbf{R}^{p|\ul q}}\circ\cS(\cM(\mbf{b}))=\cF\big(\mbf{c}^{-1}\big)\circ\sfi_{\mbf{R}^{r|\ul s}}\circ\cS(\cM(\mbf{c}))\circ\cS(\cM(\phi)),
\]
or, equivalently,
\[
\sfi_{\mbf{R}^{r|\ul s}}\circ\cS\big(\cM\big(\mbf{c}\circ\phi\circ\mbf{b}^{-1}\big)\big) =\cF\big(\mbf{c}\circ\phi\circ\mbf{b}^{-1}\big)\circ \sfi_{\mbf{R}^{p|\ul q}}.
\]
Since $\mbf{L}:=\mbf{c}\circ\phi\circ\mbf{b}^{-1}$ is a degree zero linear map $\mbf{L}\colon \mbf{R}^{p|\ul q}\to\mbf{R}^{r|\ul s}$, equation~\eqref{NatI} allows once more to conclude.
\end{proof}

\newcommand{\ccD}{\mathcal{D}}

\noindent
{\bf Internal Homs}. A topological property is a property of topological spaces that is invariant under homeomorphisms (isomorphisms of topological spaces). More intuitively, a ``topological property'' is a property that only depends on the topological structure, or, equivalently, that can be expressed by means of open subsets. Similarly, equivalences of categories (``isomorphisms'' of categories) preserve all ``categorical properties and concepts''. Hence, an equivalence should preserve products. It~turns out that this statement is actually correct. More precisely, if $\cE\colon {\tt S}\to {\tt T}$ is part of an equivalence of categories, then a functor $\ccD\colon {\tt I}\to{\tt S}$ has limit $s$ if and only if the functor $\cE\circ\ccD\colon {\tt I}\to {\tt T}$ has limit $\cE(s)$. Applying the statement to the discrete index category ${\tt I}$ with two objects $\{1,2\}$ and setting $\ccD(i)=s_i$ ($i\in\{1,2\}$), we get that $s_1$ and $s_2$ have product $s$ if and only if $\cE(s_1)$ and $\cE(s_2)$ have product $\cE(s)$. Now, the category $\z{\tt FinVec}$ has the obvious binary product $\times$. It~follows that, for any vector spaces $\mbf{V},\mbf{W}\in\z{\tt FinVec}$, the manifolds $\cM(\mbf{V}),\, \cM(\mbf{W})\in\z{\tt LinMan}$ have product
\[
\cM(\mbf{V})\times\cM(\mbf{W})=\cM(\mbf{V}\times\mbf{W}).
\]
If $\sfl,\sfl'\in\z{\tt LinMan}$, the categorical isomorphism implies that $\sfl=\cM(\cV(\sfl))$ and similarly for $\sfl'$, so that the product $\sfl\times\sfl'$ exists and is
\begin{gather}\label{TrafroVec}
\sfl\times\sfl'=\cM(\cV(\sfl)\times\cV(\sfl')).
\end{gather}
Hence, the category $\z{\tt LinMan}$ has finite products.

Equation~\eqref{TrafroVec} shows that we got the product of $\z{\tt LinMan}$ by transferring to ${\tt T}:=\z{\tt LinMan}$ the product of ${\tt S}:=\z{\tt FinVec}$. We can similarly transfer to ${\tt T}$ the closed symmetric monoidal structure of ${\tt S}$. Indeed, the category $\z{\tt Vec}$ is closed symmetric monoidal for the standard tensor product $-\0_{\z\tt Vec}-$ of $\z$-vector spaces and the standard internal Hom $\ul{\text{Hom}}_{\z\tt Vec}(-,-)$ of $\z$-vector spaces, which is defined, on objects for instance, by
\begin{gather*}
\ul{\text{Hom}}_{\z\tt Vec}(\mbf{V},\mbf{W}):=\bigoplus_i\ul{\text{Hom}}_{\z{\tt Vec}
,\zg_i}(\mbf{V},\mbf{W})\in\z{\tt Vec},
\end{gather*}
for any $\mbf{V},\mbf{W}\in\z\tt Vec$. Of course, if $\mbf{V},\mbf{W}\in\tt S$, then $\ul{\text{Hom}}_{\z\tt Vec}(\mbf{V},\mbf{W})\in\tt S$, and the same holds for $\mbf{V}\0_{\z\tt Vec}\mbf{W}$. It follows that ${\tt S}=\z{\tt FinVec}$ is also a closed symmetric monoidal category. If~we set now
\begin{gather}\label{CMSLinMan}
\sfl\0_{\tt T}\sfl':=\cM(\cV(\sfl)\0_{\z\tt Vec}\cV(\sfl')),\qquad \ul{\op{Hom}}_{\tt T}(\sfl,\sfl'):=\cM\big(\ul{\op{Hom}}_{\z{\tt Vec}}(\cV(\sfl),\cV(\sfl'))\big),
\end{gather}
and similarly for morphisms, we get a closed symmetric monoidal structure on ${\tt T}=\z{\tt LinMan}$:

{\sloppy\begin{Proposition}
The category $\z{\tt LinMan}$ is closed symmetric monoidal for the struc\-ture~\eqref{CMSLinMan}.
\end{Proposition}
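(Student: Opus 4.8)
The plan is to recognise that \eqref{CMSLinMan} is simply the transport of the closed symmetric monoidal structure of $\tt S:=\z{\tt FinVec}$ along the isomorphism of categories $\cM\colon\z{\tt FinVec}\rightleftarrows\z{\tt LinMan}\colon\cV$ established above, and that such a transport automatically yields a closed symmetric monoidal structure. Recall that a closed symmetric monoidal structure consists of a tensor bifunctor, a unit object, coherence natural isomorphisms (associator, left and right unitors, braiding), an internal Hom bifunctor, and a natural tensor--Hom adjunction, all subject to the pentagon, triangle and hexagon identities, to the naturality of the coherence isomorphisms, and to the naturality and invertibility of the adjunction. The paragraph preceding the statement already records that $\tt S=\z{\tt FinVec}$ carries such a structure, since the standard tensor product $-\0_{\z\tt Vec}-$ and internal Hom $\ul{\op{Hom}}_{\z\tt Vec}(-,-)$ of $\z{\tt Vec}$ preserve finite-dimensionality and $\z{\tt FinVec}\subset\z{\tt Vec}$ is full.

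First I would fix $\cE:=\cM$, $\cE':=\cV$, and use that, by the theorem just proved, these are mutually inverse functors: $\cV\circ\cM=\id_{\z{\tt FinVec}}$ and $\cM\circ\cV=\id_{\z{\tt LinMan}}$. On $\tt T:=\z{\tt LinMan}$ I then declare the unit to be $\cM(\mbf R^{1|\ul 0})=\R^{1|\ul 0}$, the bifunctors $\0_{\tt T}$ and $\ul{\op{Hom}}_{\tt T}$ by the formulas~\eqref{CMSLinMan} (and the analogous ones on morphisms), and each piece of coherence data by applying $\cM$ to the corresponding datum of $\tt S$ evaluated at the relevant $\cV$-images; for instance the associator of $\tt T$ at $(\sfl,\sfl',\sfl'')$ is $\cM$ of the associator of $\tt S$ at $(\cV\sfl,\cV\sfl',\cV\sfl'')$, and likewise for the unitors, the braiding, and the tensor--Hom adjunction isomorphism. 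Since $\cV\circ\cM=\id$ one has, on the nose, $(\sfl\0_{\tt T}\sfl')\0_{\tt T}\sfl''=\cM\big((\cV\sfl\0_{\z\tt Vec}\cV\sfl')\0_{\z\tt Vec}\cV\sfl''\big)$ (and similarly for the other bracketing and for every expression built from $\0_{\tt T}$ and $\ul{\op{Hom}}_{\tt T}$), so the transported morphisms have exactly the right source and target, with no mediating natural isomorphism to track. Because $\cM$ is a functor it carries the commuting diagrams expressing the pentagon, triangle and hexagon identities, the naturality squares of the coherence transformations, and the naturality of the adjunction in $\tt S$ to the corresponding commuting diagrams in $\tt T$; hence all the axioms hold in $\tt T$, and $\big(\z{\tt LinMan},\0_{\tt T},\R^{1|\ul 0},\ul{\op{Hom}}_{\tt T}\big)$ is closed symmetric monoidal.

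There is no genuine obstacle here: the proof is a transport-of-structure bookkeeping argument, and the only inputs it needs are precisely the ones already available, namely that $\cM$ and $\cV$ form a strict isomorphism of categories and that $\z{\tt FinVec}$ is closed under the tensor product and internal Hom of $\z{\tt Vec}$. If one wishes, the same computation shows more, namely that $\cM$ (hence $\cV$) is a strict symmetric monoidal closed functor; and, using~\eqref{FunEquivCar} together with the explicit description of $\cM$ via $\Z_2^n$-symmetric tensor algebras, one can make $\sfl\0_{\tt T}\sfl'$ and $\ul{\op{Hom}}_{\tt T}(\sfl,\sfl')$ fully explicit in linear coordinates, though this is not needed for the statement.
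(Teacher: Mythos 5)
Your proposal is correct and takes essentially the same route as the paper: the paper likewise obtains the structure on $\z{\tt LinMan}$ by transporting the closed symmetric monoidal structure of $\z{\tt FinVec}$ along the strict isomorphism $\cM\rightleftarrows\cV$, leaving the coherence bookkeeping implicit in the paragraph preceding the statement. You have simply written out the standard transport-of-structure details that the paper omits.
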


}

Alternatively, we could have defined $\ul{\op{Hom}}_{\tt T}(\sfl,\sfl')\in\tt T$ using the fully faithful functor of points
\[
\cS\colon\ {\tt T}\ni\sfl\mapsto \op{Hom}_{\z{\tt Man}}(-,\sfl)=:\sfl(-)\in{\tt Fun}_0\big(\z{\tt Pts}^{\op{op}},{\tt FAMod}\big),
\]
i.e., defining first a functor $F_{\sfl,\sfl'}(-)$ in the target category, and then showing that this functor is representable by some $\ul{\op{Hom}}_{\tt T}(\sfl,\sfl')\in\tt T$:
\[
F_{\sfl,\sfl'}(-)=\op{Hom}_{\z\tt Man}(-,\ul{\op{Hom}}_{\tt T}(\sfl,\sfl'))=\ul{\op{Hom}}_{\tt T}(\sfl,\sfl')(-).
\]
This ``functor of points approach'' is often easier.

{\sloppy
To shed some light on our more abstract definition above, we now compute $\ul{\op{Hom}}_{\tt T}\big(\R^{p|\ul q},\R^{r|\ul s}\big)(\zL)$ ($\diamond$) assuming some familiarity with $\z$-graded matrices $\op{gl}(r|\ul s \times p|\ul q,\zL)$ with entries in $\zL\in\z{\tt Alg}$. Details can be found in Section~\ref{subsec:Zmatrices} which we leave in its natural place. However, we highly recommend reading it before working though the end of this section.

}

We observe first that
\[
\ul{\Hom}_{\z{\tt Vec},\zg_k}\big(\mbf{R}^{p|\ul q},\mbf{R}^{r|\ul s}\big)=\op{gl}_{\zg_k}\big(r|\ul s \times p|\ul q,\,\R\big)\in{\tt Vec}.
\]
In order to understand the gist here, we consider the case $n=2$, so that a matrix $X\in\op{gl}_{\zg_k}\big(r|\ul s \times p|\ul q,\,\R\big)$ has the block format
\begin{equation}\label{glR}
X = \left(\begin{array}{c|c|c|c}
X_{00} & X_{01} & X_{02} & X_{03} \\
\hline
X_{10} & X_{11} & X_{12} & X_{13} \\
\hline
X_{20} & X_{21} & X_{22} & X_{23} \\
\hline
X_{30} & X_{31} & X_{32} & X_{33} \\
\end{array}\right)\!,
\end{equation}
where the degree $x_{ij}$ of the block $X_{ij}$ is
\begin{gather}\label{DegBas}
x_{ij}=\zg_i+\zg_j+\zg_k.
\end{gather}
Since the entries of the $X_{ij}$ are real numbers and so of degree $\zg_0$, all the blocks with non-vani\-shing~$x_{ij}$ do vanish. For instance, if~$\zg_k=01\in\z$ (resp., $\zg_k=11$) (do not confuse with the row-column index $01$ in $X_{01}$ (resp., $11$ in $X_{11}$)), the degree $x_{ij}=0$ if and only if~$ij\in\{01, 10, 23, 32\}$ (resp., $ij\in\{03, 12, 21, 30\}$) (as in most of the other cases in this text, the $\z$-degrees are lexicographically ordered), so that only these $X_{ij}$ do not vanish. It~follows that
\[
\ul{\Hom}_{\z\tt Vec}\big(\mbf{R}^{p|\ul q},\mbf{R}^{r|\ul s}\big)=\op{gl}\big(r|\ul s \times p|\ul q,\R\big)\in\z{\tt FinVec}
\]
is made of the matrices~\eqref{glR}, where no block $X_{ij}$ vanishes a priori. The canonical basis of this $\z$-vector space are the obvious matrices $E_{ik,j\ell}$ ($i,j\in\{0,\dots,N\}$, $k\in\{1,\dots,s_i\}$, $\ell\in\{1,\dots, q_j\}$) with all entries equal to 0 except the entry $kl$ in $X_{ij}$ which is $1$. In view of equation~\eqref{DegBas}, the vectors of this basis have the degrees $\zg_i+\zg_j$. We can of course identify (up to renumbering) this $\z$-vector space with $\mbf{R}^{t|\ul u}$, where $u_n$ ($n\in\{0,\dots,N\}$) is equal to
\begin{gather}\label{UN}
u_n=\sum_{i,j\colon \zg_i+\zg_j=\zg_n}s_iq_j
\end{gather}
(we set $s_0=r$, $q_0=p$, $u_0:=t$). Hence:
\begin{gather}\label{IHomVecMat}
\ul{\Hom}_{\z\tt Vec}\big(\mbf{R}^{p|\ul q},\mbf{R}^{r|\ul s}\big)=\op{gl}\big(r|\ul s \times p|\ul q,\R\big)=\mbf{R}^{t|\ul u}\in\z{\tt CarVec}.
\end{gather}

Combining~\eqref{CMSLinMan} and~\eqref{IHomVecMat}, we get
\begin{gather}\label{IntHomLinManCarMan}
\ul{\op{Hom}}_{\z\tt LinMan}\big(\R^{p|\ul q},\R^{r|\ul s}\big)=\cM\big(\ul{\op{Hom}}_{\z\tt Vec}\big(\mbf{R}^{p|\ul q},\mbf{R}^{r|\ul s}\big)\big)=\R^{t|\ul u}\in\z{\tt CarMan}.
\end{gather}
We now come back to ($\diamond$). Setting as usual $\R^{0|\ul m}\simeq\zL$, we get the isomorphism
\[
\ul{\op{Hom}}_{\z\tt LinMan}\big(\R^{p|\ul q},\R^{r|\ul s}\big)(\zL)=\R^{t|\ul u}(\zL)\simeq \Pi_{n=0}^N\,\zL_{\zg_n}^{\times u_n}
\]
of Fr\'echet $\zL_0$-modules. On the other hand, the vector space $\op{gl}_0\big(r|\ul s\times p|\ul q,\zL\big)$ is a $\zL_0$-module and this module ``coincides'' obviously with
\[
\op{gl}_0\big(r|\ul s\times p|\ul q,\zL\big)=\Pi_{n=0}^N\,\zL_{\zg_n}^{\times u_n}.
\]
By transferring the Fr\'echet structure, we get an ``equality'' of Fr\'echet $\zL_0$-modules. Hence, the Fr\'echet $\zL_0$-module isomorphism
\begin{gather}\label{Man0Mat}
\ul{\op{Hom}}_{\z\tt LinMan}\big(\R^{p|\ul q},\R^{r|\ul s}\big)(\zL)=\R^{t|\ul u}(\zL)\simeq \Pi_{n=0}^N\,\zL_{\zg_n}^{\times u_n}=\op{gl}_0\big(r|\ul s\times p|\ul q,\zL\big)\in{\tt F}\zL_0{\tt Mod}.
\end{gather}

There is a natural upgrade that is independent of the internal Homs and makes $G:=\op{gl}_0(r|\ul s\times p|\ul q,-)$ a functor $G\in{\tt Fun_0}\big(\z{\tt Pts}^{\op{op}},{\tt FAMod}\big)$. Indeed, it suffices to define $G$ on a $\z{\tt Alg}$-morphism $\zf^*\colon \zL\to\zL'$ as
\[
G(\zf^*)\colon\ G(\zL)\ni X\mapsto \zf^*(X)\in G(\zL'),
\]
where $\zf^*(X)$ is defined entry-wise. The morphism $G(\zf^*)$ is clearly $(\zf^*)_0$-linear. It~is also continuous, as it can be viewed as a product of copies of $\zf^*$. Since $G$ respects compositions and identities it is actually a functor of the functor category mentioned. The functors $G$ and $\R^{t|\ul u}(-)=\cS(\R^{t|\ul u})$ are of course naturally isomorphic. Since $\cS$ is a fully faithful functor
\[
\cS\colon\ \z{\tt LinMan}\to{\tt Fun_0}\big(\z{\tt Pts}^{\op{op}},{\tt FAMod}\big),
\]
the functor $G$ can be viewed as represented by the linear $\z$-manifold $\R^{t|\ul u}$.
\begin{Proposition}\label{gl0}
The functor $\op{gl}_0(r|\ul s\times p|\ul q,-)$ is representable and the Cartesian $\z$-manifold
\[
\op{gl}_0\big(r|\ul s\times p|\ul q\big):=\R^{t|\ul u},
\]
with dimension $t|\ul u$ defined in equation~\eqref{UN}, is ``its'' representing object.
\end{Proposition}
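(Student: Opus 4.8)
The plan is to realise $G:=\op{gl}_0(r|\ul s\times p|\ul q,-)$ as a functor naturally isomorphic to the functor of points $\cS\big(\R^{t|\ul u}\big)=\R^{t|\ul u}(-)$ of the Cartesian $\z$-manifold $\R^{t|\ul u}$, and then to invoke the full faithfulness of $\cS$ (Theorem~\ref{FunPtsLinManFF}) to read off representability. Most of the scaffolding has in fact been assembled above: by the discussion preceding the statement, $G$ is already an object of ${\tt Fun}_0\big(\z{\tt Pts}^{\op{op}},{\tt FAMod}\big)$, sending a $\z$-Grassmann algebra $\zL$ to the Fr\'echet $\zL_0$-module $\op{gl}_0\big(r|\ul s\times p|\ul q,\zL\big)$ and a $\z{\tt Alg}$-morphism $\zf^*\colon\zL\to\zL'$ to the continuous $(\zf^*)_0$-linear entry-wise map $X\mapsto\zf^*(X)$; so it remains only to produce the natural isomorphism and to translate ``naturally isomorphic to $\cS$ of a linear $\z$-manifold'' into the desired representability statement.

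For the natural isomorphism I would make the Fr\'echet $\zL_0$-module identification~\eqref{Man0Mat} explicit componentwise. Writing $\R^{0|\ul m}\simeq\zL$, the fundamental theorem of $\z$-morphisms identifies $\R^{t|\ul u}(\zL)=\op{Hom}_{\z\tt Man}\big(\R^{0|\ul m},\R^{t|\ul u}\big)$ with the tuple of pullbacks of the coordinates of $\R^{t|\ul u}$, i.e.\ with $\Pi_{n=0}^N\zL_{\zg_n}^{\times u_n}$; on the other side, for degree reasons only the blocks $X_{ij}$ of a matrix $X\in\op{gl}_0\big(r|\ul s\times p|\ul q,\zL\big)$ with $\zg_i+\zg_j=\zg_n$ can carry degree-$\zg_n$ entries, and formula~\eqref{UN} counts exactly $u_n$ of them, so $X$ is equivalently an element of $\Pi_{n=0}^N\zL_{\zg_n}^{\times u_n}$. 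Matching the two descriptions through the canonical basis $E_{ik,j\ell}$ that underlies~\eqref{IHomVecMat} gives a $\zL_0$-linear bijection $\sfi_\zL\colon\R^{t|\ul u}(\zL)\to G(\zL)$, which becomes an isomorphism of ${\tt FAMod}$ after transporting the Fr\'echet structure along it (exactly as in the proof of Proposition~\ref{FL0Mod}). Naturality of $\sfi_\zL$ in $\zL$ is then immediate: $\R^{t|\ul u}(\zf^*)$ acts on a $\zL$-point by applying $\zf^*$ to each coordinate pullback --- the linear, Cartesian instance of equations~\eqref{eqn:smoothzero} and~\eqref{eqn:smoothnonzero} --- while $G(\zf^*)$ applies $\zf^*$ to each matrix entry, and under the entry~$\leftrightarrow$~coordinate-pullback dictionary these are literally the same operation, so the naturality square commutes. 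This yields a natural isomorphism $\sfi\colon\R^{t|\ul u}(-)\to G$ in ${\tt Fun}_0\big(\z{\tt Pts}^{\op{op}},{\tt FAMod}\big)$.

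To conclude, since $\cS\colon\z{\tt LinMan}\to{\tt Fun}_0\big(\z{\tt Pts}^{\op{op}},{\tt FAMod}\big)$ is fully faithful and $\R^{t|\ul u}\in\z{\tt CarMan}\subset\z{\tt LinMan}$, the isomorphism $\cS\big(\R^{t|\ul u}\big)\cong G$ exhibits $\R^{t|\ul u}$ as a representing object for $G$; through the isomorphism $\cM\colon\z{\tt FinVec}\rightleftarrows\z{\tt LinMan}\colon\cV$ (which gives $\cV\big(\R^{t|\ul u}\big)=\mbf{R}^{t|\ul u}$) together with Theorem~\ref{NatIsoSMFGene} one also has $\cF\big(\mbf{R}^{t|\ul u}\big)\cong G$, so $G$ is representable in the sense attached to $\cF$ as well, and the representing object is by construction $\op{gl}_0\big(r|\ul s\times p|\ul q\big)=\R^{t|\ul u}$ with $t|\ul u$ as in~\eqref{UN}. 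I do not expect a genuine obstacle here: the one piece of actual labour is the bookkeeping in the middle step --- pinning down the explicit correspondence between the $u_n$ non-vanishing degree-$\zg_n$ entries of a degree-zero $\z$-graded matrix and the degree-$\zg_n$ coordinates of $\R^{t|\ul u}$, and checking that it intertwines the two functorial structures --- which is purely the block combinatorics of~\eqref{glR}--\eqref{DegBas} and formula~\eqref{UN}, everything conceptual being already supplied by~\eqref{Man0Mat}, \eqref{IHomVecMat}, and Theorem~\ref{FunPtsLinManFF}.
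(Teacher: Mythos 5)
Your proposal is correct and follows essentially the same route as the paper: identify $\op{gl}_0\big(r|\ul s\times p|\ul q,\zL\big)$ with $\Pi_{n=0}^N\zL_{\zg_n}^{\times u_n}\simeq\R^{t|\ul u}(\zL)$ via the block combinatorics behind~\eqref{UN}, transfer the Fr\'echet structure, observe that both functors act on morphisms by applying $\zf^*$ entry-wise so the identification is natural, and conclude by full faithfulness of $\cS$. The only difference is that you spell out the natural isomorphism that the paper dispatches with ``of course,'' which is a harmless (and welcome) elaboration.
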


\begin{Example}
For $n=2$, we find that $\op{gl}_0(1|1,1,1) = \R^{4|4,4,4}$.
\end{Example}

\section[Z2n-Lie groups and linear actions]
{$\boldsymbol{\Z_2^n}$-Lie groups and linear actions}

\newcommand{\gl}{\op{gl}}
\newcommand{\rim}{\mathring}

\subsection[Z2n-matrices]
{$\boldsymbol{\Z_2^n}$-matrices}\label{subsec:Zmatrices}
We will consider matrices that are valued in some $\Z_2^n$-Grassmann algebra $\Lambda$, though everything we say generalizes to arbitrary $\Z_2^n$-commutative associative unital $\R$-algebras. A homogeneous matrix $X \in \gl_x\big(r|\ul s\times p|\ul{q},\Lambda\big)$ of degree $x\in\z$ is understood to be a block matrix
\begin{equation*}
X =\begin{pmatrix}
X_{00} & \dotsc & X_{0N} \\
\hline
\vdots & \ddots & \vdots \\
\hline
X_{N0} & \dotsc & X_{NN}
\end{pmatrix}\!,
\end{equation*}
with the entries of each block $X_{ij}$ being elements of the $\Z_2^n$-Grassmann algebra $\Lambda$. Here the degree $x_{ij}\in\z$ of $X_{ij}$ is
\[
x_{ij} = \gamma_i + \gamma_j + x
\]
and the dimension of $X_{ij}$ is
\[
\dim(X_{ij}) = s_i \times q_j
\]
(setting $s_0=r$ and $q_0 = p$ as usual). Addition of such matrices and multiplication by reals are defined in the obvious way and they endow $\gl_x\big(r|\ul s\times p|\ul{q},\Lambda\big)$ with a vector space structure. We~set
\[
\gl\big(r|\ul s\times p|\ul{q},\Lambda\big):=\bigoplus_{x\in\z}\gl_x\big(r|\ul s\times p|\ul{q},\Lambda\big)\in\z{\tt Vec}.
\]
Multiplication by an element of $\Lambda$ requires an extra sign factor given by the row of the matrix, i.e., for any homogeneous $\lambda \in \Lambda_{\zg_k}$, we have that
\[
(\lambda\: X)_{ij} = (-1)^{\langle \zg_k, \gamma_i\rangle} \lambda \: X_{ij}.
\]
We thus obtain on $\gl\big(r|\ul s\times p|\ul{q},\Lambda\big)$ a $\z$-graded module structure over the $\z$-commutative algebra $\zL$. If $ r|\ul s=p|\ul q$, we write
\[
\gl\big(p|\ul q,\zL\big):=\gl(p|\ul q \times p|\ul q,\zL).
\]
Multiplication of matrices in $\gl(p|\ul{q},\Lambda)$ is via standard matrix multiplication~-- now taking care that the entries are from a $\Z_2^n$-commutative algebra. Equipped with this multiplication, the~$\z$-graded $\zL$-module $\gl\big(p|\ul q,\zL\big)$ is a $\z$-graded associative unital $\R$-algebra. In particular, the degree zero matrices $\gl_0\big(p|\ul q,\zL\big)$ form an associative unital $\R$-algebra. Since multiplication of matrices only uses multiplication and addition in $\zL$, we can replace $\zL$ not only, as said above, by any $\z$-commutative associative unital $\R$-algebra, but also by any $\z$-commutative ring $R$ and then get a ring $\gl_0(p|\ul q,R)$. We~denote by $\GL(p|\ul q,R)$ the group of invertible matrices in~$\gl_0(p|\ul q,R)$. For further details the reader may consult~\cite{Covolo:2012}.

\subsection[Invertibility of Z2n-matrices]{Invertibility of $\boldsymbol{\z}$-matrices}
Let $R$ be a $\z$-commutative ring which is Hausdorff-complete in the $J$-adic topology, where
$J$ is the (proper) homogeneous ideal of $R$ that is generated by the elements of non-zero degree $\gamma_j\in\Z_2^n$, $j\in\{1,\dots,N\}$. The $\Z_2^n$-graded ring morphism
$\varepsilon\colon R \to \,R/J$, where
\[
R/J=\bigoplus_iR_i/(R_i\cap J)=R_0/(R_0\cap J)
\]
vanishes in all non-zero degrees, induces a ring morphism
\[
\tilde\varepsilon\colon\ \gl_0\big(p|\ul{q},R\big)\ni X\mapsto \tilde\varepsilon(X)\in\op{Diag}\big(p|\ul{q},R/J\big),
\]
where $\tilde\varepsilon(X)$ is the
block-diagonal matrix with diagonal blocks $\tilde\varepsilon(X_{ii})$ (with commuting entries).

The following proposition appeared as Proposition~5.1 in~\cite{CKP2020}:

\begin{Proposition}\label{invertibleblockdiag}
Let $R$ be a $J$-adically Hausdorff-complete $\z$-commutative ring and let $X\in\gl_0\big(p|\ul{q},R\big)$ be a degree zero $p|{\ul q}\times p|{\ul q}$ matrix with entries in $R$, written in the
standard block format
\begin{equation*}
X = \left(\begin{array}{c|c|c}
X_{00} & \dotsc & X_{0N} \\
\hline
\vdots & \ddots & \vdots \\
\hline
X_{N0} & \dotsc & X_{NN} \\
\end{array}\right)\!.
\end{equation*}
We have:
\begin{align*}
X\in\GL(p|\ul q,R)& \Leftrightarrow X_{ii}\in\GL(q_i,R),\quad\forall i\\
&\Leftrightarrow\tilde\varepsilon(X)\in\GL(p|\ul q,R/J)
\Leftrightarrow \tilde\varepsilon(X_{ii})\in\GL(q_i,R/J),\quad\forall i.
\end{align*}
\end{Proposition}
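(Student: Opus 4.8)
The plan is to prove the chain of equivalences by reducing everything to the block-diagonal case via the $J$-adic topology, exactly as one does for ordinary supermatrices, but paying attention to the multiplicity of nonzero degrees. First I would observe that the two rightmost equivalences are the easiest: the ring morphism $\varepsilon\colon R\to R/J$ is surjective with kernel $J$, and $\tilde\varepsilon$ is a ring morphism $\gl_0(p|\ul q,R)\to\op{Diag}(p|\ul q,R/J)$. Since $R/J$ is an honest commutative ring (all nonzero degrees die), the block-diagonal matrix $\tilde\varepsilon(X)$ is invertible in $\GL(p|\ul q,R/J)$ if and only if each diagonal block $\tilde\varepsilon(X_{ii})$ is invertible in $\GL(q_i,R/J)$, which is a standard fact for commutative rings (a block-diagonal matrix is invertible iff its blocks are). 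This gives the equivalence between the two conditions on the right. Then I would note that $X_{ii}\in\GL(q_i,R)$ is equivalent to $\tilde\varepsilon(X_{ii})\in\GL(q_i,R/J)$: this is the classical statement that, over a ring Hausdorff-complete for an ideal $J$, a square matrix is invertible iff its reduction mod $J$ is invertible — because an element of $R$ is a unit iff its image in $R/J$ is (invert via the Neumann series $\sum_k(1-u)^k$, which converges $J$-adically), and this transfers to matrices over $R_0$ by viewing $\gl(q_i,R_0)$ as matrices over the $J$-adically complete ring $R_0$ with ideal $R_0\cap J$.

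The substantive content is the first equivalence, $X\in\GL(p|\ul q,R)\Leftrightarrow X_{ii}\in\GL(q_i,R)\ \forall i$. The $(\Rightarrow)$ direction is immediate by applying $\tilde\varepsilon$: if $X$ is invertible then $\tilde\varepsilon(X)$ is invertible in $\GL(p|\ul q,R/J)$, hence each $\tilde\varepsilon(X_{ii})$ is invertible, hence each $X_{ii}$ is invertible by the previous paragraph. For $(\Leftarrow)$, the idea is to write $X = X_d(1 + X_d^{-1}X_o)$ where $X_d$ is the block-diagonal part (invertible by hypothesis, with inverse the block-diagonal of the $X_{ii}^{-1}$) and $X_o = X - X_d$ is the strictly-off-block-diagonal part. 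The key point is that every entry of $X_o$ lies in $J$: an off-diagonal block $X_{ij}$ with $i\neq j$ has degree $x_{ij}=\gamma_i+\gamma_j\neq 0$ (since $\gamma_i\neq\gamma_j$ in $\z$), so its entries are homogeneous of nonzero degree and hence lie in $J$ by definition of $J$. Therefore $X_d^{-1}X_o$ is a matrix all of whose entries lie in $J$, so its powers $(X_d^{-1}X_o)^k$ have entries in $J^k$, and the Neumann series $\sum_{k\ge 0}(-1)^k(X_d^{-1}X_o)^k$ converges entrywise in the $J$-adically Hausdorff-complete ring $R$ (one should check degree-zero-ness: $X_d^{-1}X_o$ is a degree zero matrix as a difference/product of degree zero matrices, so all partial sums lie in $\gl_0(p|\ul q,R)$, and the limit does too since $\gl_0$ is closed in the product topology). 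This yields $(1+X_d^{-1}X_o)^{-1}$, hence $X^{-1} = (1+X_d^{-1}X_o)^{-1}X_d^{-1}$.

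The main obstacle — more a bookkeeping subtlety than a genuine difficulty — is verifying that all the matrix products and inverses one writes down stay in the degree zero part $\gl_0(p|\ul q,R)$ and that the sign factors in $\z$-matrix multiplication do not disrupt the Neumann-series argument. Concretely, one must check that the block-diagonal matrix built from the $X_{ii}^{-1}$ is genuinely a two-sided inverse of $X_d$ in the $\z$-graded matrix algebra (the signs in $(\lambda X)_{ij}=(-1)^{\langle\gamma_k,\gamma_i\rangle}\lambda X_{ij}$ only enter scalar multiplication, not matrix multiplication, and matrix multiplication of two block-diagonal matrices is blockwise, so this is fine), and that the $J$-adic estimate $(X_d^{-1}X_o)^k \in \gl_0(p|\ul q, J^k\text{-entries})$ really does force convergence — which it does because $R$ is Hausdorff-complete for the $J$-adic topology and $\bigcap_k J^k = 0$. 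Once these verifications are in place, assembling the four-way equivalence is routine.
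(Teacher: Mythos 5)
Your proposal is correct: the paper itself does not prove this proposition but cites it as Proposition~5.1 of~\cite{CKP2020}, and your argument (reduce the right-hand equivalences to block-diagonal matrices over the commutative ring $R/J$, lift units along $\varepsilon$ via a $J$-adically convergent Neumann series, and invert $X=X_d(1+X_d^{-1}X_o)$ using that all entries of the off-block-diagonal part $X_o$ are homogeneous of non-zero degree and hence lie in $J$) is exactly the standard route. The only verification worth writing out in full is the one you flag in passing: that the homogeneous components of $R$ are closed in the $J$-adic topology (which follows from $J^k$ being a homogeneous ideal together with Hausdorffness), so that the entrywise limits defining $(1+X_d^{-1}X_o)^{-1}$ land in the correct degrees and the inverse stays in $\op{gl}_0(p|\ul q,R)$.
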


In this work, we are of course mainly interested in the case $R:=\zL=\R\oplus\rim{\zL}$ and $J=\rim{\zL}$, so~that $R/J=\R$.

\subsection[Z2n-Lie groups and their functor of points]
{$\boldsymbol{\Z_2^n}$-Lie groups and their functor of points}

Groups, or, better, group objects can easily be defined in any category with finite products, i.e., any category $\tt C$ with terminal object $1$ and binary categorical products $c\times c'$ ($c,c'\in {\tt C}$).

\newcommand{\cG}{\mathcal{G}}

If $\tt C$ is a concrete category, the definition of a group object is very simple. For instance, if $\tt C$ is the concrete category $A\tt FM$ of Fr\'echet manifolds over a Fr\'echet algebra $A$, a group object $\cG$ in~$\tt C$ is just an object $\cG\in\tt C$ that is group whose structure maps $\zm\colon \cG\times \cG\to \cG$ and $\op{inv}\colon \cG\to \cG$ are $\tt C$-morphisms, i.e., $A$-smooth maps. We~refer of course to a group object in $A\tt FM$ as a {\it Fr\'echet $A$-Lie group}.

If $\tt C$ is the category $\z\tt Man$ of $\z$-manifolds, the definition of a group object is similar, but all the (natural) requirements (above) have to be expressed in terms of arrows (since there are no points here). More precisely, a group object $G$ in $\tt C$ is an object $G\in\tt C$ that comes equipped with $\tt C$-morphisms
\[
\zm\colon\ G\times G\to G,\qquad \op{inv}\colon\ G\to G\qquad\text{and}\qquad e\colon\ 1\to G
\]
(the terminal object $1$ is here the $\z$-manifold $\R^{0|\ul 0}=(\{\star\},\R)$), which are called multiplication, inverse and unit, and satisfy the standard group properties (expressed by means of arrows): $\zm$ is associative, $\op{inv}$ is a two-sided inverse of $\zm$ and $e$ is a two-sided unit of $\zm$. To understand the arrow expressions of these properties, we need the following notations. We~denote by $\zD\colon G\to G\times G$ the canonical diagonal $\tt C$-morphism and we denote by $e_G\colon G\to G$ the composite of the unique $\tt C$-morphism $1_G\colon G\to 1$ and the unit $\tt C$-morphism $e\colon 1\to G$. The left inverse condition now reads
\[
\zm\circ(\op{inv}\times\id_G)\circ \zD=e_G
\]
and the left unit condition reads
\[
\zm\circ (e_G\times\id_G)\circ\zD =\id_G
\]
(and similarly for the right conditions). The associativity of $\zm$ is of course encoded by \begin{gather}\label{AssG}
\zm\circ(\zm\times \id_G)=\zm\circ(\id_G\times\zm).
\end{gather}
We refer to a group object in $\z\tt Man$ as a {\it $\z$-Lie group}.

A morphism $F\colon \cG\to \cG'$ of Fr\'echet $A$-Lie groups is of course defined as an $A$-smooth map that is a group morphism. Analogously, a morphism $F\colon \cG\to \cG'$ from a Fr\'echet $A$-Lie group to a Fr\'echet $A'$-Lie group is a morphism of $\tt AFM$ that is also a group morphism. We~denote the {\it category of Fr\'echet $A$-Lie groups} by $A\tt FLg$ and we write $\tt AFLg$ for the {\it category of Fr\'echet Lie groups over any Fr\'echet algebra}.

Further, a morphism $\Phi\colon G\to G'$ of $\z$-Lie groups is a $\z$-morphism that respects the multiplications, the inverses and the units (obvious arrow definitions). The {\it category of $\z$-Lie groups} we denote by $\z\tt Lg$.

The functor of points of $\z$-manifolds
\begin{gather}\label{SMan}
\cS\colon\ \z{\tt Man}\to {\tt Fun}_0\big(\z{\tt Pts}^{\op{op}},{\tt AFM}\big)
\end{gather}
induces a fully faithful functor of points of $\z$-Lie groups:

\begin{Theorem}\label{FunPtsZLg}
The functor
\begin{gather}\label{ResSLg}
\cS\colon\ \z{\tt Lg}\to {\tt Fun}_0\big(\z{\tt Pts}^{\op{op}},{\tt AFLg}\big)
\end{gather}
is fully faithful. Moreover, if $M\in\z{\tt Man}$ and
\[
\cS(M)=M(-)\in{\tt Fun}_0\big(\z{\tt Pts}^{\op{op}},{\tt AFLg}\big),
\]
then $M\in\z{\tt Lg}$. \end{Theorem}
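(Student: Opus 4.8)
The plan is to deduce both assertions from the fact that $\cS$ on $\z{\tt Man}$ (equation~\eqref{SMan}) is already known to be fully faithful, together with the fact that $\cS$ preserves and reflects finite products. First I would observe that $\z{\tt Man}$ has finite products (the terminal object $\R^{0|\ul 0}$ and binary products of $\z$-manifolds), and that for $M,M'\in\z{\tt Man}$ we have $\cS(M\times M')\simeq \cS(M)\times\cS(M')$ in the functor category, because $\op{Hom}_{\z\tt Man}(\R^{0|\ul m},M\times M')\simeq \op{Hom}_{\z\tt Man}(\R^{0|\ul m},M)\times\op{Hom}_{\z\tt Man}(\R^{0|\ul m},M')$ and the product in ${\tt Fun}_0\big(\z{\tt Pts}^{\op{op}},{\tt AFM}\big)$ is computed objectwise as the product of Fr\'echet $\zL_0$-manifolds (which is the Cartesian product with the product structure). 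Likewise the terminal object goes to the terminal object. Hence $\cS$ is a finite-product-preserving fully faithful functor; being fully faithful it also reflects products in the sense that a cone in $\z{\tt Man}$ is a product cone iff its $\cS$-image is.

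For the first assertion, a group object $G$ in $\z{\tt Man}$ is an object equipped with morphisms $\zm,\op{inv},e$ satisfying the arrow-theoretic group axioms (associativity~\eqref{AssG}, the unit and inverse laws). Applying the functor $\cS$, which preserves finite products and composition, sends these arrows to natural transformations $\cS(\zm),\cS(\op{inv}),\cS(e)$ whose targets and sources are the appropriate products, and the group identities are preserved since $\cS$ is a functor respecting the relevant product structure. Concretely, for each $\zL$ the $\zL$-component makes $G(\zL)$ a group in $\tt AFM$ over $\zL_0$, i.e.\ a Fr\'echet $\zL_0$-Lie group, with smooth structure maps; and for each $\z$-algebra morphism $\zf^*\colon\zL\to\zL'$ the component $G(\zf^*)$ is a group morphism, because it is $\cS(M)$ applied to a morphism and respects the images of $\zm,\op{inv},e$. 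Thus $\cS(G)$ lands in ${\tt Fun}_0\big(\z{\tt Pts}^{\op{op}},{\tt AFLg}\big)$, and a $\z$-Lie group morphism $\Phi\colon G\to G'$ is sent to a natural transformation compatible with the group structures, i.e.\ a morphism in the target category. Fidelity of the restricted functor~\eqref{ResSLg} is immediate from fidelity of~\eqref{SMan}. For fullness: given a natural transformation $\zb\colon G(-)\to G'(-)$ in ${\tt Fun}_0\big(\z{\tt Pts}^{\op{op}},{\tt AFLg}\big)$, it is in particular a morphism in ${\tt Fun}_0\big(\z{\tt Pts}^{\op{op}},{\tt AFM}\big)$, hence by fullness of~\eqref{SMan} it equals $\cS(\Phi)$ for a unique $\z$-morphism $\Phi\colon G\to G'$; one then checks that $\Phi$ respects $\zm,\op{inv},e$ by applying $\cS$ to the relevant square, using the fact that $\zb$ is a group-morphism natural transformation and that $\cS$ is faithful and product-preserving, so the two arrows $\Phi\circ\zm_G$ and $\zm_{G'}\circ(\Phi\times\Phi)$ have the same image under $\cS$ and hence coincide.

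For the second ("Moreover") assertion, suppose $M\in\z{\tt Man}$ and $M(-)=\cS(M)$ is an object of ${\tt Fun}_0\big(\z{\tt Pts}^{\op{op}},{\tt AFLg}\big)$, i.e.\ each $M(\zL)$ is a Fr\'echet $\zL_0$-Lie group, functorially. Then $M(-)$ carries natural transformations $\boldsymbol\mu\colon M(-)\times M(-)\to M(-)$, $\boldsymbol{\op{inv}}\colon M(-)\to M(-)$, $\mathbf e\colon \ast\to M(-)$ (where $\ast$ is the terminal functor, represented by $\R^{0|\ul 0}$) implementing the group structure. Since $\cS$ preserves finite products, $M(-)\times M(-)\simeq \cS(M\times M)$ and $\ast\simeq\cS(\R^{0|\ul 0})$; since $\cS$ is full, each of $\boldsymbol\mu,\boldsymbol{\op{inv}},\mathbf e$ is $\cS$ of a unique $\z$-morphism $\zm\colon M\times M\to M$, $\op{inv}\colon M\to M$, $e\colon \R^{0|\ul 0}\to M$. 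The group axioms for $M(-)$ are equalities of natural transformations between functors of the form $\cS(\text{something})$, so by faithfulness of $\cS$ (and product-preservation, which identifies e.g.\ $\cS(\zm\circ(\zm\times\id))$ with $\cS(\zm)\circ(\cS(\zm)\times\id)$) they descend to the corresponding equalities of $\z$-morphisms. Hence $(M,\zm,\op{inv},e)$ satisfies the arrow-theoretic group axioms, i.e.\ $M\in\z{\tt Lg}$.

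The main obstacle I expect is the careful bookkeeping that $\cS$ genuinely preserves binary products \emph{as structured} objects — that the product of Fr\'echet $\zL_0$-manifolds $M(\zL)\times M'(\zL)$ is $(M\times M')(\zL)$ not merely as sets but as Fr\'echet $\zL_0$-manifolds, and that this identification is natural in $\zL$ and compatible with the categorical-product universal property in the functor category (whose products are again computed objectwise). This rests on the already-cited product structure of $\z$-manifolds (cf.~\cite{Bruce:2019}) and on the elementary fact that in ${\tt Fun}_0\big(\z{\tt Pts}^{\op{op}},{\tt AFM}\big)$ limits are objectwise; once that compatibility is in hand, everything else is a formal transport of the group-object diagrams through the fully faithful, product-preserving functor $\cS$.
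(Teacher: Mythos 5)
Your proposal is correct and follows essentially the same route as the paper's proof: transport the group-object diagrams through the fully faithful, finite-product-preserving functor $\cS$ (using $\cS(M\times N)=\cS(M)\times\cS(N)$ from the universal property), deduce fullness of the restricted functor from fullness of the unrestricted one plus faithfulness to verify that the implementing $\z$-morphism respects $\zm$, $\op{inv}$, $e$, and obtain the ``Moreover'' part by representing the structure natural transformations and reflecting the group axioms back along $\cS$. The paper merely spells out a few more of the bookkeeping details (e.g., that $G(\zf^*)$ respecting multiplications is exactly naturality of $\zm_-$, and the explicit description of the unit via $1(\zL)$), which you correctly flag as the main points needing care.
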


This theorem was announced as~\cite[Theorem 3.30]{Bruce:2019b} without proper explanation or proof.

\begin{proof}
It is clear that we have subcategories
\[
{\tt AFLg}\subset{\tt AFM},\qquad {\tt Fun}_0\big(\z{\tt Pts}^{\op{op}},{\tt AFLg}\big)\subset{\tt Fun}_0\big(\z{\tt Pts}^{\op{op}},{\tt AFM}\big)\qquad\text{and}\qquad{\z\tt Lg}\subset\z{\tt Man}.
\]
Therefore, in order to prove that the functor~\eqref{SMan} restricts to a functor~\eqref{ResSLg}, it suffices to show that $\cS$ sends objects $G$ and morphisms $\Phi$ of $\z{\tt Lg}$ to objects and morphisms of the functor category with target ${\tt AFLg}$.
Observe first that, for any $M,N\in\z{\tt Man}$, we have the functor equality
\begin{gather}\label{SPRPRS}
\cS(M\times N)=(M\times N)(-)=M(-)\times N(-)=\cS(M)\times\cS(N),
\end{gather}
in view of the universal property of $M\times N$. Further, if $\phi\colon M\to M'$ and $\psi\colon N\to N'$ are two $\z$-morphisms, the natural transformation
\[
\cS(\phi\times\psi)=(\phi\times\psi)_- \colon\ (M\times N)(-)\to (M'\times N')(-)
\]
 becomes $\phi_-\times\psi_-$, if we read it through the identification~\eqref{SPRPRS}.

Now, if $G\in\z{\tt Lg}$ with structure $\z$-morphisms $\zm$, $\op{inv}$ (and $e$), then the ${\tt AFM}$-valued functor $\cS(G)=G(-)$ is actually ${\tt AFLg}$-valued. This means that it sends any $\z$-Grassmann algebra $\zL$ and any $\z{\tt Alg}$-morphism $\zf^*\colon \zL\to\zL'$ to an object $G(\zL)$ and a morphism $G(\zf^*)$ of $\tt AFLg$.

For $G(\zL)\in\zL_0{\tt FM}$, notice that the natural transformations $\cS(\zm)=\zm_-,\; \cS(\op{inv})=\op{inv}_-$ (and $\cS(e)=e_-$) have $\zL_0$-smooth $\zL$-components
\[
\zm_\zL\colon\ G(\zL)\times G(\zL)\to G(\zL),\qquad \op{inv}_\zL\colon\ G(\zL)\to G(\zL)\qquad(\text{and } e_\zL\colon\ 1(\zL)\to G(\zL))
\]
 (the Fr\'echet $\zL_0$-manifold $1(\zL)$ is the singleton that consists of the $\z{\tt Alg}$-morphism $\iota_\zL$ that sends any real number to itself viewed as an element of $\zL$) that define a group structure on~$G(\zL)$ (with unit $1_\zL:=e_\zL(\iota_\zL)$), which is therefore a Fr\'echet $\zL_0$-Lie group. The group properties of these structure maps are consequences of the group properties of the structure maps of $G$. For~instance, when we apply $\cS$ to the associativity equation~\eqref{AssG} and then take the $\zL$-component of the resulting natural transformation, we get
 \[
 \zm_\zL\circ\big(\zm_\zL\times\id_{G(\zL)}\!\big)=\zm_\zL\circ\big(\id_{G(\zL)}\times\zm_\zL\big).
 \]

As for $G(\zf^*)\colon G(\zL)\to G(\zL')$, we know that it is an ${\tt AFM}$-morphism and have to show that it respects the multiplications $\zm_\zL$ and $\zm_{\zL'}$, i.e., that
\begin{gather}\label{MultRespNat}
\zm_{\zL'}\circ(G(\zf^*)\times G(\zf^*))=G(\zf^*)\circ\zm_\zL.
\end{gather}
However, this equality is nothing other than the naturalness property of $\zm_-$.

Finally, let $\Phi\colon G\to G'$ be a $\z{\tt Lg}$-morphism and denote the multiplications of the source and target by $\zm$ and $\zm'$, respectively. In order to prove that the natural transformation $\cS(\Phi)=\Phi_-\colon G(-)\to G'(-)$ of the functor category with target ${\tt AFM}$ is a natural transformation of the functor category with target ${\tt AFLg}$, it suffices to show that $\Phi_\zL$ is a morphism of ${\tt AFLg}$, which results from the application of the functor $\cS$ to the commutative diagram
\begin{gather}\label{MultResp}
\zm'\circ(\Phi\times\Phi)=\Phi\circ\zm.
\end{gather}

The next task is to show that the functor~\eqref{ResSLg} is fully faithful, i.e., that the map \begin{gather}\label{FFResSLg}
\cS_{G,G'}\colon\ \op{Hom}_{\z{\tt Lg}}(G,G')\ni\Phi\mapsto \Phi_-\in\op{Hom}_{{\tt Fun}_0(\z{\tt Pts}^{\op{op}},\,{\tt AFLg})}(G(-),G'(-))
\end{gather}
is a $1:1$ correspondence, for any $\z$-Lie groups $G,G'$. Since the functor~\eqref{SMan} is fully faithful, any natural transformation in the target set of~\eqref{FFResSLg} is implemented by a unique $\z$-morphism $\phi\colon G\to G'$ and it suffices to show that $\phi$ respects the group operations, for instance, that is satisfies equation~\eqref{MultResp}. However, equation~\eqref{MultResp} is satisfied if and only if
\[
\zm'_\zL\circ(\phi_\zL\times\phi_\zL)=\phi_\zL\circ \zm_\zL,
\]
for all $\zL$. The latter condition holds, since $\phi_\zL$ is, by assumption, a group morphism.

We must still prove the last statement of Theorem~\ref{FunPtsZLg}. The assumption implies that, for any $\z$-Grassmann algebra $\zL$ and any $\z$-algebra morphism $\zf^*\colon \zL\to\zL'$, we get a Fr\'echet $\zL_0$-Lie group $M(\zL)$ and a $(\zf^*)_0$-smooth group morphism $M(\zf^*)\colon M(\zL)\to M(\zL')$. We denote by $1_\zL$ (resp., $\zm_\zL,\op{inv}_\zL$) the unit element (resp., the $\zL_0$-smooth multiplication, the $\zL_0$-smooth inverse) of the group structure on the Fr\'echet $\zL_0$-manifold $M(\zL)$. We have already observed (see~\eqref{MultRespNat}) that the fact that $M(\zf^*)$ respects the multiplications $\zm_\zL$ and $\zm_{\zL'}$ is equivalent to that of $\zm_-$ being natural. The natural transformation $\zm_-\colon (M\times M)(-)\to M(-)$ is implemented by a~unique $\z$-morphism $\zm\colon M\times M\to M$. We obtain similarly a $\z$-morphism $\op{inv}\colon M\to M$. As~for $e\colon 1\to M$, we notice that the maps
\[
e_\zL\colon \ 1(\zL)\ni\iota_\zL\mapsto 1_\zL\in M(\zL)\quad (\zL\in\z{\tt GrAlg})
\]
 define visibly a natural transformation with $\zL_0$-smooth $\zL$-components. Hence, it is implemented by a unique $\z$-morphism $e\colon 1\to M$. We leave it to the reader to check that $\zm$, $\op{inv}$ and $e$ satisfy~\eqref{AssG} and the other group properties. \end{proof}

\subsection[The general linear Z2n-group]{The general linear $\boldsymbol{\Z_2^n}$-group}
We want to define the {\it general linear $\z$-group} of order $p|\ul q$ so that it is a $\z$-Lie group $\op{GL}\big(p|\ul q\big)$. In view of Theorem~\ref{FunPtsZLg}, it suffices to define a functor
\[
\op{GL}\big(p|\ul q\big)(-)\in{\tt Fun}_0\big(\z{\tt Pts}^{\op{op}},{\tt AFLg}\big)\;
\]
 that is represented by a $\z$-manifold $\op{GL}\big(p|\ul q\big)$.

\begin{Definition}\label{def:GLp|q}
The \emph{general linear $\Z_2^n$-group} $\op{GL}\big(p|\ul q\big)$ is defined, for any $\zL\in\z{\tt GrAlg}$, by
\[
\GL\big(p|\ul{q}\big)(\Lambda) := \op{GL}\big(p|\ul q,\zL\big)=\big\{ X \in \gl_0(p|\ul{q},\zL)\colon X \textnormal{ is invertible} \big \},
\]

and, for any $\z{\tt Alg}$-morphism $\zf^* \colon \Lambda \rightarrow \Lambda'$ and any $X\in\GL\big(p|\ul{q}\big)(\Lambda)$, by
\begin{align*}
\GL\big(p|\ul{q}\big)(\varphi^*)(X):=\tilde\varphi^*X,
\end{align*}
where $\tilde\zf^*$ is $\zf^*$ acting on $X$ entry-by-entry.\end{Definition}

\begin{Theorem}\label{thm:GLRep}
The maps $\op{GL}\big(p|\ul q\big)(-)$ of Definition~$\ref{def:GLp|q}$ define a representable functor. We~refer to the representing object $\op{GL}\big(p|\ul q\big)\in\z{\tt Lg}$ as the \emph{general linear $\z$-group of dimension} $p|\ul q$. \end{Theorem}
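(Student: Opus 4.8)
The goal is to show that the functor $\GL(p|\ul q)(-)$ of Definition~\ref{def:GLp|q} is representable, i.e., naturally isomorphic to $\cS(G)$ for some $G\in\z\tt Man$, and, by the last statement of Theorem~\ref{FunPtsZLg}, this will automatically make $G$ a $\z$-Lie group. First I would check that $\GL(p|\ul q)(-)$ genuinely lies in the functor category ${\tt Fun}_0\big(\z{\tt Pts}^{\op{op}},{\tt AFLg}\big)$: for each $\zL$, $\GL(p|\ul q,\zL)$ is a group under matrix multiplication (unital since $\gl_0(p|\ul q,\zL)$ is a unital $\R$-algebra), and by Proposition~\ref{invertibleblockdiag} applied to $R=\zL$ (which is $\rim\zL$-adically Hausdorff-complete, so the hypotheses hold), invertibility of $X$ is equivalent to invertibility of the diagonal blocks $X_{ii}\in\GL(q_i,\zL_0)$; on $\zf^*\colon\zL\to\zL'$ the map is $\tilde\zf^*$ applied entry-wise, which is $(\zf^*)_0$-smooth and a group morphism because $\zf^*$ respects the $\z$-algebra operations used in matrix multiplication. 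So the functoriality and target-category conditions are routine.

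The heart of the proof is exhibiting the representing $\z$-manifold. The idea is that $\GL(p|\ul q)$ should be an \emph{open sub-$\z$-manifold} of the Cartesian $\z$-manifold $\op{gl}_0(p|\ul q):=\R^{t|\ul u}$ of Proposition~\ref{gl0}, cut out by the non-vanishing of the determinants of the (purely even, $R/J=\R$-valued) reduced diagonal blocks. Concretely: let $|U|\subset\R^t=|\op{gl}_0(p|\ul q)|$ be the open set where each $\det\big(\tilde\varepsilon(X_{ii})\big)\neq 0$ ($i\in\{0,\dots,N\}$); this is open because the reduced diagonal entries are among the base coordinates and $\det$ is continuous. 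Define $\GL(p|\ul q)$ to be the open sub-$\z$-manifold $\big(|U|,\cO_{\R^{t|\ul u}}|_{|U|}\big)$. I would then show $\cS\big(\GL(p|\ul q)\big)\cong\GL(p|\ul q)(-)$: for a $\z$-Grassmann algebra $\zL\simeq\R^{0|\ul m}$, a $\zL$-point of $\R^{t|\ul u}$ landing in $|U|$ is exactly (via the Fr\'echet $\zL_0$-module isomorphism \eqref{Man0Mat}, $\R^{t|\ul u}(\zL)\simeq\op{gl}_0(p|\ul q\times p|\ul q,\zL)$) a degree-zero matrix $X$ over $\zL$ whose reduced diagonal blocks $\tilde\varepsilon(X_{ii})$ are invertible over $\R=\zL_0/(\zL_0\cap J)$; by Proposition~\ref{invertibleblockdiag} this is precisely the condition $X\in\GL(p|\ul q,\zL)$. (One must note that a $\zL$-point of an open sub-$\z$-manifold of $M$ is the same as a $\zL$-point of $M$ whose body lies in the open set, since $|\R^{0|\ul m}|=\{\star\}$.) The compatibility with $\zf^*$ is immediate since on both sides $\zf^*$ acts entry-wise on matrices.

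Finally, having the natural isomorphism $\cS(G)\cong\GL(p|\ul q)(-)$ with $\GL(p|\ul q)(-)$ valued in ${\tt Fun}_0\big(\z{\tt Pts}^{\op{op}},{\tt AFLg}\big)$, the last sentence of Theorem~\ref{FunPtsZLg} gives $G=\GL(p|\ul q)\in\z\tt Lg$, and by full faithfulness of $\cS$ (Theorem~\ref{FunPtsZLg}) the group structure $\z$-morphisms $\zm,\op{inv},e$ are the unique ones inducing the pointwise matrix multiplication, inversion, and identity. The main obstacle I anticipate is the identification of $\zL$-points of an open sub-$\z$-manifold and, relatedly, making the translation between $\R^{t|\ul u}(\zL)$ and $\op{gl}_0(p|\ul q,\zL)$ functorial and Fr\'echet-$\zL_0$-linear in a way that is compatible with restriction to the open locus --- i.e., checking that "invertible over $\zL$" is genuinely an open condition detected by the body, which is exactly where Proposition~\ref{invertibleblockdiag} (reducing invertibility over $\zL$ to invertibility of $\tilde\varepsilon$ over $\R$) does the essential work; the rest is bookkeeping already set up in Section~\ref{subsec:Zmatrices} and in the Internal Homs discussion.
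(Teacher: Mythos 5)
Your identification of the representing object is exactly the paper's: the open sub-$\z$-manifold $\cU^{t|\ul u}=\big(\cU^t,\cO_{\R^{t|\ul u}}|_{\cU^t}\big)$ of $\op{gl}_0\big(p|\ul q\big)=\R^{t|\ul u}$ with $\cU^t=\Pi_i\op{GL}(q_i,\R)$, and the identification of $\zL$-points via~\eqref{Man0Mat} together with Proposition~\ref{invertibleblockdiag} is also how the paper proceeds. The route through the last statement of Theorem~\ref{FunPtsZLg} is likewise the intended one.

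However, there is a genuine gap in the step you dismiss as ``routine'': to invoke Theorem~\ref{FunPtsZLg} you must show that each $\op{GL}\big(p|\ul q\big)(\zL)$ is an object of ${\tt AFLg}$, i.e., a \emph{Fr\'echet $\zL_0$-Lie group}, which requires the multiplication $\zm_\zL$ and, crucially, the inversion $\op{inv}_\zL\colon X\mapsto X^{-1}$ to be $\zL_0$-smooth. Exhibiting the group structure algebraically is not enough. For $\zm_\zL$ this is indeed easy (it is continuous bilinear), but for $\op{inv}_\zL$ even continuity is not automatic: $\zL$ is an infinite-dimensional Fr\'echet algebra, where the set of units need not be open and inversion need not be continuous in general. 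The paper spends roughly half of the proof on exactly this point: it first shows, using the explicit seminorms $\zr_\zb(\zl)=|\zl_\zb|$ on $\zL=\R[[\theta]]$, that $\zL^\times=\{\zl\colon\zl_0\neq 0\}$ is open, whence the scalar inverse $\iota_\zL$ is continuous (by Waelbroeck's criterion for Fr\'echet algebras); it then derives the block UDL/LDU inversion formula~\eqref{eqn:InverForm} (valid because Proposition~\ref{invertibleblockdiag} guarantees all the relevant diagonal blocks are invertible) to write every entry of $X^{-1}$ as a composite of the continuous operations $+,-,\cdot,\iota_\zL$ in $\zL$, and finally computes the directional derivatives ${\rm d}_X\op{inv}_\zL(V)=-X^{-1}\cdot V\cdot X^{-1}$, etc., to conclude $\zL_0$-smoothness. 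Without an argument of this kind your proof does not establish that the functor is ${\tt AFLg}$-valued, so the appeal to Theorem~\ref{FunPtsZLg} is not yet justified.
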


\begin{proof}

Recall that:
\begin{enumerate}\itemsep=0pt
\item
It follows from equation~\eqref{Man0Mat} that
\[
\gl_0\big(p|\ul{q},\zL\big)=\Pi_{n=0}^N\zL_{\zg_n}^{\times u_n}=\zL_0^{\times t}\times \Pi_{j=1}^N \zL_{\zg_j}^{\times u_j}\simeq\R^{t|\ul u}(\zL),
\]
 where $u_n$ is given by~\eqref{UN} ($t=u_0$).

\item
It follows from Proposition~\ref{invertibleblockdiag} that $X\in\op{gl}_0\big(p|\ul q,\zL\big)$ is invertible if and only if $\tilde{\ze}(X)\in\op{GL}\big(p|\ul q,\R\big)$, if and only if $\tilde{\ze}(X_{ii})\in\op{GL}(q_i,\R)$, for all $i\in\{0,\dots, N\}$, if and only if $X_{ii}\in\op{GL}(q_i,\zL)$, for all $i\in\{0,\dots, N\}$.\end{enumerate}

In particular, a matrix
\[
X\in\op{gl}_0\big(p|\ul q,\R\big)=\R^t=\R^{p^2+\sum_jq_j^2}=\op{Diag}\big(p|\ul q,\R\big)
\]
is invertible if and only if $X_{ii}\in\op{GL}(q_i,\R)$, for all $i$. It follows that \begin{gather}\label{GL1}
\cU^t:=\op{GL}\big(p|\ul q\big)(\R)=\Pi_{i=0}^N\op{GL}(q_i,\R)\subset \R^t.
\end{gather}
As $\cU^t\subset\R^t$ is open, we can consider the $\z$-domain
\begin{gather}\label{GL2}
\cU^{t|\ul u}:=(\cU^t,\cO_{\R^{t|\ul u}}|_{\cU^t}),
\end{gather}
as well as its functor of points
\[
\cU^{t|\ul u}(-)\in{\tt Fun}_0\big(\z{\tt Pts}^{\op{op}},{\tt AFM}\big),
\]
with value on $\zL$
\[
\cU^{t|\ul u}(\zL)\simeq\cU^t\times\rim{\zL}^{\times t}_0\times\Pi_{j=1}^N\zL_{\zg_j}^{\times u_j}\;
\]
(see~\cite{Bruce:2019b}).

On the other hand, we get
\begin{align*}
\op{GL}\big(p|\ul q\big)(\zL)&=\big\{X\in\R^t\times \rim{\zL}^{\times t}_0\times\Pi_{j=1}^N\zL_{\zg_j}^{\times u_j}\colon (
\dots,\tilde{\ze}(X_{ii}),\dots)\in\Pi_{i=0}^N\op{GL}(q_i,\R)\big\}
\\
&=\cU^t\times\rim{\zL}^{\times t}_0\times\Pi_{j=1}^N\zL_{\zg_j}^{\times u_j},
\end{align*}
so that $\cU^{t|\ul u}(-)$ and $\op{GL}\big(p|\ul q\big)(-)$ ``coincide'' on objects $\zL$: if we denote the coordinates of $\R^{t|\ul u}$ as usually by $(u^{\mathfrak{a}})=\big(x^a,\zx^A\big)$, this ``equality'' reads
\[
\cU^{t|\ul u}(\zL)\ni{\rm x}^*\simeq ({\rm x}^*(u^{\mathfrak{a}}))_{\mathfrak{a}}\in \op{GL}\big(p|\ul q\big)(\zL).
\]
Moreover, $\cU^{t|\ul u}(-)$ and $\op{GL}\big(p|\ul q\big)(-)$ coincide on morphisms $\zf^*\colon \zL\to\zL'$. Indeed, the map $\op{GL}\big(p|\ul q\big)(\zf^*)$ acts on a matrix
\[
({\rm x}^*(u^{\mathfrak{a}}))_{\mathfrak{a}}\in\op{GL}\big(p|\ul q\big)(\zL)\subset\zL^{\times t}_0\times\Pi_{j=1}^N\zL_{\zg_j}^{\times u_j}\;
\]
by acting on all its entries ${\rm x}^*(u^{\mathfrak{a}})$ by $\zf^*$, whereas the map $\cU^{t|\ul u}(\zf^*)$ acts on a $\z{\tt Alg}$-morphism ${\rm x}^*\in\cU^{t|\ul u}(\zL)$ by left composition $\zf^*\circ{\rm x}^*$; if we identify ${\rm x}^*$ with the tuple $({\rm x}^*(u^{\mathfrak{a}}))_{\mathfrak{a}}$, then $\cU^{t|\ul u}(\zf^*)$ acts by acting on each ${\rm x}^*(u^{\mathfrak{a}})$ by $\zf^*$, which proves the claim.

It follows that $\op{GL}\big(p|\ul q\big)(-)$ is a functor
\[
\op{GL}\big(p|\ul q\big)(-)\in{\tt Fun}_0\big(\z{\tt Pts}^{\op{op}},{\tt AFM}\big)
\]
that is represented by
\begin{gather}\label{GL3}
\op{GL}\big(p|\ul q\big):=\cU^{t|\ul u}\in\z{\tt Man},
\end{gather}
so that it now suffices to prove that this functor is valued in ${\tt AFLg}$, i.e., it suffices to show that $\op{GL}\big(p|\ul q\big)(\zL)\in\zL_0{\tt FLg}$ and that $\op{GL}\big(p|\ul q\big)(\zf^*)$ is an ${\tt AFLg}$-morphism.

Recall that $\op{gl}_0\big(p|\ul q,\zL\big)$ is an associative unital $\R$-algebra for the standard matrix multiplication $\cdot$ (standard matrix addition, standard matrix multiplication by reals and standard unit matrix $\mathbb{I}$) (see Section~\ref{subsec:Zmatrices}). It~is clear that the subset $\op{GL}\big(p|\ul q\big)(\zL)\subset\op{gl}_0\big(p|\ul q,\zL\big)$ is closed under~$\cdot$:
\begin{gather}\label{MultGL}
\zm_\zL\colon\ \op{GL}\big(p|\ul q\big)(\zL)\times\op{GL}\big(p|\ul q\big)(\zL)\ni(X,Y)\mapsto X\cdot Y\in\op{GL}\big(p|\ul q\big)(\zL)
\end{gather}
is an associative unital multiplication on $\op{GL}\big(p|\ul q\big)(\zL)$. Therefore, $\zm_\zL$ and
\begin{gather}\label{InvGL}
\op{inv}_\zL\colon\ \op{GL}\big(p|\ul q\big)(\zL)\ni X\mapsto X^{-1}\in \op{GL}\big(p|\ul q\big)(\zL)
\end{gather}
endow $\op{GL}\big(p|\ul q\big)(\zL)$ with a group structure (with unit $\mathbb{I}$). Finally, the Fr\'echet $\zL_0$-manifold $\mathop{\rm GL}\big(p|\ul q\big)(\zL)$ together with its group structure $\zm_\zL,\op{inv}_\zL$ (and $\mathbb{I}$) is a Fr\'echet $\zL_0$-Lie group, if its structure maps $\zm_\zL$ and $\op{inv}_\zL$ are $\zL_0$-smooth. This condition is actually satisfied (see below).

As for $\op{GL}\big(p|\ul q\big)(\zf^*)$, we know that it is an ${\tt AFM}$-morphism and need to show that it respects the multiplications $\zm_\zL$, $\zm_{\zL'}$. This condition is clearly met because $\op{GL}\big(p|\ul q\big)(\zf^*)$ acts entry-wise by the $\z{\tt Alg}$-morphism $\zf^*$.

It remains to explain why $\zm_\zL$ and $\op{inv}_\zL$ are $\zL_0$-smooth.

Notice first that the source of the multiplication~\eqref{MultGL} is the open subset $\zW(\zL):=\cU^{t|\ul u}(\zL)\times\cU^{t|\ul u}(\zL)$ of the Fr\'echet space $F(\zL):=\R^{t|\ul u}(\zL)\times\R^{t|\ul u}(\zL)$ (see~\cite{Bruce:2019b}) and that we can choose the Fr\'echet vector space (and Fr\'echet $\zL_0$-module) $\R^{t|\ul u}(\zL)$ as its target. Since $\zL$ is the ($\z$-commutative nuclear) Fr\'echet $\R$-algebra of global $\z$-functions of some $\z$-point $\R^{0|\ul m}$, its addition and internal multiplication (its multiplication by reals and subtraction) are continuous maps. It~follows that each component function of the standard matrix multiplication $\zm_\zL$ is continuous, so that $\zm_\zL$ is itself continuous. We~must now explain why all directional derivatives of $\zm_\zL$ exist everywhere and are continuous, and why the first derivative is $\zL_0$-linear. Let $(X,Y)\in\zW(\zL)$ and $(V,W)\in F(\zL)$. We get
\[
{\rm d}_{(X,Y)}\zm_\zL\,(V,W)=\lim_{t\to 0}\frac{(X+tV)\cdot(Y+tW)-X\cdot Y}{t}=X\cdot W+V\cdot Y.
\]
Hence, the first derivative exists everywhere, is continuous and $\zL_0$-linear. Indeed, for any ${\rm a}\in\zL_0$, we have
\[
{\rm d}_{(X,Y)}\zm_\zL\,({\rm a}\cdot V,{\rm a}\cdot W)={\rm a}\cdot {\rm d}_{(X,Y)}\zm_\zL\,(V,W).
\]
It is easily checked that
\[
{\rm d}^2_{(X,Y)}\zm_\zL(V_1,W_1,V_2,W_2)=V_2\cdot W_1+V_1\cdot W_2\qquad\text{and}\qquad
{\rm }^{k\ge 3}_{(X,Y)}\zm_\zL(V_1,W_1,\dots,V_k,W_k)=0,
\]
so that $\zm_\zL$ is actually $\zL_0$-smooth.

As for
\[
\op{inv}_\zL\colon\ \cU^{t|\ul u}(\zL)\subset\R^{t|\ul u}(\zL)\to \R^{t|\ul u}(\zL),
\]
we start computing the directional derivative of
\[
\mathbb{I}_\zL:=\zm_\zL \circ (\op{inv}_\zL\times\id_\zL)\circ\zD_\zL\colon\ \cU^{t|\ul u}(\zL)\subset\R^{t|\ul u}(\zL)\ni X\mapsto X^{-1}\cdot X=\mathbb{I}\in\R^{t|\ul u}(\zL)
\]
($\zD_\zL$ is the diagonal map), assuming continuity of $\op{inv}_\zL$, for the time being. For any $V\in\R^{t|\ul u}(\zL)$, we have
\[
{\rm d}_X\mathbb{I}_\zL(V)=\lim_{t\to 0}\frac{(X+tV)^{-1}\cdot (X+tV)-X^{-1}\cdot X}{t}=\lim_{t\to 0}\left(f_{XV}(t)\cdot X +g_{XV}(t)\cdot V\right)=0,
\]
where
\[
f_{XV}(t)=\frac{(X+tV)^{-1}-X^{-1}}{t}\qquad\text{and}\qquad g_{XV}(t)=(X+tV)^{-1}.
\]
It follows that
\begin{align*}
{\rm d}_X\op{inv}_\zL(V)&=\lim_{t\to 0}f_{XV}(t)=\lim_{t\to 0}\big((f_{XV}(t)\cdot X+g_{XV}(t)\cdot V)\cdot X^{-1}-g_{XV}(t)\cdot V\cdot X^{-1}\big)
\\
&=-X^{-1}\cdot V\cdot X^{-1},
\end{align*}
so that the first derivative is defined everywhere, is continuous, as well as $\zL_0$-linear. Also the higher order derivatives exist everywhere and are continuous. For instance, the second order derivative is given by
\begin{align*}
{\rm d}^2_X\op{inv}_\zL(V,W)&=-\lim_{t\to 0}\big(f_{XW}(t)\cdot V\cdot X^{-1}+g_{XW}(t)\cdot V\cdot f_{XW}(t)\big)\nonumber \\ &=X^{-1}\cdot W\cdot X^{-1}\cdot V\cdot X^{-1}+X^{-1}\cdot V\cdot X^{-1}\cdot W\cdot X^{-1}.
\end{align*}
Finally, the inverse map $\op{inv}_\zL$ is $\zL_0$-smooth, provided we prove its still pending continuity.

We will show that the continuity of~\eqref{InvGL} boils down to the continuity of the inverse map $\iota_\zL\colon \zL^{\times}\ni\zl\mapsto \zl^{-1}\in\zL^{\times}$ in $\zL$. Here $\zL^\times\subset\zL$ is the group of invertible elements of $\zL$. Since $\zL$ is a (unital) Fr\'echet $\R$-algebra, its inverse map $\iota_\zL$ is continuous if and only if $\zL^\times$ is a $G_\zd$-set, i.e., if and only if it is a countable intersection of open subsets of $\zL$~\cite{Waelbroeck:1971}. We~will show that $\zL^\times$ is actually open in the specific Fr\'echet $\R$-algebra $\zL$ considered. In view of Equation (16) in~\cite{Bruce:2018}, the topology of $\zL=\R[[\theta]]$ ($\zL\simeq\R^{0|\ul m}$) is induced by the countable family of seminorms
\[
\zr_\zb(\zl)=\frac{1}{\zb!}|\ze(\partial^\zb_\theta \zl)|=|\zl_\zb|\qquad \bigg(\zb\in\mathbb{N}^{\times|\ul m|},\ \zl=\sum_\za \zl_\za \theta^\za\in\zL\bigg),
\]
where $\ze$ is the projection $\ze\colon \zL\to \R$. This means that the topology is made of the unions of finite intersections of the open semiballs
\[
B_\zb(\zn,\ze)=\{\zl\in\zL\colon \zr_\zb(\zl-\zn)<\ze\}=\{\zl\in\zL\colon |\zl_\zb-\zn_\zb|<\ze\}=\{\zl\in\zL\colon \zl_\zb\in b(\zn_\zb,\ze)\}
\]
$\big(\zb\in\mathbb{N}^{\times |\ul m|}$, $\zn=\sum_\za \zn_\za\theta^\za\in\zL$, $\ze>0$ and $b(\zn_\zb,\ze)$ is the open ball in $\R$ with center $\zn_\zb$ and radius~$\ze\big)$. Since
\[
\zL^\times=\{\zl\in\zL\colon \zl_0\in\R\setminus\{0\}\}\qquad\text{and}\qquad \R\setminus\{0\}=\bigcup_{r\in\R\setminus\{0\}} b(r,\ze_r)\qquad (\text{for some}\ \ze_r>0),
\]
we get
\[
\zL^\times=\bigcup_{r\in\R\setminus\{0\}}\{\zl\in\zL\colon \zl_0\in b(r,\ze_r)\}=\bigcup_{r\in\R\setminus\{0\}}B_0(r,\ze_r),
\]
which implies that $\zL^\times$ is open and that $\iota_\zL$ is continuous, as announced.

Before we are able to deduce from this that $\op{inv}_\zL$ is continuous, we need an inversion formula for $X\in\op{GL}\big(p|\ul q\big)(\zL)$. Notice first that, in view of~\cite[Proposition 4.7]{Covolo:2012}, an invertible $2\times 2$ block matrix
\begin{equation*}
X =\begin{pmatrix}
 A & B\\
 C & D
\end{pmatrix}
\end{equation*}
with square diagonal blocks $A$ and $D$ and entries (of all blocks) in a ring, has a block UDL decomposition if and only if $D$ is invertible. In this case, the UDL decomposition is
\begin{gather*} 
\begin{pmatrix} A & B \\[2pt] C & D\end{pmatrix}
= \begin{pmatrix} \mathbb{I} & BD^{-1}\\[2pt] 0 & \mathbb{I}\end{pmatrix}
\begin{pmatrix} A-BD^{-1}C & 0 \\[2pt] 0 & D\end{pmatrix}
\begin{pmatrix} \mathbb{I} & 0 \\[2pt] D^{-1}C & \mathbb{I}\end{pmatrix}\!.
\end{gather*}
As upper and lower unitriangular matrices are obviously invertible, it follows that the diagonal matrix is invertible, hence that $A-BD^{-1}C$ is invertible. Similarly, the invertible matrix $X$ has a block LDU decomposition if and only if $A$ is invertible and in this case $D-CA^{-1}B$ is invertible. Moreover, in view of Proposition~\ref{invertibleblockdiag}, a matrix $X\in\op{gl}_0\big(p|\ul q,\zL\big)$ is invertible if and only if all its diagonal blocks $X_{ii}$ are invertible. Let now
\begin{equation*}
X =\begin{pmatrix}
 A & B\\
 C & D
\end{pmatrix}
\end{equation*}
be a $2\times 2$ block decomposition of $X\in\op{gl}_0\big(p|\ul q,\zL\big)$ that respects the $(N+1)\times (N+1)$ block decomposition
\begin{equation*}
X = \left(\begin{array}{c|c|c}
X_{00} & \dotsc & X_{0N} \\
\hline
\vdots & \ddots & \vdots \\
\hline
X_{N0} & \dotsc & X_{NN} \\
\end{array}\right)\!.
\end{equation*}
Since $A$ (resp., $D$) is invertible if and only if
\begin{equation*}
\tilde{A} =\begin{pmatrix}
 A & 0\\
 0 & \mathbb{I}
\end{pmatrix}\quad\quad\left(\text{resp.,}\;\tilde{D} =\begin{pmatrix}
 \mathbb{I} & 0\\
 0 & D
\end{pmatrix}\right)
\end{equation*}
is invertible, hence, if and only if the $X_{kk}$ on the diagonal of $A$ (resp.,~$D$) are invertible, we get that $X$ is invertible if and only if $A$ and $D$ are invertible. If we combine everything we have said so far in this paragraph, we find that if $X\in\op{GL}\big(p|\ul q\big)(\zL)$, then $A,D,A-BD^{-1}C,D-CA^{-1}B$ are all invertible. Therefore, we can use the formula
\begin{equation}\label{eqn:InverForm}
X^{-1} =\begin{pmatrix}
 \big(A - B D^{-1}C\big) ^{-1}& - A^{-1}B\big(D - C A^{-1}B\big)^{-1}\\
 - D^{-1}C\big(A- BD^{-1}C\big)^{-1} & \big(D - C A^{-1}B\big)^{-1}
\end{pmatrix}\! ,
\end{equation}
for any $X\in\op{GL}\big(p|\ul q\big)(\zL)$.

In order to simplify proper understanding, we consider for instance the case $n=2$,
\[
p|\ul q = p|q_1,q_2,q_3 = 1|2,1,1
\]
and
\begin{equation*}
X = \begin{pmatrix}
 A & B\\
 C & D
\end{pmatrix}=\left(\begin{array}{c|c|c|c}
a & b\;c & d & e\\
\hline
f & g\;h & i & j\\
k & l\;m& n& p\\
\hline
q & r\; s & t & u\\
\hline
v & w\;x& y& z\\
\end{array}\right)\in\op{GL}(1|2,1,1)(\zL),\qquad\text{where}\quad A=\left(\begin{array}{c|c}
a & b\;c\\
\hline
f & g\;h\\
k & l\;m\\
\end{array}\right)\!,
\end{equation*}
and so on. We~focus for instance on the first of the four block matrices in $X^{-1}$, i.e., on $\big(A-BD^{-1}C\big)^{-1}$. The matrix $D$ is a $2\times 2$ invertible matrix with square diagonal blocks and entries in~$\zL$. Since the four diagonal block matrices in $X$ are invertible, it follows from what we have said above that the inverse $D^{-1}$ is given by equation~\eqref{eqn:InverForm} with $A=t\in\zL$, $B=u\in\zL$, $C=y\in\zL$ and $D=z\in\zL$. Hence all entries of $D^{-1}$ are composites of the addition, the subtraction, the multiplication and the inverse in $\zL$, and so are all entries in the invertible $2\times 2$ block matrix
\begin{equation}\label{Comp0}
A-BD^{-1}C=\left(\begin{array}{c|c}
\za & \zb\;\zg\\
\hline
\zd & \ze\;\zeta\\
\zh & \zvy\;\zx\\
\end{array}\right)\;
\end{equation}
with square diagonal blocks (which are invertible) and with entries in $\zL$ (the square diagonal blocks have entries in $\zL_0$). Hence, the inverse $(A-BD^{-1}C)^{-1}$ can again be computed by~\eqref{eqn:InverForm}. We~focus on its entry
\begin{gather}\label{Comp1}
\zk:=\left(\za-(\zb\,\zg)\begin{pmatrix}
 \ze & \zeta \\
 \zvy & \zx
\end{pmatrix}^{-1}\begin{pmatrix}
 \zd\\
 \zh
\end{pmatrix}\right)^{-1}\in\zL.
\end{gather}
Notice that here we cannot conclude that $\ze$ and $\zx$ are invertible and apply~\eqref{eqn:InverForm} to compute the internal inverse. However, this inverse is the inverse of a square matrix with entries in the commutative ring $\zL_0$, for which the standard inversion formula holds (recall that a square matrix with entries in a commutative ring is invertible if and only if its determinant is invertible):
\begin{gather}\label{Comp2}\begin{pmatrix}
 \ze & \zeta \\
 \zvy & \zx
\end{pmatrix}^{-1}=(\ze\zx-\zeta\zvy)^{-1}\begin{pmatrix}
 \zx & -\zeta \\
 -\zvy & \ze
\end{pmatrix}\!.
\end{gather}
Since all the entries of~\eqref{Comp0} are composites of the addition, subtraction, multiplication and inverse in $\zL$, it follows from~\eqref{Comp1} and~\eqref{Comp2} that the same is true for the entry $\zk$ of $X^{-1}$. More precisely the entry $\zk$ corresponds to a map $\tilde{\zk}$ that is a composite of the inclusion of~$\op{GL}\big(p|\ul q\big)(\zL)$ into its topological supspace $\zL^{\times (t+|\ul u|)}$ (continuous), the projection of $\zL^{\times(t+|\ul u|)}$ onto $\zL^{\times v}$ ($v\le t+|\ul u|$) (continuous) and of products of the identity map $\id$ of $\zL$ (continuous), the diagonal map~$\zD$ of $\zL$ (continuous), the switching map $\zs$ of $\zL\times\zL$ (continuous), the addition $a$ of $\zL$ (continuous), the scalar multiplication $e$ of $\zL$ (continuous), its subtraction $s$ (continuous), multiplication $m$ (continuous) and its inverse $\iota$ (continuous). Indeed, it is for instance easily seen that the map
\[
\zL^{\times 4}\ni(t,u,y,z)\mapsto -z^{-1}y\big(t-uz^{-1}y\big)^{-1}\in\zL
\]
is a (continuous) composite of products of these continuous maps. We~thus understand that the entry $\zk$ of $X^{-1}$ corresponds to a continuous map $\tilde\zk\colon \op{GL}\big(p|\ul q\big)(\zL)\to \zL$. The same holds of course also for all the other entries of $X^{-1}$. Finally, the inverse map
\[
\op{inv}_\zL\colon\ \op{GL}\big(p|\ul q\big)(\zL)\ni X\mapsto X^{-1}\in\zL^{\times(t+|\ul u|)}
\]
is continuous and it remains continuous when view as valued in the subspace $\op{GL}\big(p|\ul q\big)(\zL)$.
\end{proof}

\begin{Example}
In view of equations~\eqref{GL3},~\eqref{GL2} and~\eqref{GL1}, the general linear $\Z^2_2$-group of order $1|1,1,1$ is
\[
\op{GL}(1|1,1,1) \simeq \left((\R^\times)^4, \cO_{\R^{4|4,4,4}}|_{(\R^\times)^4}\right),
\]
where $\R^\times=\R\setminus\{0\}$. \end{Example}
\subsection{Smooth linear actions}\label{SubSec:SmoLinAct} In this section we define linear actions of $\z$-Lie groups $G$ on finite dimensional $\z$-vector spaces $\mbf{V}\simeq V$ (we identify the isomorphic categories $\z{\tt FinVec}$ and $\z{\tt LinMan}$). The definition can be given in the category of $\z$-manifolds, but it is slightly more straightforward if we use the functor of points. Notice that the functors of points of $G\in\z{\tt Lg}\subset\z{\tt Man}$ and $V\in\z{\tt LinMan}\subset\z{\tt Man}$ are functors
\[
\cS(G)=G(-)\in{\tt Fun}_0\big(\z{\tt Pts}^{\op{op}},{\tt AFLg}\big)\subset{\tt Fun}_0\big(\z{\tt Pts}^{\op{op}},{\tt AFM}\big)
\]
and
\[
\cS(V)=V(-)\in{\tt Fun}_0\big(\z{\tt Pts}^{\op{op}},{\tt FAMod}\big)\subset{\tt Fun}_0\big(\z{\tt Pts}^{\op{op}},{\tt AFM}\big).
\]

\begin{Definition}\label{SmoLinActDef}
Let $G\in\z{\tt Lg}$ and $V\in\z{\tt LinMan}$. A \emph{smooth linear action} of $G$ on $V$ is a~natural transformation
\[
\sigma_{-} \colon\ (G\times V)(-)=G(-) \times V(-) \rightarrow V(-)
\]
in ${\tt Fun}_0\big(\z{\tt Pts}^{\op{op}},{\tt AFM}\big)$ (natural transformation with $\zL_0$-smooth $\zL$-components) that satisfies the following conditions:
\begin{enumerate}\itemsep=0pt
\item[$(i)$] Identity: for all $v_\zL\in V(\zL)$, we have
\[
\sigma_\Lambda(1_\Lambda, v_\Lambda) = v_\Lambda,
\]
where $1_\Lambda$ is the unit of $G(\zL)$. %
\item[$(ii)$] Compatibility: for all $g_\Lambda, g'_\Lambda \in G(\Lambda)$ and all $v_\zL\in V(\zL)$, we have
\[
\sigma_\Lambda\big(g_\Lambda, \sigma_\Lambda(g'_\Lambda, v_\Lambda)\big) = \sigma_\Lambda\big(\zm_\zL(g_\Lambda, g'_\Lambda),v_\Lambda\big),
\]
where $\zm_\zL$ is the multiplication of $G(\zL)$. %
\item[$(iii)$] $\Lambda_0$-linearity: for all $g_\zL\in G(\zL)$, all $v_\Lambda, v'_\Lambda \in V(\Lambda)$ and all $\mathrm{a} \in \Lambda_0$, we have
\begin{enumerate}\itemsep=0pt
\item[$(a)$] $\sigma_\Lambda(g_\Lambda, v_\Lambda + v'_\Lambda) = \sigma_\Lambda(g_\Lambda, v_\Lambda) + \sigma_\Lambda(g_\Lambda, v'_\Lambda)$,
\item[$(b)$] $\sigma_\Lambda (g_\Lambda, \mathrm a\cdot v_\Lambda) = \mathrm{a} \cdot \sigma_\Lambda(g_\Lambda , v_\Lambda)$,
 \end{enumerate}
where $\cdot$ is the action of $\zL_0$ on $V(\zL)$. \end{enumerate}
\end{Definition}

Since
\begin{gather}\label{FunPts1}
\cS\colon\ \z{\tt Man}\to{\tt Fun}_0\big(\z{\tt Pts}^{\op{op}},{\tt AFM}\big)
\end{gather}
is fully faithful (for more details, see~\cite{Bruce:2019b,Bruce:2018,Bruce:2019}), there is a $1:1$ correspondence between natural transformations $\zs_-$ as above and $\z$-morphisms
\[
\zs\colon\ G\times V\to V.
\]
This correspondence implies in particular that condition~($ii$) is equivalent to the equality
\begin{gather}\label{Act1}
\zs\circ(\id_G\times\zs)=\zs\circ(\zm\times\id_V)
\end{gather}
of $\z$-morphisms from $G\times G\times V\to V$ ($\zm\colon G\times G\to G$ is the multiplication of $G$). The same holds for condition~($i$) and the equality
\begin{gather}\label{Act2}
\zs\circ(e\times \id_V)=\id_V
\end{gather}
of $\z$-morphisms from $V\simeq 1\times V\to V$ ($e\colon 1\to G$ is the two-sided unit of $\zm$).

\subsubsection{Canonical action of the general linear group} We will now define the canonical action of the general linear $\z$-group $\op{GL}\big(p|\ul q\big)=\mathcal{U}^{t|\ul u}\in\z{\tt Lg}$ on the Cartesian $\z$-manifold $\R^{p|\ul q}\in\z{\tt LinMan}$. To do this, we use both, the fully faithful functor~\eqref{FunPts1} and the fully faithful functor
\begin{gather}\label{FunPts2}
\mathcal{Y}\colon\ \z{\tt Man}\ni M\mapsto\op{Hom}_{\z{\tt Man}}(-,M)\in{\tt Fun}\big(\z{\tt Man}^{\op{op}},{\tt Set}\big).
\end{gather}

We start defining a natural transformation $\zs_-$ of ${\tt Fun}\big(\z{\tt Man}^{\op{op}},{\tt Set}\big)$ from $\mathcal{U}^{t|\ul u}(-)\times\R^{p|\ul q}(-)$ to $\R^{p|\ul q}(-)$. We~will denote the coordinates of $\mathcal{U}^{t|\ul u}$ (resp., $\R^{p|\ul q}$) here by $\mathcal{X}^{\mathfrak{a}}_\mathfrak{b}$ (resp., $\mathcal{x}^{\mathfrak{c}}$), where $\mathfrak{a},\mathfrak{b}\in\{1,\dots,p+|\ul q|\}$ (resp., where $\mathfrak{c}\in\{1,\dots,p+|\ul q|\}$). For this, we must associate to any $S\in\z{\tt Man}$, a set-theoretical map $\zs_S$ that assigns to any
\[
(X,\zvf)\in\mathcal{U}^{t|\ul u}(S)\times\R^{p|\ul q}(S)=\op{Hom}_{\z{\tt Man}}\big(S,\mathcal{U}^{t|\ul u}\big)\times\op{Hom}_{\z{\tt Man}}\big(S,\R^{p|\ul q}\big),
\]
i.e., to any (appropriate) coordinate pullbacks
\begin{gather}\label{CoorPull}
\big(\mathcal{X}^\mathfrak{a}_{S,\mathfrak{b}},\mathcal{x}_S^{\mathfrak{c}}\big):= (X^*(\mathcal{X}^\mathfrak{a}_{\mathfrak{b}}),\zvf^*(\mathcal{x}^{\mathfrak{c}}))
\in\cO(S)^{\times (p+|\ul q|)^2}\times\cO(S)^{\times(p+|\ul q|)},
\end{gather}
a unique element $\zs_S(X,\zvf)\in\R^{p|\ul q}(S)$, i.e., unique (appropriate) coordinate pullbacks
\[
\zs_S\big(\mathcal{X}^\mathfrak{a}_{S,\mathfrak{b}},\mathcal{x}_S^{\mathfrak{c}}\big)\in\cO(S)^{\times(p+|\ul q|)}.
\]
Since $\big(\mathcal{x}_S^{\mathfrak{c}}\big)_{\mathfrak{c}}$ is viewed as a tuple (horizontal row), the natural definition of this image (horizontal row) is
\begin{gather}\label{NatTraCanAct}
\zs_S\big(\mathcal{X}^\mathfrak{a}_{S,\mathfrak{b}},\mathcal{x}_S^{\mathfrak{c}}\big) =\big(\mathcal{x}_S^{\mathfrak{b}}\;\mathcal{X}^{\mathfrak{a}}_{S,\mathfrak{b}}\big)_\mathfrak{a},
\end{gather}
where the sum and products are taken in the global $\z$-function algebra $\cO(S)$ of $S$. It~is clear that the elements of this target-tuple have the required degrees, as the same holds for the elements of~the source-tuple. The transformation $\zs_-$ we just defined is clearly natural. Indeed, for any $\z$-morphism $\psi\colon S'\to S$, the induced set-theoretical mapping between the Hom-sets with source~$S$ and the corresponding ones with source~$S'$ is $-\circ \psi$, so that the induced set-theoretical mapping between the tuples of global $\z$-functions of $S$ and $S'$ is $\psi^*$. The naturalness of $\zs_-$ follows now from the fact that $\psi^*$ is a $\z{\tt Alg}$-morphism.

Since~\eqref{FunPts2} is fully faithful, the natural transformation $\zs_-$ is implemented by a unique $\z$-morphism
\begin{gather}\label{zMorAct}
\zs\colon\ \op{GL}\big(p|\ul q\big)\times\R^{p|\ul q}\to\R^{p|\ul q},
\end{gather}
which in turn implements, via~\eqref{FunPts1}, a unique natural transformation in ${\tt Fun}_0\big(\z{\tt Pts}^{\op{op}},{\tt AFM}\big)$ between the same functors, but restricted to $\z{\tt Pts}^{\op{op}}$. Since this transformation is the restriction of $\zs _-$ to $\z{\tt Pts}^{\op{op}}$, we use this symbol for both transformations (provided that any confusion can be excluded). It~is easily seen that
\[
\zs_\zL\big(\mathcal{X}^{\mathfrak{a}}_{\zL,\mathfrak{b}},\mathcal{x}_\zL^{\mathfrak{c}}\big) =\big(\mathcal{x}_\zL^{\mathfrak{b}}\;\mathcal{X}^{\mathfrak{a}}_{\zL,\mathfrak{b}}\big)_\mathfrak{a},
\]
with sum and products in $\zL$, has the properties ($i$), ($ii$) and ($iii$) of Definition~\ref{SmoLinActDef}, so that we defined a smooth linear action of $\op{GL}\big(p|\ul q\big)$ on $\R^{p|\ul q}$.

The interesting aspect here is that we are able to compute the $\z$-morphism~\eqref{zMorAct}. Indeed, in view of the proof of the full faithfulness of the standard Yoneda embedding $c\mapsto \op{Hom}_{\tt C}(-,c)$ of an arbitrary locally small category $\tt C$ into the functor category ${\tt Fun}\big({\tt C}^{\op{op}},{\tt Set}\big)$, the morphism $\zs\in\op{Hom}_{\tt C}(c,c')$ that implements a natural transformation
\[
\zs_-\colon\ \op{Hom}_{\tt C}(-,c)\to \op{Hom}_{\tt C}(-,c')
\]
is
\[
\zs=\zs_c(\id_c)\in\op{Hom}_{\tt C}(c,c').
\]
In our case of interest ${\tt C}=\z{\tt Man}$, the previous Yoneda embedding is the functor~\eqref{FunPts2} and the morphism
\[
\zs\in\op{Hom}_{\z\tt Man}\big(\!\op{GL}\big(p|\ul q\big)\times\R^{p|\ul q},\R^{p|\ul q}\big)
\]
is
\[
\zs=\zs_c(\id_c),\qquad\text{with}\quad c=\op{GL}\big(p|\ul q\big)\times\R^{p|\ul q}.
\]
Since the pullback of the identity $\z$-morphism $\id_c$ is identity and the coordinate pullbacks~\eqref{CoorPull} are
\[
\big(\mathcal{X}^{\mathfrak{a}}_{\mathfrak{b}},\mathcal{x}^{\mathfrak{c}}\big)\in\cO(c)^{\times(p+|\ul q|)(p+|\ul q|+1)}.
\]
Equation~\eqref{NatTraCanAct} yields
\[
\zs=\zs_c(\id_c)\simeq\zs_c\big(\mathcal{X}^{\mathfrak{a}}_{\mathfrak{b}},\mathcal{x}^{\mathfrak{c}}\big) =\big(\mathcal{x}^{\mathfrak{b}}\,\mathcal{X}^{\mathfrak{a}}_{\mathfrak{b}}\big)_{\mathfrak{a}},
\]
with sum and products in $\cO(c)$. In other words:

\begin{Proposition}
The canonical action $\zs$ of the general linear $\z$-group $\op{GL}\big(p|\ul q\big)$ on the linear $\z$-manifold or $\z$-graded vector space $\R^{p|\ul q}$, is the $\z$-morphism that is defined by the coordinate pullbacks
\begin{gather}\label{zActGLR}
\zs^*(\mathcal{x}^{\mathfrak{a}})=\mathcal{x}^{\mathfrak{b}}\,\mathcal{X}^{\mathfrak{a}}_{\mathfrak{b}},
\end{gather}
where we denoted the coordinates of $\op{GL}\big(p|\ul q\big)$ $($resp., $\R^{p|\ul q})$ by $\mathcal{X}^{\mathfrak{a}}_\mathfrak{b}$ $($resp., $\mathcal{x}^{\mathfrak{c}})$.
\end{Proposition}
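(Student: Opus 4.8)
The plan is to obtain the asserted formula by tracing through the Yoneda-lemma identification already used to define $\zs$. First I would recall the construction preceding the proposition: the canonical action $\zs$ is, by definition, the unique $\z$-morphism~\eqref{zMorAct} that implements, under the fully faithful Yoneda embedding $\mathcal{Y}$ of~\eqref{FunPts2}, the natural transformation $\zs_-$ of ${\tt Fun}\big(\z{\tt Man}^{\op{op}},{\tt Set}\big)$ whose $S$-component is prescribed by the algebraic formula~\eqref{NatTraCanAct}. Thus the only task is to make this implementing morphism explicit in coordinates; in particular no new existence or well-definedness claim is involved, only the extraction of pullbacks.

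The key step is the standard recipe from the proof of the Yoneda lemma: a natural transformation $\zs_-\colon \op{Hom}_{\tt C}(-,c)\to\op{Hom}_{\tt C}(-,c')$ is implemented by $\zs=\zs_c(\id_c)\in\op{Hom}_{\tt C}(c,c')$. Taking ${\tt C}=\z{\tt Man}$, $c=\op{GL}\big(p|\ul q\big)\times\R^{p|\ul q}$ and $c'=\R^{p|\ul q}$, I would note that the pullback of the identity $\z$-morphism $\id_c$ is the identity of $\cO(c)$, so the coordinate pullbacks~\eqref{CoorPull} of $\id_c$ are simply the coordinate functions $\big(\mathcal{X}^{\mathfrak a}_{\mathfrak b},\mathcal{x}^{\mathfrak c}\big)$ of $c$ themselves. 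Feeding these into~\eqref{NatTraCanAct} gives $\zs\simeq\zs_c\big(\mathcal{X}^{\mathfrak a}_{\mathfrak b},\mathcal{x}^{\mathfrak c}\big)=\big(\mathcal{x}^{\mathfrak b}\,\mathcal{X}^{\mathfrak a}_{\mathfrak b}\big)_{\mathfrak a}$, with sum and products in $\cO(c)$; this says exactly that the coordinate pullbacks of $\zs$ are $\zs^*(\mathcal{x}^{\mathfrak a})=\mathcal{x}^{\mathfrak b}\,\mathcal{X}^{\mathfrak a}_{\mathfrak b}$. I would close by invoking the fundamental theorem of $\z$-morphisms, which guarantees that a $\z$-morphism into the chart domain $\R^{p|\ul q}$ is determined by this list of homogeneous pullbacks.

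The main obstacle here is not conceptual — everything rests on the already-established full faithfulness of $\mathcal{Y}$ and on the Yoneda computation — but purely a matter of bookkeeping: one has to be sure that~\eqref{NatTraCanAct} really yields an admissible tuple of coordinate pullbacks, i.e.\ that each entry $\mathcal{x}^{\mathfrak b}\,\mathcal{X}^{\mathfrak a}_{\mathfrak b}$ is homogeneous of degree $\gamma_{\mathfrak a}$ (it is, since $\mathcal{x}^{\mathfrak b}$ has degree $\gamma_{\mathfrak b}$, $\mathcal{X}^{\mathfrak a}_{\mathfrak b}$ has degree $\gamma_{\mathfrak a}+\gamma_{\mathfrak b}$, and $2\gamma_{\mathfrak b}=0$ in $\z$), and that the notation $\mathcal{X}^{\mathfrak a}_{\mathfrak b}$ consistently refers to the $(u^{\mathfrak a})$-coordinates of $\R^{t|\ul u}$ restricted to the open sub-$\z$-domain $\mathcal{U}^{t|\ul u}=\op{GL}\big(p|\ul q\big)$. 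Once these identifications are written out, the substitution is immediate and no further computation is needed.
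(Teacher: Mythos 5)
Your proposal is correct and follows essentially the same route as the paper: the paper likewise computes $\zs=\zs_c(\id_c)$ with $c=\op{GL}\big(p|\ul q\big)\times\R^{p|\ul q}$, observes that the coordinate pullbacks of $\id_c$ are the coordinates themselves, and substitutes into~\eqref{NatTraCanAct} to read off $\zs^*(\mathcal{x}^{\mathfrak a})=\mathcal{x}^{\mathfrak b}\,\mathcal{X}^{\mathfrak a}_{\mathfrak b}$. Your added degree check is a harmless elaboration of a point the paper already notes when defining~\eqref{NatTraCanAct}.
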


\begin{Example}
We know that the general linear $\Z_2^2$-group $\GL(1|1,1,1)$ can be identified with the open $\Z^2_2$-submanifold $\mathcal{U}^{4|4,4,4}$ of $\R^{4|4,4,4}$. We~denote the global coordinates of this Cartesian $\Z^2_2$-manifold by $\big(x^\alpha, \zx^\beta, \theta^\gamma,z^\delta\big)$. The indices run over $\{1,2,3,4\}$ and the $\Z_2^2$-degrees of these coordinates are $(0,0)$, $(0,1)$, $(1,0)$ and $(1,1)$, respectively. We~already mentioned that if we view $\mathcal{U}^{4|4,4,4}$ as $\op{GL}(1|1,1,1)$, we must rearrange the coordinates:
\[
\mathcal{X}=(\mathcal{X}^{\mathfrak{a}}_{\mathfrak{b}})_{\mathfrak{a},\mathfrak{b}}=\left(
\begin{matrix}
 x^1 & \zx^1 & \theta^1 & z^1 \\
 \zx^2 &x^2 & z^2 & \theta^2 \\
 \theta^3 & z^3 & x^3 & \zx^3\\
 z^4 & \theta^4& \zx^4 & x^4\\
\end{matrix}
\right)\!.
\]
In view of~\eqref{zActGLR}, the action $\zs$ of $\op{GL}(1|1,1,1)$ on $\R^{1|1,1,1}$ with global co\-or\-di\-nates
\[
\mathcal{x}=\big(\mathcal{x}^{\mathfrak{a}}\big)_{\mathfrak{a}}=\big(x^0, \zx^0, \theta^0 , z^0\big),
\]
is given as
\begin{align*}
& \sigma^*\big(x^0\big) = x^0x^1 + \zx^0 \zx^1 + \theta^0 \theta^1 + z^0 z^1,\\
& \sigma^*\big(\zx^0\big) = x^0 \zx^2 + \zx^0 x^2 + \theta^0 z^2 + z^0 \theta^2,\\
& \sigma^*\big(\theta^0\big) = x^0 \theta^3 + \zx^0 z^3 + \theta^0 x^3 + z^0 \zx^3,\\
& \sigma^*\big(z^0\big) = x^0 z^4 + \zx^0 \theta^4 + \theta^0 \zx^4 + z^0 x^4.
\end{align*}
\end{Example}

\subsubsection{Connection between the canonical action and the internal Hom} Since
\[
\op{GL}\big(p|\ul q\big)=\mathcal{U}^{t|\ul u}
\]
(see equation~\eqref{GL3}) is an open $\z$-submanifold (see equation~\eqref{GL2}) of
\[
\op{gl}_0\big(p|\ul q\big)=\R^{t|\ul u}=\ul{\op{Hom}}_{\z{\tt LinMan}}\big(\R^{p|\ul q},\R^{p|\ul q}\big)
\]
(see Proposition~\ref{gl0} and equation~\eqref{IntHomLinManCarMan}), we can expect a connection between the canonical action of $\op{GL}\big(p|\ul q\big)$ on $\R^{p|\ul q}$ and $\ul{\op{Hom}}_{\z{\tt LinMan}}\big(\R^{p|\ul q},\R^{p|\ul q}\big)$. It turns out that this link becomes apparent as soon as we understand the connection between the internal Hom of linear $\z$-manifolds and the internal Hom of arbitrary $\z$-manifolds. Indeed, for any $\zL\simeq\R^{0|\ul m}$, we have
\[
\ul{\op{Hom}}_{\z{\tt Man}}\big(\R^{p|\ul q},\R^{p|\ul q}\big)(\zL):=\op{Hom}_{\z{\tt Man}}\big(\R^{p|\ul q},\R^{p|\ul q}\big)
\]
(see~\cite{Bruce:2019b}). If we denote the coordinates of $\R^{0|\ul m}$ by $\zvy=(\zvy^{\mathfrak{d}})$ and those of $\R^{p|\ul q}$ by $\mathcal{x}=(\mathcal{x}^{\mathfrak{a}})=\big(x^a,\zx^A\big)$, the {\small RHS} Hom-set can be identified with the set of (degree respecting) coordinate pullbacks:
\[
\ul{\op{Hom}}_{\z{\tt Man}}\big(\R^{p|\ul q},\R^{p|\ul q}\big)(\zL)=\bigg\{\mathcal{x}^{\mathfrak{a}}=\mathcal{x}^{\mathfrak{a}}(x,\zx,\zvy) =\sum_{\za\zb}f^{\mathfrak{a}}_{\za\zb}(x)\zx^\za\zvy^\zb\bigg\}.
\]
On the other hand, when denoting the coordinates of $\R^{t|\ul u}$ as above by $\mathcal{X}=(\mathcal{X}^{\mathfrak{a}}_{\mathfrak{b}})$, we get similarly
\begin{align}
\ul{\op{Hom}}_{\z{\tt LinMan}}\big(\R^{p|\ul q},\R^{p|\ul q}\big)(\zL)&=\R^{t|\ul u}(\zL)=\op{Hom}_{\z{\tt Man}}\big(\zL,\R^{t|\ul u}\big)\nonumber
\\
&=\bigg\{\mathcal{X}^{\mathfrak{a}}_{\mathfrak{b}}=\mathcal{X}^{\mathfrak{a}}_{\mathfrak{b}}(\zvy)=\sum_\zd r^{\mathfrak{a}}_{\mathfrak{b},\zd}\,\zvy^\zd\bigg\}
=\op{gl}_0\big(p|\ul q,\zL\big).\label{IntHomZMan}
\end{align}
An obvious identification leads now to
\begin{gather}
\ul{\op{Hom}}_{\z{\tt LinMan}}\big(\R^{p|\ul q},\R^{p|\ul q}\big)(\zL)\nonumber
\\ \qquad
{}=\bigg\{\mathcal{x}^{\mathfrak{a}}=\mathcal{x}^{\mathfrak{a}}(x,\zx,\zvy)
=\sum_{\mathfrak{b}}\mathcal{x}^{\mathfrak{b}}\,\mathcal{X}^{\mathfrak{a}}_{\mathfrak{b}}(\zvy)=
\sum_{b}x^b\,\mathcal{X}^{\mathfrak{a}}_{b}(\zvy)+\sum_B\zx^B\,\mathcal{X}^{\mathfrak{a}}_{B}(\zvy)\bigg\}.
\label{IntHomLinZMan}
\end{gather}
When comparing~\eqref{IntHomLinZMan} and~\eqref{IntHomZMan}, we see that the internal Hom of linear $\z$-manifolds consists of the pullbacks of the internal Hom of arbitrary $\z$-manifolds which are defined by the canonical action of $\op{gl}_0\big(p|\ul q\big)(\zL)$ on $\R^{p|\ul q}$, in the sense of~\eqref{zActGLR}.

\subsubsection{Equivalent definitions of a smooth linear action}
Section~\ref{SubSec:SmoLinAct} already implicitly contained the idea that a smooth linear action of a $\z$-Lie group~$G$ on a linear $\z$-manifold $V$ in the sense of Definition~\ref{SmoLinActDef}, is equivalent to a $\z$-morphism $\zs\colon G\times V\to V$ that satisfies the conditions~\eqref{Act1} and~\eqref{Act2} and additionally has a certain linearity property with respect to $V$. A natural idea is that $\zs^*$ should send linear $\z$-functions of $V$ to $\z$-functions of $G\times V$ that are linear along the fibers. The meaning of this concept becomes clear when we think of the classical differential geometric case in which the functions of a trivial vector bundle $E=M\times\R^r$ are
\[
\Ci(E)=\zG(\vee E^*)=\Ci(M)\otimes\,\vee\,(\mbf{R}^r)^*
\]
($\vee$ is the symmetric tensor product), i.e., are the functions that are smooth in the base and polynomial along the fiber. Hence, linear functions of $E$ are the functions that are smooth in~the base and linear along the fiber, i.e.,
\[
\Ci_{\op{lin}}(E)=\Ci(M)\otimes(\mbf{R}^r)^*=\Ci(M)\otimes\Ci_{\op{lin}}(\R^r).
\]
We can choose the same definition in the case of the trivial $\z$-vector bundle $E=G\times V$:
\[
\cO^{\op{lin}}_{E}(|G|\times|V|):=\cO_G(|G|)\otimes\cO^{\op{lin}}_V(|V|).
\]
This definition is of course in particular valid for $G=\op{GL}\big(p|\ul q\big)\in\z{\tt Lg}$. However, let us mention that the linear functions (``linear along the fibers'') of the trivial $\z$-vector bundle $E=$ \mbox{$\op{GL}\big(p|\ul q\big)\times V$} that are defined on $|\op{GL}\big(p|\ul q\big)|\times|V|$ do not coincide with the linear functions (``globally linear'') of the linear $\z$-manifold $M=\R^{t|\ul u}\times V$ (see~\eqref{TrafroVec}) that are defined on the open subset $|\op{GL}\big(p|\ul q\big)|\times |V|$ of its base $\R^t\times|V|$:
\[
\cO^{\op{lin}}_E\big(\big|\op{GL}\big(p|\ul q\big)\big|\times|V|\big)\neq \cO_M^{\op{lin}}\big(\big|\op{GL}\big(p|\ul q\big)\big|\times |V|\big).
\]
Given what we have just said, we expect the following proposition to hold:

\begin{Proposition}\label{EquivSmoLinActFunPtsMan} A smooth linear action $\zs_-$ of the $\z$-Lie group $G=\op{GL}\big(p|\ul q\big)$ on a linear $\z$-manifold $V$ in the sense of Definition~$\ref{SmoLinActDef}$, is equivalent to a $\z$-morphism $\zs\colon G\times V\to V$ that satisfies the conditions~\eqref{Act1} and~\eqref{Act2} and has the linearity property
\begin{gather}\label{LinCondActZMan}
\zs^*(\cO_V^{\op{lin}}(|V|))\subset \cO_{G}(|G|)\otimes\cO_V^{\op{lin}}(|V|).
\end{gather}
\end{Proposition}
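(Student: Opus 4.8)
The plan is to establish the claimed equivalence by showing that the functorial linearity conditions $(iii)(a)$–$(iii)(b)$ in Definition~\ref{SmoLinActDef}, i.e.\ the $\zL_0$-linearity of the components $\zs_\zL$ in the $V$-slot, are precisely the $\Lambda$-point translation of the sheaf-theoretic condition~\eqref{LinCondActZMan}. Since $\cS$ (see~\eqref{FunPts1}) is fully faithful, a natural transformation $\zs_-$ of ${\tt Fun}_0\big(\z{\tt Pts}^{\op{op}},{\tt AFM}\big)$ corresponds bijectively to a $\z$-morphism $\zs\colon G\times V\to V$, and conditions~($i$) and~($ii$) correspond to~\eqref{Act2} and~\eqref{Act1} respectively (this was already recorded after Definition~\ref{SmoLinActDef}). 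So the entire content of the proposition is the equivalence
\[
\big(\text{$\zs_\zL$ is $\zL_0$-linear in the $V$-slot for all }\zL\big)\iff \zs^*\big(\cO_V^{\op{lin}}(|V|)\big)\subset \cO_G(|G|)\0\cO_V^{\op{lin}}(|V|).
\]

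First I would fix linear coordinate maps, reducing to the Cartesian case $V=\R^{p|\ul q}$ and $G=\op{GL}\big(p|\ul q\big)=\cU^{t|\ul u}$, using Proposition~\ref{prop:IndCoordMap} and the naturality arguments already used in the proof of Theorem~\ref{FunPtsLinManFF}: the property~\eqref{LinCondActZMan} is invariant under pre- and post-composition with linear coordinate maps (its RHS $\cO_G(|G|)\0\cO_V^{\op{lin}}(|V|)$ transforms correctly because $\rmh^*$ restricts to an isomorphism on linear functions, see Remark~\ref{RemLinFct}$(i)$), and likewise $\zL_0$-linearity of $\zs_\zL$ is preserved because the $\zL$-components of linear coordinate maps are $\zL_0$-module isomorphisms by Proposition~\ref{FL0Mod}. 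For the direction~($\Leftarrow$): write $\zs^*(\mathcal x^{\mathfrak a})=\sum_{\mathfrak b}\zs^*_{\mathfrak b}(\mathcal x^{\mathfrak a})\,\mathcal x^{\mathfrak b}$ with coefficients $\zs^*_{\mathfrak b}(\mathcal x^{\mathfrak a})\in\cO_G(|G|)$ (this is exactly what~\eqref{LinCondActZMan} says, using $\cO_V^{\op{lin}}(|V|)=(\mbf R^{p|\ul q})^\vee$, see~\eqref{LinFun}), and then evaluate on a $\zL$-point $(X,v_\zL)$: the fundamental theorem of $\z$-morphisms gives $\zs_\zL(X,v_\zL)^*(\mathcal x^{\mathfrak a})=\sum_{\mathfrak b}X^*(\zs^*_{\mathfrak b}(\mathcal x^{\mathfrak a}))\,v_\zL^{\mathfrak b}$, which is visibly additive and $\zL_0$-homogeneous in the tuple $(v_\zL^{\mathfrak b})_{\mathfrak b}$, i.e.\ $\zL_0$-linear in the $V$-slot via the identification~\eqref{FunPtsYon1}. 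For the direction~($\Rightarrow$): assume $\zs_\zL$ is $\zL_0$-linear in $v_\zL$ for all $\zL$; I would run an Euler/homogeneity argument on the coordinate pullback coefficients essentially identical to the one in the proof of Theorem~\ref{FunPtsLinManFF} (equations~\eqref{eqn:smoothzero2}–\eqref{LinForm4}), applied now with the $G$-coordinates treated as inert parameters: expand $\zs^*(\mathcal x^{\mathfrak a})\in\cO_{G\times V}(|G|\times|V|)=\cO_G(|G|)[[\ldots]]$ as a series in the $V$-coordinates $\big(x^a,\zx^A\big)$ with coefficients in $\cO_G(|G|)$, apply the scaling endomorphism $x^a\mapsto r x^a$, $\zx^A\mapsto r\zx^A$ of $\zL_1$ (fibered over $V$), and conclude that all coefficients of monomials of $V$-degree $\ge 2$ vanish and that the degree-$1$ and degree-$0$ coefficients assemble into a global linear function of $V$; hence $\zs^*(\mathcal x^{\mathfrak a})\in\cO_G(|G|)\0\cO_V^{\op{lin}}(|V|)$.

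The main obstacle I anticipate is the bookkeeping in the ($\Rightarrow$) direction: one must carefully separate the roles of the $G$-variables and the $V$-variables in the formal power series expansion of $\zs^*$, make sure the Euler homogeneity theorem is applied only in the $V$-directions over a fixed (arbitrary) value of the $G$-coordinates, and check that the coefficient functions of $G$ one extracts genuinely lie in $\cO_G(|G|)$ rather than in some larger completion — this uses that $\cO_{G\times V}(|G|\times|V|)=\Ci_{|G|\times|V_0|}\widehat\0_\R \bar S(\ldots)$ admits the needed ``Taylor expansion along the $V$-fiber'' with $\cO_G(|G|)$-coefficients, which follows from the product structure~\eqref{FunShV} of linear $\z$-manifolds and the completed-tensor-product description of function sheaves of products recalled from~\cite{Bruce:2019}. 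Once this is set up the homogeneity computation is a verbatim repeat of Theorem~\ref{FunPtsLinManFF}'s, so I would state it briefly and refer back rather than reproduce it. Finally I would note that although the proposition is phrased for $G=\op{GL}\big(p|\ul q\big)$, nothing in the argument uses that $G$ is this particular group — it works for any $\z$-Lie group $G$ — and that the equivalence of the two definitions, combined with Theorem~\ref{NatIsoSMFGene}, legitimizes transporting the action along $\cM,\cV$ between the vector-space and the manifold pictures.
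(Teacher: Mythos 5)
Your proposal is essentially correct and, for the parts the paper actually proves, follows the same route: the reduction via the full faithfulness of $\cS$ and the identification of conditions ($i$), ($ii$) with~\eqref{Act2},~\eqref{Act1}; the transfer to the Cartesian case $V\simeq\R^{r|\ul s}$ using a linear coordinate map $\rmh$ (whose $\zL$-components are Fr\'echet $\zL_0$-module isomorphisms by Proposition~\ref{FL0Mod}); and the implication ``\eqref{LinCondActZMan} $\Rightarrow$ ($iii$)'', which you and the paper both obtain by expanding $\zs^*(\mathcal{x}^{\mathfrak a})$ as an element of $\cO_G(|G|)\0\cO_V^{\op{lin}}(|V|)$ and evaluating on $\zL$-points of the product. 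Where you genuinely diverge is the implication ``($iii$) $\Rightarrow$~\eqref{LinCondActZMan}'': the paper explicitly \emph{refrains} from writing this direction down, calling it technical and referring to the super-case literature, whereas you sketch an actual proof by rerunning the Euler/homogeneity argument of Theorem~\ref{FunPtsLinManFF} fiberwise over $G$. That sketch is viable, and you correctly identify the real work, namely justifying the Taylor expansion of $\zs^*(\mathcal{x}^{\mathfrak a})$ along the $V$-directions with coefficients in $\cO_G(|G|)$ inside $\cO_G(|G|)\widehat{\0}\,\cO_V(|V|)$; note only that the scaling is applied to the $\zL$-point $v_\zL\mapsto r\,v_\zL$ ($r\in\R_{>0}\subset\zL_0$) in the $V$-slot, not to an endomorphism of $\zL_1$ as in the proof of Theorem~\ref{trm:FFullyFaith}. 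Two small caveats: your closing remark that nothing uses $G=\op{GL}\big(p|\ul q\big)$ is slightly too quick, since both the statement of~\eqref{LinCondActZMan} and the expansion argument use that $|G|$ carries global coordinates so that $\cO_{G\times V}(|G|\times|V|)=\cO_G(|G|)\widehat{\0}\,\cO_V(|V|)$ globally (true for $\cU^{t|\ul u}$, not for an arbitrary $\z$-Lie group without further localization); and in the Euler step the ``constants'' $\mbf L^b_a$ of Theorem~\ref{FunPtsLinManFF} become functions of the $G$-variables, which is exactly what lands you in $\cO_G(|G|)\0\cO_V^{\op{lin}}(|V|)$ rather than in $\cO_V^{\op{lin}}(|V|)$ alone.
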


Notice first that the pullback $\zs^*$ is a morphism of $\z$-algebras
\[
\zs^*\colon\ \cO_V(|V|)\to\cO_{G\times V}(|G|\times|V|).
\]
Since $G$ and $V$ have global coordinates, it follows from~\cite{Bruce:2019} that the target of $\zs^*$ is given by
\[
\cO_{G\times V}(|G|\times|V|)=\cO_G(|G|)\widehat{\otimes}\,\cO_V(|V|),
\]
which shows that it contains
\[
\cO_G(|G|)\0\,\cO_V^{\op{lin}}(|V|)
\]
and that the requirement~\eqref{LinCondActZMan} actually makes sense.

Another fact is also worth noting. We~know from standard supergeometry that the classical Berezinian defines a super-Lie group morphism
\[
\op{Ber}\colon\ \op{GL}(p|q)\to\op{GL}(1|0),
\]
so that we get a linear action of $\op{GL}(p|q)$ on $\R^{1|0}$. The point here is that linear actions of $\op{GL}(p|q)$ are not limited to actions on $\R^{p|q}$.
\begin{proof}
In the light of the observations that follow Definition~\ref{SmoLinActDef}, it suffices to prove that the $\zL_0$-linearity requirement ($iii$) in~Definition~\ref{SmoLinActDef} is equivalent to the linearity condition~\eqref{LinCondActZMan} in~Proposition~\ref{EquivSmoLinActFunPtsMan}. Hence, let $\zs_-$ be a smooth action of $G$ on $V$ and let $\zs$ be the corresponding $\z$-morphism. If~$\rmh\colon V\to\R^{r|\ul s}$ is a linear coordinate map of $V$, the $\z$-morphism
\[
\mathfrak{S}:=\rmh\circ\zs\circ\big(\id_G\times\rmh^{-1}\big)\colon\ G\times\R^{r|\ul s}\to \R^{r|\ul s}
\]
satisfies~\eqref{LinCondActZMan} if and only if the $\z$-morphism $\zs$ does. Indeed, if $\zs$ has the property~\eqref{LinCondActZMan}, then
\[
\mathfrak{S}^*=\big(\id_G\times\rmh^{-1}\big)^*\circ\zs^*\circ\rmh^* =\big(\id_G^*\widehat{\0}\,\big(\rmh^{-1}\big)^*\big)\circ\zs^*\circ\rmh^*
\]
has obviously the same property. We~similarly find that the converse implication holds. On the other hand, if we denote the coordinates of $\R^{r|\ul s}$ by $\mathcal{y}=(\mathcal{y}^{\mathfrak{c}})=\big(y^c,\eta^C\big)$, the $\zL$-components of the natural transformations $\zs_-$ and $\mathfrak{S}_-$ satisfy
\[
\mathfrak{S}_\zL=\rmh_\zL\circ\zs_\zL\circ\big(\id_{G(\zL)}\times\rmh_\zL^{-1}\big)
\]
and
\[
\mathfrak{S}_\zL\bigg(g_\zL,\sum_k\zl^k\mathcal{y}_{\zL,k}\bigg) =\rmh_\zL\bigg(\zs_\zL\bigg(g_\zL,\rmh_\zL^{-1}\bigg(\sum_k\zl^k\mathcal{y}_{\zL,k}\bigg)\bigg)\bigg),
\]
for any $g_\zL\in G(\zL)$, any $\mathcal{y}_{\zL,k}\in\R^{r|\ul s}(\zL)$ and any $\zl^k\in\zL_0$ (where $k$ runs through a finite set). Since
\[
\rmh_\zL\colon\ V(\zL)\to\R^{r|\ul s}(\zL)
\]
is an isomorphism of Fr\'echet $\zL_0$-modules, the $\zL_0$-smooth map $\mathfrak{S}_\zL$ is $\zL_0$-linear in $\mathcal{y}_\zL$ if and only if the $\zL_0$-smooth map $\zs_\zL$ is $\zL_0$-linear in $v_\zL\in V(\zL)$. It~is therefore sufficient to prove the equivalence ``($iii$) if and only if~\eqref{LinCondActZMan}'' for $V=\R^{r|\ul s}$.

We refrain from writing down the proof of the implication ``if ($iii$) then~\eqref{LinCondActZMan}''. It~is technical and partially reminiscent of a part of the proof of Theorem~\ref{FunPtsLinManFF} (for the super-case, see~\cite{Carmeli:2018} and the references it contains).

We now prove the converse implication from scratch. Assume that
\begin{gather}\label{LinAssumption}
\mathfrak{S}^*\big(\cO_{\R^{r|\ul s}}^{\op{lin}}(\R^r)\big)\subset \cO_{G}(|G|)\otimes\cO_{\R^{r|\ul s}}^{\op{lin}}(\R^r).
\end{gather}
In view of the universal property of the product of $\z$-manifolds, we have
\[
G(\zL)\times \R^{r|\ul s}(\zL)\ni(g_\zL,\mathcal{y}_\zL)\simeq u_\zL\in\big(G\times \R^{r|\ul s}\big)(\zL).
\]
If we identify the $\z$-morphisms $g_\zL$, $\mathcal{y}_\zL$, $u_\zL$ with the corresponding continuous $\z$-algebra morphisms
\begin{align*}
g_\zL^*&\in\op{Hom}_{\z\tt Alg}\big(\cO_G(|G|),\zL\big),\mathcal{y}^*_\zL
\in\op{Hom}_{\z\tt Alg}\big(\cO_{\R^{r|\ul s}}(\R^r),\zL\big),u_\zL^*
\\
&\in\op{Hom}_{\z\tt Alg}\big(\cO_G(|G|)\widehat{\0}\,\cO_{\R^{r|\ul s}}(\R^r),\zL\big),
\end{align*}
we get
\[
u_\zL^*=\widehat{m}_\zL\circ\big(g_\zL^*\widehat\0\,\mathcal{y}_\zL^*\big),
\]
where $\widehat{m}_\zL\colon \zL\widehat{\0}\,\zL\to\zL$ is continuous $\z$-algebra morphism that extends the multiplication \mbox{$m_\zL=\cdot_\zL$} of $\zL$ (see~\cite{Bruce:2019}). We~denote the coordinates of $G\times \R^{r|\ul s}$ as $\mathcal{z}=(\mathcal{z}^{\mathfrak{d}})=(\mathcal{X}^{\mathfrak{a}}_{\mathfrak{b}},\mathcal{y}^{\mathfrak{c}})$. Since
\begin{align*}
\mathfrak{S}_\zL\colon\ \big(G\times\R^{r|\ul s}\big)(\zL)\ni u_\zL&\simeq (g_\zL,\mathcal{y}_\zL)\simeq u^*_\zL\simeq u^*_\zL(\mathcal{z}^{\mathfrak{d}})\mapsto \mathfrak{S}\circ u_\zL\simeq u_\zL^*\circ\mathfrak{S}^*
\\
&\simeq u_\zL^*(\mathfrak{S}^*(\mathcal{y}^{\mathfrak{c}}))\in\R^{r|\ul s}(\zL),
\end{align*}
we find that
\[
\mathfrak{S}_\zL\bigg(g_\zL,\sum_k\zl^k\mathcal{y}_{\zL,k}\bigg)\simeq\widehat{m}_\zL \bigg(\bigg(g_\zL^*\widehat{\0}\bigg(\sum_k\zl^k\mathcal{y}_{\zL,k}\bigg)^*\bigg)\big(\mathfrak{S}^*
(\mathcal{y}^{\mathfrak{c}})\big)\bigg).
\]
In view of the definition of the $\zL_0$-module structure on $\R^{r|\ul s}(\zL)$, we have
\[
\bigg(\sum_k\zl^k\mathcal{y}_{\zL,k}\bigg)^*=\sum_k\zl^k\mathcal{y}_{\zL,k}^*
\]
and in view of the assumption~\eqref{LinAssumption}, we get for any {\it fixed} $\mathfrak{c}$ that
\[
\mathfrak{S}^*(\mathcal{y}^{\mathfrak{c}}) =\sum_{n=1}^{M}s_G^n\0_\R\bigg(\sum_{\mathfrak{c'}}r^n_{\mathfrak{c'}}\mathcal{y}^{\mathfrak{c'}}\bigg),
\]
where $M\in\mathbb{N}$, where $s_G^n\in\cO_G(|G|)$ and where $r^n_{\mathfrak{c'}}\in\R$. Since $\zl^k\in\zL_0$, what we just said yields
\begin{align*}
\mathfrak{S}_\zL\bigg(\!g_\zL,\sum_k\zl^k\mathcal{y}_{\zL,k}\bigg)& \simeq\sum_{n=1}^{M}g^*_\zL(s_G^n) \cdot_\zL\sum_k\zl^k\cdot_\zL\!\bigg(\!\sum_{\mathfrak{c'}}r^n_{\mathfrak{c'}}\,
\mathcal{y}_{\zL,k}^*(\mathcal{y}^{\mathfrak{c'}})\bigg)\\
& \simeq\sum_k\zl^k\, \mathfrak{S}_\zL(g_\zL,\mathcal{y}_{\zL,k}). \tag*{\qed}
\end{align*}
\renewcommand{\qed}{}
\end{proof}

\section{Future directions}
We view the current paper as the first steps towards understanding actions of $\Z_2^n$-Lie groups on~$\Z_2^n$-manifolds and we claim that it will be vital in carefully constructing the total spaces of~$\Z_2^n$-vector bundles, for example. In both these settings, the functor of points, and in particular $\Lambda$-points, are expected to be of fundamental importance. In particular, the typical fibres of~$\Z_2^n$-vector bundles cannot be $\Z_2^n$-graded vector spaces, but rather they are linear $\Z_2^n$-manifolds. Moreover, the transition functions will correspond to an action of the general linear $\Z_2^n$-group and as such a careful understanding of linear actions is needed. This paper provides some of this technical background. We~plan to explore the algebraic and geometric definitions of vector bundles in the category of $\Z_2^n$-manifolds in a future publication.

\appendix
\section{The category of modules over a variable algebra}\label{app:CatModAlg}
We define the category $\tt AMod$ (resp., $\tt FAMod$) of modules (resp., Fr\'echet modules) over any (unital) algebra (resp., any (unital) Fr\'echet algebra) $A$. The algebra $A$ can vary from object to object. The objects are the modules over some $A$ (resp., the Fr\'echet vector spaces that come equipped with a (compatible) continuous $A$-action). We~denote such modules by $M_{A}$. Morphisms consist of pairs $(\varphi , \Phi)$, where
\[
\varphi \colon\ A \longrightarrow B
\]
is an algebra morphism (resp., a continuous algebra morphism), and
\[
\Phi \colon\ M_A \longrightarrow M_B
\]
is a map (resp., a continuous map) that satisfies
\[
\Phi(a m + a'm') = \varphi(a) \Phi(m) + \varphi(a')\Phi(m'),
\]
for all $a,a' \in A$ and $m,m' \in M_A$. It~is evident that we do indeed obtain a category in this way.

The preceding categories $\tt AMod$ and $\tt FAMod$ are similar to the category $\tt AFMan$ that we used in~\cite{Bruce:2019b}. They naturally appear when considering the zero degree rules functor or the functor of~points. See for instance equations~\eqref{CatAMod} and~\eqref{QL0L}.
\section[Basics of Z2n-geometry]
{Basics of $\boldsymbol{\Z_2^n}$-geometry}\label{app:ZZnGeom}

\subsection[Z2n-manifolds and their morphisms]
{$\boldsymbol{\Z_2^n}$-manifolds and their morphisms}

The locally ringed space approach to $\Z_{2}^{n}$-manifolds was pioneered in~\cite{Covolo:2016}. We~work over the field~$\R$ of real numbers and set $\Z_2^n := \Z_2 \times \Z_2 \times \dots\times\Z_2$ ($n$-times). A \emph{$\Z_2^n$-graded algebra} is an $\R$-algebra $\mathcal{A}$ with a decomposition into vector spaces $\mathcal{A} := \oplus_{\gamma \in \Z_2^n} \mathcal{A}_\gamma$, such that the multiplication, say $\cdot$, respects the $\Z_2^n$-grading, i.e., $\mathcal{A}_\alpha \cdot \mathcal{A}_\beta \subset \mathcal{A}_{\alpha + \beta}$. We~will always assume the algebras to be associative and unital. If for any pair of homogeneous elements $a \in \mathcal{A}_\alpha$ and $b \in \mathcal{A}_\beta$ we have that
\begin{gather}\label{eq:Z2ncommrules}
a \cdot b = (-1)^{\langle \alpha , \beta\rangle } b \cdot a,
\end{gather}
where $\langle -, -\rangle$ is the standard scalar product on $\Z_2^n$, then $\mathcal{A}$ is a \emph{$\Z_2^n$-commutative algebra}.

Essentially, $\Z_2^n$-manifolds are ``manifolds'' equipped with both, standard commuting coordinates and formal coordinates of non-zero $\Z_2^n$-degree that $\Z_2^n$-commute according to the general sign rule~\eqref{eq:Z2ncommrules}. Note that in general we need to deal with formal coordinates that are {\it not} nilpotent.

In order to keep track of the various formal coordinates, we need to introduce a convention on how we fix the order of elements in $\Z_{2}^{n}$ and we choose the \emph{lexicographical order}. For example, with this choice of ordering
\[
\Z_{2}^{2} = \{ (0,0), \: (0,1), \: (1,0), \: (1,1)\}.
\]
Note that other choices of ordering have appeared in the literature. A tuple $\ul{q} = (q_{1}, q_{2}, \dots , q_{N})\in\mathbb{N}^{\times N}$ ($N = 2^{n}-1)$ provides the number of formal coordinates in each $\z$-degree. We~can now recall the definition of a $\Z_2^n$-manifold.
\begin{Definition}\label{def:Z2nMan}
A (smooth) $\Z_{2}^{n}$-\emph{manifold} of dimension $p |\ul{q}$ is a locally $\Z_{2}^{n}$-ringed space $ M := \big(|M|, \cO_M \big)$, which is locally isomorphic to the $\Z_{2}^{n}$-ringed space $\mathbb{R}^{p |\ul{q}} := \big( \mathbb{R}^{p}, C^{\infty}_{\mathbb{R}^{p}}[[\zx]] \big)$. Local sections of $M$ are formal power series in the $\Z_{2}^{n}$-graded variables $\zx$ with smooth coefficients,
\[
\cO_M(|U|) \simeq C^{\infty}(|U|)[[\zx]] := \bigg \{ \sum_{\alpha \in \mathbb{N}^{\times N}}^{\infty} f_{\alpha}\,\zx^{\alpha}\colon \, f_{\alpha} \in C^{\infty}(|U|)\bigg \},
\]
for ``small enough'' opens $|U|\subset |M|$. \emph{Morphisms} between $\Z_{2}^{n}$-manifolds are morphisms of $\Z_{2}^{n}$-ringed spaces, that is, pairs $\Phi = (\phi, \phi^{*}) \colon \big(|M|, \cO_M\big) \rightarrow (|N|, \cO_N)$ consisting of a continuous map $\phi\colon |M| \rightarrow |N|$ and a sheaf morphism $\phi^{*} \colon \cO_N(-) \rightarrow \cO_M\big(\phi^{-1}(-)\big)$, i.e., a family of $\Z_2^n$-algebra morphisms $\phi^*_{|V|}\colon \cO_N(|V|) \rightarrow \cO_M\big(\phi^{-1}(|V|)\big)$ ($|V| \subset |N|$ open) that commute with restrictions. We~sometimes denote $\z$-manifolds by $\mathcal{M}=(M,\cO_M)$ instead of $M = \big(|M|, \cO_M \big)$ and we sometimes denote $\z$-morphisms by $\phi=(|\phi|,\phi^*)$ instead of $\Phi = (\phi, \phi^{*})$. \end{Definition}

\begin{Example}[the local model]
The locally $\Z_{2}^{n}$-ringed space $\mathcal{U}^{p|\ul{q}} := \big(\mathcal{U}^p , C^\infty_{\mathcal{U}^p}[[\zx]] \big)$ ($\mathcal{U}^p \subset \R^p$ open) is naturally a $\Z_2^n$-manifold~-- we refer to such $\Z_2^n$-manifolds as \emph{$\Z_2^n$-domains} of dimension~$p|\ul{q}$. We~can employ (natural) coordinates $\big(x^a, \zx^A\big)$ on any $\Z_2^n$-domain, where the $x^a$ form a~coordinate system on $\mathcal{U}^p$ and the $\zx^A$ are formal coordinates.
\end{Example}

Many of the standard results from the theory of supermanifolds pass over to $\Z_2^n$-manifolds. For example, the topological space $|M|$ comes with the structure of a smooth manifold of dimension~$p$, hence our suggestive notation. Moreover, there exists a canonical projection $\varepsilon \colon \cO_M \rightarrow C_{|M|}^{\infty}$. What makes the category of $\Z_{2}^{n}$-manifolds a very tractable form of noncommutative geometry is the fact that we have local models. Much like in the theory of smooth manifolds, one can construct global geometric concepts via the gluing of local geometric concepts. That is, we can consider a $\Z_{2}^{n}$-manifold as being covered by $\Z_2^n$-domains together with specified gluing data. More precisely, a $p|\ul q$-\emph{chart} (or $p|\ul q$-coordinate-system) over a (second-countable Hausdorff) smooth manifold $|M|$ is a $\Z_2^n$-domain
\[
\mathcal{U}^{p|\ul q}=\big(\cU^p,C^\infty_{\cU^p}[[\zx]]\big),
\]
together with a diffeomorphism $|\psi|\colon |U|\to \cU^p$, where $|U|$ is an open subset of $|M|$. Given two $p|\ul q$-charts
 \begin{gather}\label{Charts}
\big(\mathcal{U}^{p|\ul q}_\za,|\psi_\za|\big)\qquad\text{and}\qquad
\big(\mathcal{U}^{p|\ul q}_\zb,|\psi_\zb|\big)
 \end{gather}
 over $|M|$, we set $V_{\alpha \beta} : = |\psi_\alpha|(|U_{\alpha \beta}|)$ and $V_{ \beta \alpha} : = |\psi_\beta|(|U_{\alpha \beta}|)$, where $|U_{\alpha \beta}| := |U_{\alpha}|\cap |U_{\beta}| $. We~then denote by $|\psi_{\zb\za}|$ the diffeomorphism
 \begin{gather}\label{Diffeo}
 |\psi_{\zb\za}|:=|\psi_\zb|\circ|\psi_\za|^{-1}\colon\ V_{\za\zb} \to V_{\zb\za}.
 \end{gather}
Whereas in classical differential geometry the coordinate transformations are completely defined by the coordinate systems, in $\Z_2^n$-geometry, they have to be specified separately. A \emph{coordinate transformation} between two charts, say the ones of~\eqref{Charts}, is an isomorphism of $\Z_2^n$-manifolds
\begin{gather}\label{CoordTrans}
\psi_{\zb\za}=\big(|\psi_{\zb\za}|,\psi^*_{\zb\za}\big)\colon\ \mathcal{ U}^{p|\ul q}_{\za}|_{V_{\za\zb}}\to \mathcal{ U}^{p|\ul q}_{\zb}|_{V_{\zb\za}},
\end{gather}
where the source and target are the open $\Z_2^n$-submanifolds
\[
\cU^{p|\ul q}_\za|_{V_{\za\zb}}=\big(V_{\za\zb}, C^\infty_{V_{\za\zb}}[[\zx]]\big)
\]
(note that the underlying diffeomorphism is~\eqref{Diffeo}).\;A $p|\ul q$-\emph{atlas} over $|M|$ is a covering $\big(\mathcal{ U}^{p|\ul q}_\za\!,\!|\psi_\za|\big)_\za$ by charts together with a coordinate transformation~\eqref{CoordTrans} for each pair of charts, such that the usual {cocycle} condition $\psi_{\zb\zg}\psi_{\zg\za}=\psi_{\zb\za}$ holds (appropriate restrictions are understood).

Moreover, we have the \emph{chart theorem}~\cite[Theorem 7.10]{Covolo:2016} that says that $\Z_2^n$-morphisms from a~$\Z_2^n$-manifold $(|M|,\cO_M)$ to a $\Z_2^n$-domain $\big(\mathcal{U}^p,C^\infty_{\mathcal{U}^p}[[\zx]]\big)$ are completely described by the pullbacks of the coordinates $\big(x^a,\zx^A\big)$. In other words, to define a $\Z_2^n$-morphism valued in a~$\Z_2^n$-domain, we only need to provide total sections $\big(s^a,s^A\big)\in\cO_M(|M|)$ of the source structure sheaf, whose degrees coincide with those of the target coordinates $\big(x^a,\zx^A\big)$. Let us stress the condition
\[
(\dots,\ze s^a,\dots)(|M|)\subset \mathcal{U}^p,
\]
where $\ze$ is the canonical projection, is often understood in the literature.

\subsection[Z2n-Grassmann algebras, Z2n-points and the Schwarz--Voronov embedding]
{$\boldsymbol{\Z_2^n}$-Grassmann algebras, $\boldsymbol{\Z_2^n}$-points and the Schwarz--Voronov embedding}

It is clear that $\Z_2^n$-manifolds, as they are locally ringed spaces, are not fully determined by their topological points. To ``claw back'' a fully useful notion of a point, one can employ Grothendieck's functor of points. This is, of course, an application of the Yoneda embedding (see~\cite[Chapter~III, Section~2]{MacLane1998}). For the case of supermanifolds, it is well-known, via the seminal works of Schwarz and~Voronov~\cite{Schwarz:1982,Schwarz:1984,Voronov:1984}, that superpoints are sufficient to act as ``probes'' for the functor of points. That is, we only need to consider supermanifolds that have a single point as their underlying topological space. Dual to this, we may consider finite dimensional Grassmann algebras $\zL=\zL_0\oplus \zL_1$ as parameterizing the ``points'' of a supermanifold. One can thus view supermanifolds as functors from the category of finite dimensional Grassmann algebras to sets. However, it turns out that the target category is not just sets, but (finite dimensional) $\Lambda_0$-smooth manifolds. That is, the target category consists of smooth manifolds that have a~$\Lambda_0$-module structure on their tangent spaces. Morphisms in this category respect the module structure and are said to be $\Lambda_0$-smooth (we will explain this further later on). In~\cite{Bruce:2019b}, it was shown how the above considerations generalize to the setting of $\Z_2^n$-manifolds. We~will use the notations and results of~\cite{Bruce:2019b} rather freely. We~encourage the reader to consult this reference for the subtleties compared to the standard case of supermanifolds.

A \emph{$\Z_2^n$-Grassmann algebra} we define to be a formal power series algebra $\R[[\theta]]$ in $\Z_2^n$-graded, $\Z_2^n$-commutative parameters $\theta_j^\ell$. All the information about the number of generators is specified by the tuple $\ul{q}$ as before. We~will denote a $\Z_2^n$-Grassmann algebra by $\Lambda$, as usually we do not need to specify the number of generators. A \emph{$\Z_2^n$-point} is a $\Z_2^n$-manifold (that is isomor\-phic~to)~$\R^{0|\ul q}$. It~is clear, from Definition~\ref{def:Z2nMan}, that the algebra of global sections of a $\Z_2^n$-point is precisely a~$\Z_2^n$-Grassmann algebra. There is an equivalence between $\Z^n_2$-Grassmann algebras and $\Z_2^n$-points:
\[
\Z_2^n\catname{GrAlg} \cong \Z_2^n\catname{Pts}^\textnormal{op}.
\]

The Yoneda functor of points of the category $\z\tt Man$ of $\z$-manifolds is the fully faithful embedding
\[
\mathcal{Y}\colon\ \z{\tt Man}\ni M\mapsto \op{Hom}_{\z{\tt Man}}(-,M)\in{\tt Fun}\big(\z{\tt Man}^{\op{op}},{\tt Set}\big).
\]
In~\cite{Bruce:2019b}, we showed that $\mathcal Y$ remains fully faithful for appropriate restrictions of the source and target of the functor category, as well as of the {\it resulting} functor category. More precisely, we proved that the functor
\[
\mathcal{S}\colon\ \z{\tt Man}\ni M\mapsto \op{Hom}_{\z{\tt Man}}(-,M)\in{\tt Fun}_0\big(\z{\tt Pts}^{\op{op}},\tt A(N)FM\big)
\]
is fully faithful. The category $\tt A(N)FM$ is the category of (nuclear) Fr\'echet manifolds over a (nuc\-lear) Fr\'echet algebra, and the functor category is the category of those functors that send a~$\z$-Grassmann algebra $\zL$ to a (nuclear) Fr\'echet $\zL_0$-manifold, and of those natural transformations that have $\zL_0$-smooth $\zL$-components.

\pdfbookmark[1]{References}{ref}
\LastPageEnding

\end{document}